\definecolor{darkred}{cmyk}{0,1,1,0.4}
\def\@tocline#1#2#3#4#5#6#7{\relax
  \ifnum #1>\c@tocdepth 
  \else
    \par \addpenalty\@secpenalty\addvspace{#2}%
    \begingroup \hyphenpenalty\@M
    \@ifempty{#4}{%
      \@tempdima\csname r@tocindent\number#1\endcsname\relax
    }{%
      \@tempdima#4\relax
    }%
    \parindent\z@ \leftskip#3\relax \advance\leftskip\@tempdima\relax
    \rightskip\@pnumwidth plus4em \parfillskip-\@pnumwidth
    #5\leavevmode\hskip-\@tempdima
      \ifcase #1
       \or\or \hskip 1em \or \hskip 2em \else \hskip 3em \fi%
      #6\nobreak\relax
      \dotfill
      \hbox to\@pnumwidth{\@tocpagenum{#7}}
    \par
    \nobreak
    \endgroup
  \fi}
\theoremstyle{plain}
\newtheorem{theorem}{Theorem}[section]
\newtheorem{lemma}[theorem]{Lemma}
\newtheorem{corollary}[theorem]{Corollary}
\newtheorem{proposition}[theorem]{Proposition}
\theoremstyle{remark}
\newtheorem{remark}[theorem]{Remark}
\numberwithin{equation}{section}
\DeclareMathOperator{\supp}{supp}
\def\bq{\begin{eqnarray}}
\def\eq{\end{eqnarray}}
\def\bqq{\begin{eqnarray*}}
\def\eqq{\end{eqnarray*}}
\def\nn{\nonumber}
\def\eps{\varepsilon}
\newcommand{\norm}[1]{\left\lVert #1 \right\rVert}
\newcommand\1{{\ensuremath {\mathds 1} }}
\renewcommand{\epsilon}{\varepsilon}
\def\cF {\mathcal{F}}
\def\R {\mathbb{R}}
\def\NN {\mathbb{N}}
\def\cE {\mathcal{E}}
\def\cQ {\mathcal{Q}}
\def\R {\mathbb{R}}
\renewcommand{\leq}{\leqslant}
\renewcommand{\geq}{\geqslant}
\newcommand{\bA}{\mathbf{A}}
\newcommand{\bJ}{\mathbf{J}}
\newcommand{\nablap}{\nabla^{\perp}}
\newcommand{\EAF}{E ^{\mathrm{af}}}
\newcommand{\cEAF}{\cE ^{\mathrm{af}}}
\newcommand{\uAF}{u ^{\mathrm{af}}}
\newcommand{\rhoAF}{\varrho^{\mathrm{af}}}
\newcommand{\rhoBAF}{\bar{\varrho}^{\mathrm{af}}}
\DeclareMathOperator{\curl}{\mathrm{curl}}
\newcommand{\cDAF}{\mathscr{D}^{\mathrm{af}}}
\newcommand{\cEGP}{\cE ^{\mathrm{GP}}}
\newcommand{\utest}{u^{\rm test}}
\newcommand{\cETF}{\mathcal{E} ^{\rm TF}}
\newcommand{\ETF}{E ^{\rm TF}}
\newcommand{\rhoTF}{\varrho^{\rm TF}}
\newcommand{\lTF}{\lambda^{\rm TF}}
\DeclareMathOperator{\dotconv}{\raisebox{-3pt}{\textup{\textbf{*}}}}
\numberwithin{equation}{section}
\newcommand{\bdm}{\begin{displaymath}}
\newcommand{\edm}{\end{displaymath}}
\newcommand{\bdn}{\begin{eqnarray}}
\newcommand{\edn}{\end{eqnarray}}
\newcommand{\bay}{\begin{array}{c}}
\newcommand{\eay}{\end{array}}
\newcommand{\ben}{\begin{enumerate}}
\newcommand{\een}{\end{enumerate}}
\newcommand{\beq}{\begin{equation}}
\newcommand{\eeq}{\end{equation}}
\newcommand{\beqn}{\begin{eqnarray}}
\newcommand{\eeqn}{\end{eqnarray}}
\newcommand{\bml}[1]{\begin{multline} #1 \end{multline}}
\def\Xint#1{\mathchoice
{\XXint\displaystyle\textstyle{#1}}%
{\XXint\textstyle\scriptstyle{#1}}%
{\XXint\scriptstyle\scriptscriptstyle{#1}}%
{\XXint\scriptscriptstyle\scriptscriptstyle{#1}}%
\!\int}
\def\XXint#1#2#3{{\setbox0=\hbox{$#1{#2#3}{\int}$ }
\vcenter{\hbox{$#2#3$ }}\kern-.6\wd0}}
\def\dashint{\Xint-}
\newcommand{\lf}{\left}
\newcommand{\ri}{\right}
\newcommand{\nv}{\mathbf{n}}
\newcommand{\xv}{\mathbf{x}}
\newcommand{\yv}{\mathbf{y}}
\newcommand{\zv}{\mathbf{z}}
\newcommand{\diff}{\mathrm{d}}
\newcommand{\dist}{\mathrm{dist}}
\newcommand{\tx}{\textstyle}
\renewcommand{\leq}{\leqslant}
\renewcommand{\geq}{\geqslant}
\renewcommand{\le}{\leqslant}
\renewcommand{\ge}{\geqslant}
\title[On the almost-bosonic anyon gas]{Local density approximation for the almost-bosonic anyon gas} 
\author[M. Correggi]{Michele CORREGGI}
\address{Dipartimento di Matematica ``G. Castelnuovo'', Universit\`{a} degli Studi di Roma ``La Sapienza'', P.le Aldo Moro, 5, 00185, Rome, Italy.}
\email{michele.correggi@gmail.com}
\author[D. Lundholm]{Douglas LUNDHOLM}
\address{KTH Royal Institute of Technology, Department of Mathematics, SE-100 44 Stockholm, Sweden}
\email{dogge@math.kth.se}
\author[N. Rougerie]{Nicolas ROUGERIE}
\address{CNRS \& Universit\'e Grenoble Alpes, LPMMC (UMR 5493), B.P. 166, F-38042 Grenoble, France}
\email{nicolas.rougerie@lpmmc.cnrs.fr}
\date{April 2017}
\begin{document}

\begin{abstract}
We study the minimizers of an energy functional with a self-consistent magnetic field,
which describes a quantum gas of almost-bosonic anyons in the average-field approximation. 
For the homogeneous gas we prove the existence of the thermodynamic limit of 
the energy at fixed effective statistics parameter, and the independence of such a limit from the shape of the domain. 
This result is then used in a local density approximation to derive an 
effective Thomas--Fermi-like model for the trapped anyon gas in the limit of a  large effective statistics parameter (i.e., ``less-bosonic'' anyons).
\end{abstract}

\maketitle

\setcounter{tocdepth}{2}
\tableofcontents

\section{Introduction}\label{sec:intro}

A convenient description of 2D particles with exotic quantum statistics (different from Bose--Einstein and Fermi--Dirac) is via effective magnetic interactions. We are interested in a mean-field model for such particles, known as anyons. Indeed, in a certain scaling limit (``almost-bosonic anyons'', see~\cite{LunRou-15}), a suitable magnetic non-linear Schr\"odinger theory becomes appropriate. The corresponding energy functional is given by 
\begin{equation}\label{eq:avg func}
	\cEAF_\beta[u] := \int_{\R ^2} \left( \left| \left( -i\nabla +  \beta \bA [|u|^2] \right) u \right|^2 + V|u|^2 \right), 
\end{equation}
acting on functions $u \in H^1(\R^2)$.
Here $V\colon \R ^2 \to \R^+$ 
is a trapping potential confining the particles, and the vector potential 
$\bA [|u|^2]\colon \R^2 \to \R^2$ is defined through 
\begin{equation}\label{eq:avg field}
	\bA [\varrho] := \nablap w_0 \ast \varrho, 
	\qquad w_0 (\xv) := \log |\xv|,
\end{equation}
for $\varrho = |u|^2 \in L^1(\R^2)$ and $\xv^\perp = (x,y)^\perp := (-y,x)$.
Thus, the self-consistent magnetic field, given by
$$
	\curl \bA[\varrho] (\xv) = \Delta w_0 * \varrho(\xv) = 2\pi \varrho(\xv),
$$
is proportional to the particles' density. The parameter $\beta \in \R$ then regulates the strength of the magnetic 
self-interactions and, for reasons explained below, we will call it the \emph{scaled statistics parameter}.
Note that by symmetry of \eqref{eq:avg func} under complex conjugation 
$u \mapsto \overline{u}$ we may and shall assume 
$$ 
	\beta \ge 0
$$ 
in the following. We will study the ground-state problem for~\eqref{eq:avg func}, namely the minimization under the mass constraint
\begin{equation}\label{eq:mass}
 \int_{\R^2} |u| ^2 = 1. 
\end{equation}

The functional $\cEAF$ bears some similarity with other mean-field models such as the Gross--Pitaevskii energy functional
\begin{equation}\label{eq:GP func}
	\cEGP [u] := \int_{\R ^2} \left( |{-i}\nabla u + \mathbf{A} u |^2 + V|u|^2 + g |u| ^4  \right), 
\end{equation}
with \emph{fixed} vector potential $\mathbf{A}$. 
The above describes a gas of interacting bosons in a certain mean-field regime~\cite{LieSeiSolYng-05,LieSei-06,NamRouSei-15,Rougerie-LMU,Rougerie-spartacus}: the quartic term originates from short-range pair interactions. The crucial difference between~\eqref{eq:avg func} and~\eqref{eq:GP func} is that, while the
interactions of $\cEGP$ are scalar (with interaction strength $g \in \R$), those of $\cEAF$ are purely magnetic and therefore involve mainly the phase of the function $u$. There is an extensive literature dealing with~\eqref{eq:GP func} (see~\cite{Aftalion-07,CorPinRouYng-12,CorRouYng-11,CorRou-13} for references) and with the related Ginzburg--Landau model of superconductivity~\cite{BetBreHel-94,FouHel-10,SanSer-07,Sigal-13}. That the interactions are via the magnetic field in~\eqref{eq:avg func} poses however quite a few new difficulties in the asymptotic analysis of minimizers we initiate here. Note indeed (see the variational equation in Lemma~\ref{lem:var eq}) that the non-linearity consists in a quintic non-local semi-linear term and a cubic quasi-linear term (also non-local), both being critical when compared to the usual Laplacian.

The functional $\cEAF$ arises in a mean-field description\footnote{Usually refered to as an 
\emph{average}-field description in this context.} 
of a gas of particles whose many-body quantum wave function can change under particle exchange by a phase factor $e^{i\alpha\pi}$ 
(with $\alpha \in \R$ known as the statistics parameter). This is a generalization of the usual types of particles: bosons have $\alpha=0$ (symmetric wave functions) and their mean-field description is via models of the form~\eqref{eq:GP func}, fermions have $\alpha=1$ (anti-symmetric wave functions) and appropriate models for them are Hartree--Fock functionals (see~\cite{Bach-92,LieSim-77,Lions-87,Lions-88,FouLewSol-15} and references therein). 
For general $\alpha$ one speaks of anyons~\cite{Khare-05,Myrheim-99,Ouvry-07,Wilczek-90}, which are believed to emerge as quasi-particle excitations of certain condensed-matter systems~\cite{AroSchWil-84,Haldane-83,Halperin-84,ZhaSreGemJai-14,CooSim-15,LunRou-16}. 

Anyons can be modeled as bosons (respectively, fermions) but with a many-body magnetic interaction of coupling strength $\alpha$ (respectively, $\alpha-1$). It was shown in \cite{LunRou-15} that the  ground-state energy per particle of such a system is correctly described 
by the minimum of \eqref{eq:avg func} (and the ground states by the corresponding minimizers) in a limit where, as the number of particles $N \to \infty$, 
one takes $\alpha = \beta/N \to 0$. We refer to this limit as that of \emph{almost-bosonic} anyons, with $\beta$ determining how far we are from usual bosons. 

\medskip

In the following we treat the anyon gas as fully described by a one-body wave function $u \in H^1(\R^2)$ minimizing~\eqref{eq:avg func} under the mass constraint~\eqref{eq:mass}. We shall consider asymptotic regimes for this minimization problem. The limit $\beta \to 0$ is 
trivial and leads to a linear theory for non-interacting bosons (see~\cite[Appendix~A]{LunRou-15}). The limit $\beta \to \infty$ is more interesting and 
more physically relevant: In a physical situation the statistics parameter $\alpha$ is fixed and finite and $N$ large, so that taking $\beta \to \infty$ is the relevant regime, at least if one is allowed to exchange the two limits.

In an approximation that has been used frequently in the physics literature~\cite{ChiSen-92,IenLec-92,LiBhaMur-92,Trugenberger-92b,Trugenberger-92,WenZee-90,Westerberg-93},
the ground-state energy per particle of the $N$-particle anyon gas 
with statistics parameter $\alpha$ is given by
\begin{equation} \label{eq:avg field approx}
	\frac{E_0(N)}{N} \approx \int_{\R^2} \left( 2\pi|\alpha| N\varrho^2 + V\varrho \right).
\end{equation}
This relies on assuming that each particle sees the others by their approximately constant \emph{average} magnetic field $B(\xv) \approx 2\pi\alpha N\varrho(\xv)$, with $\varrho(\xv)$ the local particle density  (normalized to $\int_{\R^2} \varrho = 1$). In the ground state of this magnetic field (the lowest Landau level) this leads to a magnetic energy $|B| \approx 2\pi|\alpha| N\varrho$ per particle\footnote{Because of the periodicity of the exchange phase 
$e^{i\alpha\pi}$, it is known that such an approximation can only be valid for certain
values of $\alpha$ and $\varrho$. See~\cite{LarLun-16,Lundholm-16,Trugenberger-92b} for further discussion.}. 

In this work we prove that, for large $\beta$, the behavior of the functional~\eqref{eq:avg func} is captured at leading order by a Thomas--Fermi type~\cite{CatBriLio-98,Lieb-81b} energy functional of a form similar to the right-hand side of \eqref{eq:avg field approx} with $|\alpha|N = \beta$. The coupling constant appearing in this functional is defined via the large-volume limit of the homogeneous anyon gas energy (i.e., the infimum of~\eqref{eq:avg func} confined to a bounded domain with $V=0$). 
In particular we prove that this limit exists and is bounded from below by 
the value $2\pi$ predicted by~\eqref{eq:avg field approx}. 
We do not know the exact value, but there are good reasons to believe that it is {\it not} equal to $2\pi$, thus refining the simple approximations leading to~\eqref{eq:avg field approx}.  

\medskip 

We state our main theorems in Section~\ref{sec:results} and present their proofs in Sections~\ref{sec:thermo} and~\ref{sec:trap}. Appendix~\ref{sec:var eq} recalls a few facts concerning the minimizers of~\eqref{eq:avg func}. In particular, although we do not need it for the proof of our main results, we derive the associated variational equation.

\medskip

\noindent\textbf{Acknowledgments:} This work is supported by MIUR through the FIR grant 2013 ``Condensed Matter in Mathematical Physics (Cond-Math)'' (code RBFR13WAET), the Swedish Research Council (grant no. 2013-4734) and the ANR (Project Mathostaq ANR-13-JS01-0005-01). 
We thank Jan Philip Solovej for insightful suggestions and Romain Duboscq for inspiring numerical simulations.
D.L. also thanks Simon Larson for discussions.

\section{Main results}\label{sec:results}

We now proceed to state our main theorems. We first discuss the large-volume limit for the homogeneous gas in Subsection~\ref{sec:intro thermo} 
and then state our results about the trapped anyons functional~\eqref{eq:avg func} in Subsection~\ref{sec:intro trap}.

\subsection{Thermodynamic limit for the homogeneous gas}\label{sec:intro thermo}

Let $\Omega \subset \R^2$ be a fixed bounded domain in $\R^2$, with the 
associated energy for almost-bosonic anyons confined to it:
\begin{equation}\label{eq:func dom}
	\cEAF_\Omega[u] = \cEAF_{\Omega,\beta}[u] 
	:= \int_{\Omega} \left| \left( -i\nabla + \beta \bA [|u|^2] \right) u \right|^2,
\end{equation}
with
\begin{equation}\label{eq:A on domain}
	\bA[|u|^2](\xv) = \int_\Omega \nablap w_0(\xv-\yv)|u(\yv)|^2 \,\diff\yv.
\end{equation}
We define two energies; with homogeneous Dirichlet boundary conditions 
\begin{equation}\label{eq:ener Dir}
	E_0 (\Omega, \beta, M) := \inf\left\lbrace \cEAF_{\Omega,\beta} [u] : u \in H^1_0(\Omega), \: \int_{\Omega} |u| ^2 = M \right\rbrace,
\end{equation}
and without boundary conditions,
\begin{equation}\label{eq:ener Neu}
	E (\Omega, \beta, M) := \inf\left\lbrace \cEAF_{\Omega,\beta} [u] : u \in H^1(\Omega), \: \int_{\Omega} |u| ^2 = M\right\rbrace.
\end{equation}
Of course, the last minimization leads to a magnetic Neumann boundary condition 
for the solutions. We are interested in the thermodynamic limit of these quantities, 
i.e., the scaling limit in which the size of the domain tends to $\infty$ with fixed density $\rho := M/|\Omega|$
and the normalization changes accordingly. 

\begin{theorem}[\textbf{Thermodynamic limit for the homogeneous anyon gas}]\label{thm:thermo limit}\mbox{}\\
	Let $\Omega \subset \R ^2$ be a bounded simply connected domain with 
	Lipschitz boundary, $\beta \geq 0$ and $\rho \geq 0$ be fixed parameters. Then, the limits 
	\begin{equation}\label{eq:thermo limit}
	 e(\beta , \rho) :=  \lim_{L\to \infty} \frac{E (L \Omega, \beta, \rho L ^2|\Omega|)}{L ^2|\Omega|} = \lim_{L\to \infty} \frac{E_0 (L \Omega, \beta, \rho L ^2|\Omega|)}{L ^2|\Omega|}
	\end{equation}
	exist, coincide and are independent of $\Omega$. Moreover
	\begin{equation}\label{eq:scale thermo}
	 e(\beta,\rho) = \beta \rho ^2 e (1,1).
	\end{equation}
\end{theorem}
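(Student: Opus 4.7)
The plan is to combine three ingredients: (i) a scaling identity that reduces everything to $\beta=\rho=1$, (ii) a gauge-theoretic decoupling that turns a non-local problem into a quasi-additive one over disjoint simply connected pieces, and (iii) a standard Fekete-style monotonicity argument plus a boundary-layer cutoff to identify the Dirichlet and Neumann thermodynamic limits.

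\textbf{Step 1 (Scaling).} I would first derive the exact identity
\[
E(L\Omega,\beta,\rho L^2|\Omega|) = \frac{\rho L^2|\Omega|}{L^2}\,E\bigl(\Omega,\beta\rho L^2|\Omega|,1\bigr)
\]
by composing the dilation $u(\xv)\mapsto L\,u(L\xv)$ (which sends $H^1(L\Omega)$ to $H^1(\Omega)$ preserving the mass and produces an overall $L^{-2}$ in the energy, since $\bA[|u|^2](L\yv)=L^{-1}\bA[|v|^2](\yv)$) with the amplitude rescaling $v=\sqrt{M}\,w$ (which converts mass $M$ into $\beta M$ for the effective coupling in front of $\bA$). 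Once $e(1,1)$ is shown to exist, this identity yields convergence for general $(\beta,\rho)$ together with the scaling $e(\beta,\rho)=\beta\rho^2 e(1,1)$ as an algebraic consequence. The same identity, applied with the roles of $\Omega$ interchanged, reduces shape-independence to shape-independence on a single reference scale. The analogous statements hold for $E_0$.

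\textbf{Step 2 (Gauge-splitting over disjoint simply connected pieces).} The essential structural fact is: if $\Omega=\Omega_1\sqcup\Omega_2$ with $\Omega_1,\Omega_2$ simply connected and disjoint, then $\curl\bigl(\bA[|u|^2\chi_{\Omega_2}]\bigr)=2\pi|u|^2\chi_{\Omega_2}$ vanishes on $\Omega_1$, and the divergence-free field $\bA[|u|^2\chi_{\Omega_2}]$ is therefore gradient of a single-valued potential $\phi_1$ on $\Omega_1$. Absorbing $\phi_1$ into the phase of $u|_{\Omega_1}$ (and symmetrically on $\Omega_2$) gives
\[
\cEAF_{\Omega,\beta}[u] \;=\; \cEAF_{\Omega_1,\beta}[u|_{\Omega_1}]+\cEAF_{\Omega_2,\beta}[u|_{\Omega_2}].
\]
This decoupling is the substitute for the additivity that scalar-interaction functionals enjoy automatically, and it is the crux of the whole argument.

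\textbf{Step 3 (Existence on squares, equivalence of boundary conditions).} I apply Step 2 to the dyadic tiling of $Q_{2\ell}$ by four copies of $Q_\ell$. For Dirichlet energies, glueing four minimizers on $Q_\ell$ (extended by zero) yields a trial state on $Q_{2\ell}$ with energy exactly $4E_0(Q_\ell,1,\rho\ell^2)$, so $\ell^{-2}E_0(Q_\ell)$ is non-increasing along the dyadic sequence. For Neumann energies, restricting a minimizer to the four sub-squares gives an admissible test state on each; convexity of $M\mapsto E(Q_\ell,1,M)$ (standard from the quadratic structure of $\cEAF$) lets one pass to the equi-distributed mass and yields the reverse monotonicity. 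Both monotone sequences converge, and $E\le E_0$ pinches the limits from one side. To close the circle I would localize a near-minimizer $u$ of the Neumann problem on $Q_L$ by multiplication with a smooth cutoff $\chi$ equal to $1$ on $Q_{L-2\ell}$ and $0$ outside $Q_{L-\ell}$. The IMS-type identity $\int|(-i\nabla+\beta\bA)(\chi u)|^2 = \int\chi^2|(-i\nabla+\beta\bA)u|^2+\int|\nabla\chi|^2|u|^2$ controls the kinetic error by $\ell^{-2}L\ell\rho = O(\rho L/\ell)$; the correction to $\bA[|u|^2]$ coming from replacing $|u|^2$ by $\chi^2|u|^2$ is estimated using the fact that the perturbing density sits in a boundary strip of area $O(L\ell)$, together with standard Hardy--Littlewood--Sobolev bounds on the log-convolution. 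Choosing $\ell\to\infty$ with $\ell\ll L$ sends the per-area correction to zero and gives $E_0/L^2\le E/L^2+o(1)$.

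\textbf{Step 4 (Shape-independence).} For a general bounded Lipschitz simply connected $\Omega$, I would sandwich $L\Omega$ between an inner tiling $\mathcal{T}_\ell^{\rm in}\subset L\Omega$ and an outer tiling $\mathcal{T}_\ell^{\rm out}\supset L\Omega$ by disjoint squares of side $\ell$, with $|\mathcal{T}_\ell^{\rm in}|/|L\Omega|\to 1$ and $|\mathcal{T}_\ell^{\rm out}|/|L\Omega|\to 1$ as $\ell/L\to 0$ (using the Lipschitz regularity of $\partial\Omega$). Combining super-additivity of the Neumann energy (from Step 2) with the square-limit of Step 3 yields the lower bound $\liminf E(L\Omega)/(L^2|\Omega|)\ge e(1,1)$, and sub-additivity of the Dirichlet energy yields $\limsup E_0(L\Omega)/(L^2|\Omega|)\le e(1,1)$. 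Together with $E\le E_0$ this produces both convergence of the two energies to the same, $\Omega$-independent, number.

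\textbf{Main obstacle.} The principal technical difficulty is the non-locality of $\bA[|u|^2]$: unlike the scalar $g|u|^4$ interaction in~\eqref{eq:GP func}, it couples the density across the whole plane. The gauge-splitting of Step 2 is what disarms this obstruction cleanly on disjoint simply connected pieces, but it is fragile: it requires simple connectedness, and the gauges on different pieces only combine globally up to phase compatibility. I therefore expect the most delicate bookkeeping to arise in Step 3's Dirichlet/Neumann comparison, where one must quantify the effect on $\bA[|u|^2]$ of a boundary-layer modification of $|u|^2$ without giving up the scaling budget $o(L^2)$ in the energy error.
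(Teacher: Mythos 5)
Your overall architecture (scaling, gauge decoupling over disjoint simply connected cells, square tilings, Dirichlet--Neumann comparison) matches the paper's, but two of your steps have genuine gaps that would prevent the argument from closing.

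First, the convexity of $M\mapsto E(Q_\ell,1,M)$ that you invoke to pass from restricted masses $M_j$ to equidistributed mass is not ``standard from the quadratic structure'': the functional is not quadratic in $u$ (the self-generated field makes it sextic), and writing $u=\sqrt{M}v$ with $\int|v|^2=1$ gives $\cEAF_{Q,\beta}[u]=aM+bM^2+cM^3$ with $a,c\ge 0$ but $b=2\beta\int\bA[|v|^2]\cdot\bJ[v]$ of indefinite sign, so each branch can fail to be convex near $M=0$ and an infimum of convex functions is in any case not convex. This affects both your Neumann super-additivity in Step~3 and your lower bound in Step~4. The paper avoids the issue entirely: it only uses convexity of the \emph{limiting} function $\rho\mapsto\beta\rho^2 e(1,1)$, i.e., it first sends each cell to its thermodynamic limit (discarding cells of anomalously low density) and only then applies Cauchy--Schwarz to $\sum_j\rho_j^2\ell^2\ge (\sum_j\rho_j\ell^2)^2/|\mathcal{Q}|$. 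A related, smaller point: your dyadic monotonicity gives convergence only along $\ell=2^k\ell_0$; to get the full limit $L\to\infty$ you need the decomposition $L_m=q_{nm}L_n+k_{nm}$ with a low-energy filler state on the leftover region, as in the paper's Lemma~\ref{lem:thermo Dir}.

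Second, and more seriously, your Neumann-to-Dirichlet comparison does not close as sketched. The error from replacing $|u|^2$ by $\chi^2|u|^2$ in the self-generated field is controlled by a term of the form $\int|\bA[|\eta u|^2\1_K]|^2|\chi u|^2$, and the weak-Young/HLS estimate you propose requires $L^4$-control of $u$ on the cutoff strip $K$. The only a priori $L^4$ bound available is the global one, $\int\chi^2|u|^4\le CL^2$, which comes from the magnetic inequality $\int|(-i\nabla+\beta\bA[|u|^2])v|^2\ge 2\pi\beta\int|v|^4$ --- itself valid only for Dirichlet data, hence already requiring a preliminary localization. Feeding that global bound into the HLS estimate yields an error of order $|K|^{1/2}\bigl(\int_K|u|^4\bigr)\bigl(\int|u|^4\bigr)^{1/2}\sim \ell^{1/2}L^{7/2}\gg L^2$, for \emph{every} choice of $\ell$, so the ``boundary strip of area $O(L\ell)$ plus HLS'' argument fails. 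The missing idea is the pigeonhole selection of a \emph{good shell}: among $\sim h/\ell$ candidate shells one must find one where $\int_{K_\ell}|u|^4\le C\ell L^{1+\eps}$, which improves the error to $\ell^{3/2}L^{5/2+\eps}$ and, after a Cauchy--Schwarz with an optimized parameter $\delta$ and the choice $\ell\sim L^{-3/5}\delta^{2/5}$ (note $\ell\to 0$, not $\ell\to\infty$ as you propose --- your kinetic error bound $O(\rho L/\ell)$ also presupposes an unproved equidistribution of mass in the strip), finally yields an $o(L^2)$ total error. Also note that your single-cutoff identity $\int|(-i\nabla+\beta\bA)(\chi u)|^2=\int\chi^2|(-i\nabla+\beta\bA)u|^2+\int|\nabla\chi|^2|u|^2$ is missing the cross term $\frac12\int\nabla(\chi^2)\cdot\nabla|u|^2$; one needs the full two-piece IMS formula with $\chi^2+\eta^2=1$ for the cross terms to cancel.
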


\begin{remark}[Error estimate]
	\mbox{}	\\
	A close inspection of the proof reveals that we also have an estimate of the error appearing in \eqref{eq:thermo limit}, which coincides with the error appearing in the estimate of the difference between the Neumann and Dirichlet energies in a box (Lemma \ref{lem:Neu Dir}). Such a quantity is expected to be of the order of the box's side length $ L  $, which is subleading if compared to the total energy of order $ L^2 $. Our error estimate $ O(L^{12/7+\eps}) $ (see \eqref{eq:error estimate}) is however much larger and far from being optimal.  
	\hfill$\diamond$
\end{remark}

The above result defines the thermodynamic energy per unit area at scaled statistics parameter $\beta$ and density $\rho$, denoted $e(\beta,\rho)$,  and shows that it has a nice  scaling property. The latter is responsible for the occurrence of a Thomas--Fermi-type functional in the trapped anyons case.  The fact that $e(\beta,\rho)$ does not depend on boundary conditions is a crucial technical ingredient in our study of the trapped case.
This is very different from the usual Schr\"odinger energy in a fixed external
magnetic field, for example a constant one, for which the type of boundary
conditions do matter (see e.g. \cite[Chapter~5]{FouHel-10}).

The constant $e(1,1)$ will be used to define a corresponding coupling parameter below. One may observe that 
(see Lemma~\ref{lem:thermo Dir})
\begin{equation}\label{eq:const bound}
	e(1,1) \geq 2\pi,
\end{equation}
and we conjecture that this inequality is actually \emph{strict},
contrary to what might be expected when comparing to the coupling constant of the
conventional (constant-field)
average-field approximation \eqref{eq:avg field approx}.
The reason for this is that the self-interaction encoded by the functional
$\cEAF$ has not been fully incorporated in \eqref{eq:avg field approx}.
In fact, the lower bound \eqref{eq:const bound} is based on a magnetic $L^4$-bound
(Lemma~\ref{lem:mag ineq}) which is saturated only for constant functions,
and hence for constant densities, which certainly is compatible with 
\eqref{eq:avg field approx} in the case of homogeneous traps.
On the other hand, in order to minimize the magnetic energy in \eqref{eq:func dom}
for large $\beta$, the function has to have a large phase circulation 
and therefore also a large vorticity. This suggests the formation of an approximately homogeneous vortex 
lattice, in some analogy to the Abrikosov lattice that arises in superconductivity 
and in rotating bosonic gases~\cite{Aftalion-07,CorYng-08,SanSer-07}.
Such a picture has already been hinted at in \cite[p.~1012]{ChenWilWitHal-89}
for the almost-bosonic gas.
However the implication that the actual coupling constant may then be larger than the 
one expected from \eqref{eq:avg field approx} seems not to have been observed 
in the literature before.

One should note here  
that there is a certain abuse of language in using the term 
``thermodynamic limit''. Indeed, we consider the large-volume behavior of a 
mean-field energy functional, and there is no guarantee that this rigorously 
approximates the true thermodynamic energy of the underlying many-body system. 

\subsection{Local density approximation for the trapped gas}\label{sec:intro trap}

We now return to~\eqref{eq:avg func} and discuss the ground state problem 
\begin{equation}\label{eq:GSE}
	\EAF_\beta  := \min \left\{ \cEAF_\beta [u] : u \in H^1(\R^2), \ V |u|^2 \in L^1(\R^2), \int_{\R ^2}  |u| ^2 = 1 \right\}. 
\end{equation}
We denote $\uAF$ any associated minimizer. We refer to the Appendix 
(see also \cite[Appendix~A]{LunRou-15}) for a discussion of the 
minimization domain as well as the 
existence of a minimizer. 
In the limit $\beta \to \infty$, the following simpler Thomas--Fermi (TF) like 
functional emerges 
\begin{equation}\label{eq:TF func}
	\cETF [\varrho] = \cETF_\beta[\varrho]
	:= \int_{\R ^2} \left( \beta e(1,1) \varrho ^2 + V \varrho \right), 
\end{equation}
whose ground-state energy we denote
\begin{equation}\label{eq:TF ener}
	\ETF_\beta := \min \left\{ \cETF_\beta [\varrho] : \varrho \in L^2(\R^2; \R^+), \: V \varrho \in L^1(\R^2), \: \int_{\R ^2}  \varrho = 1 \right\},
\end{equation}
with associated (unique) minimizer $\rhoTF_{\beta}$. 
Here $e(1,1)$ is the fixed, universal constant defined by Theorem~\ref{thm:thermo limit}. 

A typical potential one could have in mind for physical relevance 
is a harmonic trap, $V(\xv) = c|\xv|^2$,
or an asymmetric trap, $V(x,y) = c_1 x^2 + c_2 y^2$.
We shall work under the assumption that $V$ is homogeneous of degree $s$ and smooth:
\begin{equation}\label{eq:asum V}
	V (\lambda \xv) = \lambda ^s V (\xv), \qquad V\in C^{\infty} (\R^2).
\end{equation}
These conditions can be relaxed significantly; 
in particular we could 
extend the approach to asymptotically 
homogeneous potentials as defined in~\cite[Definition 1.1]{LieSeiYng-01}. 
We refrain from doing so to avoid lengthy technical discussions in the proofs.
We shall always 
impose that $V$ is trapping in the sense that it grows super-linearly at infinity, 
i.e., $s > 1$ and
\begin{equation}\label{eq:growth V}
	\min_{|\xv|\geq R} V (\xv) \xrightarrow[R\to \infty]{} \infty.
\end{equation}

The TF problem \eqref{eq:TF ener}
has the merit of being exactly soluble. We obtain by scaling
\begin{equation}\label{eq:TF scaling}
\ETF_\beta = \beta ^{s/(s+2)} \ETF_1, \qquad \rhoTF_\beta(\xv) = \beta^{-2/(2+s)} \rhoTF_1 \left( \beta ^{-1/(s+2)} \xv \right), 
\end{equation}
and by an explicit computation 
\begin{equation}\label{eq:TF exp}
 \rhoTF_1 (\xv) = \frac{1}{2 e(1,1)} \left( \lTF_1 - V(\xv)\right)_+,
\end{equation}
with the chemical potential
\begin{equation}\label{eq:TF chem}
 \lTF_1 = \ETF_1 + e(1,1) \int_{\R^2} \left(\rhoTF_1 \right) ^2.
\end{equation}

Clearly the above considerations imply 
\begin{equation}\label{eq:TF support}
 \supp (\rhoTF_\beta) \subset B_{C \beta ^{1/(2+s)}} (0),
\end{equation}
where $B_{R}(\xv)$ stands for a ball (disk) of radius $R$ centered at $\xv$, 
and the estimates
\begin{align}\label{eq:TF estim}
\norm{\rhoTF_\beta}_{L^{\infty} (\R^2)} &\leq C \beta ^{-2/(2+s)}, \nonumber\\
\norm{\nabla \rhoTF_\beta}_{L^{\infty} (\R^2)} &\leq C \beta ^{-3/(2+s)},
\end{align}
for some fixed constant $C>0$. Noticing that $\rhoTF_1$ vanishes along a
level curve of the smooth homogeneous potential $V$,
we also have the non-degeneracy
\begin{equation}\label{eq:non degen}
 |\partial_{\nv} V| \neq 0,	\quad	\mbox{a.e. on } \partial \supp (\rhoTF_1),
\end{equation}
where $\nv$ denotes the (say outward) normal vector to $\partial \supp (\rhoTF_1)$.

We have the following result showing the accuracy of TF theory to determine the leading order of the minimization problem~\eqref{eq:GSE}: 

\begin{theorem}[\textbf{Local density approximation for the anyon gas}]\label{thm:main}\mbox{}\\
	Let $V$ satisfy 
	\eqref{eq:asum V} and \eqref{eq:growth V}.
	In the limit $\beta \to \infty$ we have the energy convergence
	\begin{equation}\label{eq:ener CV}
	 \lim_{\beta \to + \infty} \frac{\EAF_\beta}{\ETF_{\beta}} = 1.   
	\end{equation}
	Moreover, for any function $\uAF$ achieving the infimum~\eqref{eq:GSE}, with 
	$\rhoAF := |\uAF| ^2$, we have for any $R>0$
	\begin{equation}\label{eq:dens CV}
	 \norm{ \beta ^{2/(2+s)} \rhoAF \left( \beta^{1/(2+s)} \, . \, \right) - \rhoTF_1}_{W^{-1,1} (B_R(0))} \xrightarrow[\beta \to \infty]{} 0  
	\end{equation}
	where $W^{-1,1} (B_R(0))$ is the dual space of Lipschitz functions on the ball $B_R(0)$.
\end{theorem}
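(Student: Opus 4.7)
The strategy is a local density approximation anchored on Theorem~\ref{thm:thermo limit}, performed on a mesoscopic scale $\ell = \beta^\gamma$ with $\gamma \in (0, 1/(s+2))$ to be fixed below. Tile $\R^2$ by disjoint squares $Q_i := \xv_i + [-\ell/2,\ell/2]^2$. The gauge identity that we shall repeatedly exploit is that, since each $Q_i$ is simply connected, any vector field of zero curl on $Q_i$ is a single-valued gradient $\nabla\phi_i$, and may be gauge-transformed away from the magnetic kinetic energy by multiplying the wave function by $e^{\pm i \beta \phi_i}$: if $\bA = \bA_{\rm loc} + \nabla\phi_i$ on $Q_i$, then $|(-i\nabla + \beta\bA) u|^2 = |(-i\nabla + \beta\bA_{\rm loc})(e^{i\beta\phi_i}u)|^2$ pointwise on $Q_i$.

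\textbf{Upper bound.} For each $Q_i$ meeting $\supp \rhoTF_\beta$, set $m_i := \rhoTF_\beta(\xv_i)\ell^2$ and let $u_i \in H^1_0(Q_i)$ realize $E_0(Q_i,\beta,m_i)$. The vector field $\bA[\sum_{j \neq i} |u_j|^2]$ has curl $2\pi \sum_{j\neq i}|u_j|^2$, which vanishes on $Q_i$, so on $Q_i$ it equals some $\nabla \phi_i$. Define the trial state $u_{\rm trial} := \sum_i u_i e^{-i\beta\phi_i}$. Dirichlet conditions ensure $u_{\rm trial}\in H^1(\R^2)$, while the above gauge identity gives
\[
\cEAF_\beta[u_{\rm trial}] = \sum_i E_0(Q_i,\beta,m_i) + \int_{\R^2} V|u_{\rm trial}|^2.
\]
Applying Theorem~\ref{thm:thermo limit} with the scaling~\eqref{eq:scale thermo} box by box yields $E_0(Q_i,\beta,m_i) = \beta e(1,1)\rhoTF_\beta(\xv_i)^2 \ell^2 (1+o(1))$, and Riemann summation (justified by the Lipschitz bounds~\eqref{eq:TF estim} on $\rhoTF_\beta$ and the smoothness of $V$) reconstructs $\cETF_\beta[\rhoTF_\beta] = \ETF_\beta$, up to the tiny correction needed to normalize $\Vert u_{\rm trial}\Vert_{L^2}^2 = 1 + o(1)$. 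Hence $\EAF_\beta \le (1+o(1))\,\ETF_\beta$.

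\textbf{Lower bound.} Let $\uAF$ be a minimizer. In each box $Q_i$ the external contribution $\bA[|\uAF|^2\1_{Q_i^c}]$ is curl-free on $Q_i$ and so equals some $\nabla\phi_i$, giving
\[
\int_{Q_i}\bigl|(-i\nabla+\beta\bA[|\uAF|^2])\uAF\bigr|^2 = \cEAF_{Q_i,\beta}\bigl[e^{i\beta\phi_i}\uAF\bigr] \ge E(Q_i,\beta,m_i), \qquad m_i := \int_{Q_i}|\uAF|^2.
\]
Theorem~\ref{thm:thermo limit} gives $E(Q_i,\beta,m_i) \ge \beta e(1,1) m_i^2/\ell^2 (1-o(1))$. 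Introducing the piecewise-constant competitor $\bar\rho := \sum_i (m_i/\ell^2)\1_{Q_i}$, which satisfies $\int\bar\rho = \sum_i m_i = 1$, one has $\sum_i m_i^2/\ell^2 = \int \bar\rho^2$, and $\int V|\uAF|^2 \ge \int V\bar\rho - O(\ell \Vert \nabla V\Vert_{L^\infty(\supp\rhoTF_\beta)})$. Hence $\cEAF_\beta[\uAF]\ge\cETF_\beta[\bar\rho] - o(\ETF_\beta) \ge \ETF_\beta - o(\ETF_\beta)$ by minimality of $\rhoTF_\beta$ in TF theory, which together with the upper bound proves~\eqref{eq:ener CV}.

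\textbf{Error balance and density convergence.} The delicate point is that $\gamma$ must be chosen so that \emph{all} the accumulated errors — the non-optimal rate $O(\ell^{12/7+\eps})$ of Theorem~\ref{thm:thermo limit} summed over the $\sim\beta^{2/(s+2)}/\ell^2$ active boxes, the Riemann-sum residuals controlled by~\eqref{eq:TF estim} and $\Vert \nabla V\Vert_{L^\infty}$, and the boundary-layer contribution near $\partial\supp\rhoTF_\beta$ handled via the non-degeneracy~\eqref{eq:non degen} — stay $o(\ETF_\beta) = o(\beta^{s/(s+2)})$. An admissible range for $\gamma$ exists and completes the proof of~\eqref{eq:ener CV}. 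For~\eqref{eq:dens CV}, substituting $\bar\rho^{\rm AF}$ (the cell-average of $\rhoAF$) into the lower-bound argument yields $\cETF_\beta[\bar\rho^{\rm AF}] = (1+o(1))\ETF_\beta$; strict convexity of $\varrho\mapsto\int\varrho^2$ together with uniqueness of $\rhoTF_\beta$ then force $\bar\rho^{\rm AF}\to\rhoTF_\beta$ in $L^2$ after the TF rescaling of~\eqref{eq:TF scaling}. Finally, the elementary Poincaré-type inequality $\Vert\rhoAF-\bar\rho^{\rm AF}\Vert_{W^{-1,1}(Q_i)}\le\sqrt{2}\,\ell\Vert\rhoAF\Vert_{L^1(Q_i)}$, valid because $\bar\rho^{\rm AF}$ matches the local mass of $\rhoAF$ on each $Q_i$, transfers the convergence of the averages into the $W^{-1,1}(B_R)$ convergence~\eqref{eq:dens CV}.
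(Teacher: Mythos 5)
Your strategy is the same as the paper's: tile on a mesoscopic scale, decouple the cells by local gauge transformations (possible because each square is simply connected and the external density contributes zero curl there), bracket the cell energies between Dirichlet and Neumann problems, invoke Theorem~\ref{thm:thermo limit} after rescaling, and use strict convexity of $\varrho\mapsto\int\varrho^2$ plus the matching of cell averages to pass from energy convergence to~\eqref{eq:dens CV}. The upper bound and the density argument are essentially correct as written (the paper takes $M_j=\int_{Q_j}\rhoTF_\beta$ rather than $\rhoTF_\beta(\xv_j)\ell^2$, which spares you the renormalization, and handles the boundary layer of $\supp\rhoTF_\beta$ by splitting off the set $S_\eps$ where the density exceeds $\beta^{-2/(s+2)-\eps}$ --- you gesture at this but do not carry it out).

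The genuine gap is in the lower bound, where two ingredients are missing and the argument as literally written would fail. First, you apply ``$E(Q_i,\beta,m_i)\ge\beta e(1,1)m_i^2/\ell^2(1-o(1))$'' to \emph{every} cell, but Theorem~\ref{thm:thermo limit} only gives this uniformly on cells whose rescaled side length $L_i=\ell\sqrt{\beta\rho_i}$ tends to infinity, i.e.\ on cells carrying enough mass; on low-mass cells you can only use $E\ge 0$. Consequently the competitor $\bar\rho$ must be restricted to the high-mass cells, and its total mass is then only $1-o(1)$, not $1$; to conclude $\cETF_\beta[\bar\rho]\ge(1-o(1))\ETF_\beta$ you must show the discarded mass is negligible. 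Second, your error term $O(\ell\,\|\nabla V\|_{L^\infty(\supp\rhoTF_\beta)})$ in the comparison $\int V|\uAF|^2\ge\int V\bar\rho-\dots$ presupposes that the mass of $\uAF$ lives near $\supp\rhoTF_\beta$; a priori it does not, $\nabla V$ is unbounded on $\R^2$, and your tiling of all of $\R^2$ involves infinitely many cells. Both points are repaired by one preliminary estimate the paper makes (equations~\eqref{eq:low mass 1}--\eqref{eq:low mass 2}): from the already-proved energy \emph{upper} bound $\cEAF_\beta[\uAF]\le C\beta^{s/(s+2)}$ and the growth~\eqref{eq:growth V} one gets $\int_{B^c_{\beta^t}(0)}|\uAF|^2\le C\beta^{-s\eps}$ with $t=1/(s+2)+\eps$, and a counting argument bounds the mass in the low-density cells inside that disk. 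With this a priori localization inserted before your Riemann-sum step, the lower bound closes exactly as you outline.
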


\begin{remark}[Extension to more general potentials] \label{rem:gen pot}
	\mbox{}	\\
	The result can be straightforwardly 
	extended to asymptotically homogeneous potentials, i.e., functions 
	$V(\xv)$ that satisfy the following 
	property \cite[Definition 1.1]{LieSeiYng-01}: there exists another function 
	$\tilde{V}$, non-vanishing for $\xv \neq 0$, such that, for some $s > 0$,
	\beq
		\lim_{\lambda \to \infty} \frac{\lambda^{-s} V(\lambda \xv) - \tilde{V}(\xv)}{1 + |\tilde{V}(\xv)|} = 0,
	\eeq
	uniformly in $ \xv \in \R^2 $. The function $ \tilde{V} $ is necessarily 
	homogeneous of degree $ s > 0 $ and, if we denote by $ \tilde{\mathcal{E}}_{\beta} $ 
	the TF functional \eqref{eq:TF func} with $\tilde{V}$ in place of $V$, we have
	\bdm
		E^{\mathrm{TF}}_{\beta} = (1 + o(1)) \tilde{E}^{\mathrm{TF}}_{\beta},
		\qquad
		\tilde{E}^{\mathrm{TF}}_{\beta} = \beta^{\frac{s}{s+2}} \tilde{E}^{\mathrm{TF}}_{1},
		\qquad \text{as} \ \beta \to \infty.
	\edm
	\hfill$\diamond$
\end{remark}
 
\begin{remark}[Density approximation on finer length scales] \label{rem:density}
	\mbox{}	\\
	We conjecture that the estimate~\eqref{eq:dens CV} can be improved to show that  $ \rhoAF $ is close to $\rhoTF_\beta$ on finer scales. Namely~\eqref{eq:dens CV} implies that they are close on length scales of order $\beta ^{1/(2+s)}$ which is the extent of the support of $\rhoTF_\beta$, but we expect them to be close on scales $\gg \beta ^{-s/(2(s+2))}$, which is the smallest length scale appearing in our proofs. We however believe that the density convergence \emph{cannot} hold on scales smaller than $\beta ^{-s/(2(s+2))}$, for we expect the latter to be the length scale of a vortex lattice developed by minimizers.	
	\hfill$~\diamond$
\end{remark}

\begin{remark}[Large $\beta$ limit for the homogeneous gas on bounded domains] 
	\label{rem:td limit}
	\mbox{}	\\
	We can think of the homogeneous gas by 
	formally taking the limit $ s \to \infty $ of the homogeneous potentials we 
	have considered so far, which naturally leads to the restriction of 
	the functional $\cEAF$ in \eqref{eq:avg func} to bounded domains $\Omega$ 
	with $V = 0$ and Dirichlet boundary conditions, that is 
	\eqref{eq:func dom}-\eqref{eq:ener Dir}. 
	In fact, we have by the 
	scaling laws discussed in Section~\ref{sec:scale},
	\beq
		\label{eq:td limit bd domains rem}
		\lim_{\beta \to +\infty} \frac{E(\Omega, \beta, 1)}{\beta} = \lim_{\beta \to +\infty} \frac{E_0(\Omega, \beta, 1)}{\beta} = |\Omega|^{-1} e(1,1),
	\eeq
	for any bounded and simply connected $\Omega$ with Lipschitz boundary. 
	Convergence of the 
	density to the TF minimizer $\rhoTF_1 $ holds true in the same form as in 
	\eqref{eq:dens CV}. In this case $\rhoTF_1$ is simply the constant function
	on the domain (confirming that the gas is indeed homogeneous).
	The shortest length scale on which we expect (but cannot prove) the density convergence is $\beta^{-1/2}$, which should be the typical length scale of the vortex structure. \hfill$\diamond$ 
\end{remark}

\section{Proofs for the homogeneous gas}\label{sec:thermo}

The basic ingredient of the proof for the inhomogeneous case is the 
understanding of the thermodynamic limit of the model where the trap is 
replaced by a finite domain with sharp walls. We discuss this here, proving 
Theorem~\ref{thm:thermo limit} and defining the constant $e(1,1)$ appearing in 
the TF functional~\eqref{eq:TF func}.
For the sake of concreteness we first set
\beq \label{eq:thermo def}
	e(\beta , \rho) := \liminf_{L\to \infty} \frac{E_0 (L \Omega, \beta, \rho L ^2|\Omega|)}{L ^2|\Omega|}
\eeq
for $ \Omega $ equal to a unit square and observe that such a quantity certainly 
exists 
and is non-negative.
At this stage it might as well be infinite 
but we are going to prove that actually the limit exists, it is finite, and 
furthermore independent of the domain shape. 

We briefly anticipate here the proofs plan: next Section \ref{sec:tools thermo} 
contains basic technical estimates that we are going to use throughout the 
paper; Section \ref{sec:scale} contains the proof of a crucial scaling property 
of the energy in the homogeneous case; in Section \ref{sec:proof thermo} we 
prove the existence of the thermodynamic limit for the case of squares, and then extend the result to general domains.

\subsection{Toolbox}\label{sec:tools thermo}

Let us gather a few lemmas that will be used repeatedly in the sequel. 
We start with a variational a priori upper bound confirming that the energy 
scales like the area. The proof idea, relying deeply on the magnetic nature of the interaction, will be employed again several times. 

\begin{lemma}[\textbf{Trial upper bound}]\label{lem:trial}\mbox{}\\
	For any fixed bounded domain $\Omega$, and $\beta, \rho \ge 0$, 
	there exists a constant $C > 0$ s.t.
	$$
		\frac{E(L\Omega,\beta,\rho L^2|\Omega|)}{L^2} \le \frac{E_0(L\Omega,\beta,\rho L^2|\Omega|)}{L^2} \le C,
		\qquad \forall L \ge 1.
	$$
\end{lemma}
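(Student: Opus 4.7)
My plan is to use $H^1_0 \subset H^1$ for the first inequality, and an explicit Abrikosov-type vortex-lattice trial state for the second. A naive ``uniform density'' ansatz $u = \sqrt{\rho}\,\mathbf{1}_{L\Omega}$ will not work: since $\curl \bA[|u|^2] = 2\pi\rho$ on $L\Omega$, an Amp\`ere-type argument gives $|\bA[|u|^2]| \sim \rho L$ in the bulk, so the magnetic energy would be of order $L^4$. To recover the $L^2$ scaling one needs to introduce a phase that screens this induced field.

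Concretely, I would fix a square lattice $\Lambda$ of mesh $a := (\beta\rho)^{-1/2}$ inside $L\Omega$, staying at distance $\geq a$ from $\partial(L\Omega)$, and set
$$u_L(x) := c_L\, f(x)\, e^{i\phi(x)}, \qquad \phi(x) := -\sum_{x_j\in\Lambda} \arg(x - x_j),$$
where the modulus $f$ is a smooth non-negative function equal to $\sqrt{\rho}$ outside a boundary strip of width $\sim 1$ (where it vanishes, so that $u_L \in H^1_0(L\Omega)$) and outside disks of radius $a/4$ about each $x_j$ (where it vanishes linearly, absorbing the singularity of $\nabla\phi$). The prefactor $c_L = 1+o(1)$ restores the mass constraint.

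Writing $u_L = fe^{i\phi}$ with $f$ real one has
$$\cEAF_{L\Omega,\beta}[u_L] = \int_{L\Omega} |\nabla f|^2 + \int_{L\Omega} f^2 \bigl|\nabla\phi + \beta\bA[f^2]\bigr|^2,$$
and I would estimate the two terms separately. The first contributes $O(L)$ from the boundary strip and $O(\rho)$ per vortex core, times $O(L^2/a^2) = O(\beta\rho L^2)$ cores, totalling $O(L^2)$. The second is handled by observing that $\nabla\phi + \beta\bA[f^2]$ is divergence-free, with $\curl$ equal to the signed measure $2\pi(\beta f^2 - \sum_j \delta_{x_j})$; by the choice $a^{-2} = \beta\rho$ its total mass in each elementary cell vanishes, and a cell-wise Biot--Savart/Poincar\'e estimate would then give $O(1)$ per cell, summing to $O(L^2)$.

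The hard part will be the last cell-by-cell screening estimate: one must quantitatively exploit the local flux-neutrality to bound the $L^2$-norm of the combined field. A secondary technical point is to handle cells intersecting the boundary strip (where $f^2 \neq \rho$ and the flux matching is imperfect) --- but these are only $O(L)$ cells contributing $O(1)$ each, yielding a subleading $O(L)$ correction.
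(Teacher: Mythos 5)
Your strategy (a vortex lattice with one unit of phase winding per cell of area $(\beta\rho)^{-1}$, relying on cell-wise flux neutrality to screen the induced field) is a genuinely different, and physically natural, route; but as written it has a concrete error that defeats its own purpose. You claim that the measure $2\pi\bigl(\beta f^2-\sum_j\delta_{x_j}\bigr)$ has zero total mass in each elementary cell ``by the choice $a^{-2}=\beta\rho$''. This is false for your $f$: since $f$ vanishes (linearly) on the disks $B(x_j,a/4)$, each cell carries $\beta\int_{C_j}f^2\le \beta\rho\,(a^2-\tfrac{\pi}{2}(a/4)^2)=1-\pi/32$, so every cell has a \emph{constant} flux deficit $\delta\sim\pi/32$. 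By Gauss's law the divergence-free field $\nabla\phi+\beta\bA[f^2]$ then has circulation $\sim\delta\, r^2/a^2$ around circles of radius $r$, hence $\int_{L\Omega}f^2|\nabla\phi+\beta\bA[f^2]|^2\gtrsim \delta^2 L^4/a^4$ --- exactly the $O(L^4)$ catastrophe you correctly identified for the constant ansatz. The fix is easy (e.g.\ choose $a$ so that $\beta\int_C f^2=1$ exactly, or bump $f^2$ slightly above $\rho$ between the cores), but it must be made. Even after that, the ``cell-wise Biot--Savart estimate'' you defer is the real content of the proof and is not automatic: with mere neutrality each cell's far field decays only like a dipole, $a/d^2$, and a triangle-inequality summation produces $O(L^2\log^2 L)$; you need to exploit the vanishing of the dipole moment (delta at the cell center against a uniform background) or a Neumann-problem decomposition in the style of Sandier--Serfaty to get a genuine $O(1)$ per cell.

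The paper avoids all of this with a softer construction: it tiles $L\Omega$ with $N\sim L^2$ \emph{disjoint} disks, puts a fixed radial profile $f$ with Dirichlet conditions in each, and multiplies the $j$-th piece by $e^{-i\beta\omega_N\sum_{k\neq j}\arg(\xv-\xv_k)}$. By Newton's theorem, the vector potential generated by a radial bump \emph{outside its own support} is exactly $\omega_N\nabla\arg(\xv-\xv_k)$, so the inter-disk interaction is gauged away \emph{exactly} (not just to leading multipole order), the energy decouples into $N$ identical $O(1)$ terms, and no screening estimate or flux bookkeeping is needed. Your approach, once repaired, would give the sharper constant $\approx 2\pi\beta\rho^2$ rather than a crude $C$, but for the statement of this lemma that extra precision buys nothing, and the cost is precisely the delicate lattice estimate you have left open.
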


\begin{proof}
	That the Dirichlet energy is an upper bound to the Neumann energy is trivial 
	because $H^1_0(\Omega) \subseteq H^1(\Omega)$. Let us then prove the second inequality. 
	
	We fill the domain $L\Omega$ with $N \sim L^2$ subdomains on which
	we use fixed trial states with Dirichlet boundary conditions.
	The crucial observation is that the magnetic interactions between subdomains can be canceled by a suitable choice of phase (local gauge transformation). For concreteness we here take disks as our subdomains.
		
	Let $f \in C^\infty_c(B_1(0);\R^+)$ be a radial function with 
	$\int_{B_1(0)} |f|^2 = 1$, and let 
	$$
		u_j(\xv) := \sqrt{\omega_N} f(\xv-\xv_j) \in C^\infty_c(B_j),
		\qquad \omega_N := \rho L^2|\Omega|/N.
	$$ 
	Here the points $\xv_j$,
	$j=1,\ldots,N$, are distributed in $L\Omega$ in such a way that the disks 
	$B_j := B_1(\xv_j)$ are contained in $L\Omega$ and disjoint, with 
	$N \sim c|L\Omega|$ as $L \to \infty$ for some $c>0$. 
	Hence
	$$
	\lim_{N \to \infty} \omega_N = \rho/c.
	$$ 
	Take then the trial state
	$$
		u(\xv) := \sum_{j=1}^N u_j(\xv) e^{-i \beta\omega_N \sum_{k \neq j}\arg (\xv-\xv_k)}
		\ \in C^\infty_c(L\Omega).
	$$
	Note that its phase is smooth on each piece $B_j$ of its support and that
	$$
		|u(\xv)|^2 = \sum_{j=1}^N |u_j(\xv)|^2 =
		\begin{cases}
			|u_j(\xv)|^2,	& \text{for} \ \xv \in B_j, \\ 
			0,	&	\text{otherwise},
		\end{cases}
	$$
	and hence
	$$
		\int_{L\Omega} |u|^2 = N \omega_N = \rho L^2|\Omega|.
	$$
	Then
	\begin{align*}
		\cEAF_{\Omega,\beta} [u] 
		&= \sum_{j=1}^N \int_{B_j} \left| \left( -i\nabla + \beta\textstyle{\sum_{k=1}^N} \bA[|u_k|^2] \right) 
			e^{-i\beta\omega_N \sum_{k \neq j} \arg(\xv-\xv_k)} u_j(\xv) \right|^2 \diff \xv\\
		&= \sum_{j=1}^N \int_{B_j} \left| \Bigl( -i\nabla + \beta\bA[|u_j|^2] 
			+ \textstyle{\sum_{k \neq j}} 
				\bigl( \beta\bA[|u_k|^2] - \beta\omega_N \nabla \arg(\xv-\xv_k) \bigr) 
			\Bigr) u_j(\xv) \right|^2 \diff \xv \\
		&= \sum_{j=1}^N \int_{B_j} \left| (-i\nabla + \beta\bA[|u_j|^2]) u_j \right|^2
		= N \omega_N \int_{B_1(0)} |(-i\nabla + \beta\omega_N \bA[|f|^2])f|^2,
	\end{align*}
	where we used that by Newton's theorem~\cite[Theorem~9.7]{LieLos-01}
	$$
		\bA[|u_k|^2](\xv) = \nablap \int_{B_k} \ln|\xv-\yv| |u_k(\yv)|^2 \,\diff \yv
		= \nablap \ln|\xv-\xv_k| \int_{B_k} |u_k|^2 \,\diff \yv
		= \omega_N \nabla \arg(\xv-\xv_k),
	$$
	for $\xv \notin B_k$. It then follows that
	$$
		E_0(L\Omega,\beta,\rho L^2|\Omega|) \le \cEAF_{\Omega,\beta}[u]
		\le N\omega_N \bigl( \|\nabla f\|_{L^2} 
			+ \beta\omega_N \|\bA[|f|^2]f\|_{L^2} \bigr)^2
		\le C L^2,
	$$
	for some large enough constant $C>0$ independent of $N$ or $L$
	(but possibly depending on $\beta$, $\rho$ and $\Omega$).
\end{proof}

The following well-known inequalities will provide useful a priori bounds on the functional's minimizers:

\begin{lemma}[\textbf{Elementary magnetic inequalities}]\label{lem:mag ineq}\mbox{}\\
	\underline{Diamagnetic inequality}: 
	for any $\beta \in \R$ and $u\in H^1 (\Omega)$, 
	\begin{equation}\label{eq:diamag}
	\int_{\Omega}  \left| (\nabla + i\beta\bA[|u|^2])u \right|^2
			\ge \int_\Omega  \left|\nabla| u|\right|^2.
	\end{equation}
	\underline{Magnetic $L^4$ bound}: 
	for any $\beta \in \R$ and $u\in H ^1_0 (\Omega)$,
	\begin{equation}\label{eq:mag L4}
	 \int_{\Omega} \left| (\nabla + i\beta\bA[|u|^2])u \right|^2\ge 2\pi|\beta| \int_\Omega  |u|^4,
	\end{equation}
\end{lemma}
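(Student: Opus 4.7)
\textbf{Proof plan for Lemma~\ref{lem:mag ineq}.}

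\emph{Diamagnetic inequality.} The plan is to establish the pointwise estimate $|(\nabla + i\beta\bA[|u|^2])u|^2 \ge |\nabla|u||^2$ almost everywhere, which integrates immediately to \eqref{eq:diamag}. On the set $\{u \neq 0\}$ I would locally write $u = |u|e^{i\varphi}$ for some real phase $\varphi$, so that
\begin{equation*}
    (\nabla + i\beta\bA[|u|^2])u = e^{i\varphi}\bigl(\nabla|u| + i|u|(\nabla\varphi + \beta\bA[|u|^2])\bigr),
\end{equation*}
whose modulus squared equals $|\nabla|u||^2 + |u|^2|\nabla\varphi + \beta\bA[|u|^2]|^2 \ge |\nabla|u||^2$. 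On $\{u=0\}$ one has $\nabla|u|=0$ a.e., so the inequality is trivial there. To avoid issues with the non-smoothness of the phase, I would use the standard Kato-type argument (as in \cite[Thm.~7.21]{LieLos-01}): replace $|u|$ by $\sqrt{|u|^2+\varepsilon}$, compute directly with the smoothed modulus using $\mathrm{Re}(\bar u \,\nabla u)=|u|\nabla|u|$, and pass to the limit $\varepsilon \to 0$. This works for any $u\in H^1(\Omega)$ and any $\beta \in \R$, with no boundary condition required.

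\emph{Magnetic $L^4$ bound.} Here the strategy is to exploit the special feature that $\curl(\beta\bA[|u|^2]) = 2\pi\beta |u|^2$ via a Bogomolny-type factorization of the magnetic energy density. Writing $D_j := -i\partial_j + \beta A_j[|u|^2]$, a direct computation yields the commutator $[D_1,D_2] = -i\beta\,\curl\bA[|u|^2] = -2\pi i\beta |u|^2$, and therefore the operator identity
\begin{equation*}
    D_1^2 + D_2^2 = (D_1 - iD_2)(D_1 + iD_2) - i[D_1,D_2] = (D_1 - iD_2)(D_1 + iD_2) - 2\pi\beta|u|^2.
\end{equation*}
Testing against $u$ and integrating by parts (which is legitimate because $u \in H^1_0(\Omega)$, so no boundary terms appear — this is exactly where the Dirichlet assumption is used), one obtains
\begin{equation*}
    \int_\Omega |(-i\nabla + \beta\bA[|u|^2])u|^2 = \int_\Omega |(D_1 + iD_2)u|^2 + 2\pi\beta \int_\Omega |u|^4 \ge 2\pi\beta \int_\Omega |u|^4
\end{equation*}
for $\beta \ge 0$. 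For $\beta < 0$ one repeats the argument with the conjugate factorization $(D_1 + iD_2)(D_1 - iD_2)$, which produces $+2\pi|\beta||u|^4$ instead, giving \eqref{eq:mag L4}.

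\emph{Main obstacle.} Neither step involves a genuine difficulty; the only point that deserves care is the integration by parts in the second step. Since $u\in H^1_0(\Omega)$ and $\bA[|u|^2]\in L^\infty_{\mathrm{loc}}(\R^2)$ (the kernel $\nablap\log|\xv|$ being only mildly singular and convolved against an $L^1$ density), the boundary terms indeed vanish, and the commutator computation holds as a quadratic-form identity. If one wished to justify everything at the $H^1_0$ level rigorously without smoothness of $\bA[|u|^2]$, one would approximate $u$ by $C^\infty_c(\Omega)$ functions and pass to the limit, using the continuity of $\varrho \mapsto \bA[\varrho]$ on appropriate spaces.
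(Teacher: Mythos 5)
Your proposal is correct and follows essentially the same route as the paper: the diamagnetic inequality is the standard Kato/Lieb--Loss argument (the paper simply cites \cite[Theorem~7.21]{LieLos-01}), and your Bogomolny-type commutator factorization of $D_1^2+D_2^2$ is the operator form of the pointwise identity the paper integrates, with the Dirichlet condition killing the same boundary terms and $\curl\bA[|u|^2]=2\pi|u|^2$ producing the constant $2\pi$. The sign discussion for $\beta<0$ and the remark that the $L^4$ bound genuinely needs $H^1_0$ both match the paper.
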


\begin{proof}
	The diamagnetic inequality is, e.g., given in~\cite[Theorem~7.21]{LieLos-01}, 
	while the $L^4$ bound follows immediately from the well-known inequality 
	\begin{equation}\label{eq:mag bound}
		\int_{\Omega} \left| (\nabla + i\bA )u \right|^2 
		\geq \pm \int_{\Omega} \curl \bA \, |u| ^2,
		\qquad u \in H^1_0(\Omega);
	\end{equation}
	see, e.g., \cite[Lemma~1.4.1]{FouHel-10}. 
	
	A proof of~\eqref{eq:mag bound} is to integrate the identity
	$$
		|(\nabla + i\bA)u|^2 
		= |( (\partial_1+iA_1) \pm i(\partial_2+iA_2) )u|^2 
			\pm \curl \,\bJ[u] \pm \bA \cdot \nablap |u|^2,
	$$
	with 
	$$ \bJ[u] : = \frac{i}{2} ( u \nabla \bar{u} - \bar{u} \nabla u). $$
	Thanks to the Dirichlet boundary conditions, the integral of the next to last 
	term vanishes while the last one can be integrated by parts yielding
	$$
		\mp \int_{\Omega} \curl \bA \, |u|^2.
	$$
	Again, no boundary terms are present because of the vanishing of $u$ on 
	$\partial \Omega$. Dirichlet boundary conditions are necessary since the bounds \eqref{eq:mag bound} resp. \eqref{eq:mag L4} 
	are otherwise invalid as $\bA \to 0$ resp. $\beta \to 0$, 
	as can be seen by taking the trial state $u \equiv 1$.
\end{proof}

In order to perform energy localizations we shall also need an IMS type inequality\footnote{The initials IMS may refer either to Israel Michael Sigal or to Ismagilov-Morgan-Simon.}, i.e. a suitable generalization of the well-known localization formula~\cite[Theorem~3.2]{CycFroKirSim-87}: 
	\begin{equation}\label{eq:IMS standard}
		|\nabla u|^2 = |\nabla(\chi u)|^2 + |\nabla(\eta u)|^2 
			- \left( |\nabla\chi|^2 + |\nabla\eta|^2 \right) |u|^2,
	\end{equation}
where $\chi ^2,\eta ^2$ form a partition of unity. 
	
\begin{lemma}[\textbf{IMS formula}]\label{lem:IMS}\mbox{}\\
	Let $\Omega \subseteq \R^2$ be a domain with Lipschitz boundary
	and $\chi^2 + \eta^2 = 1$ be a partition of unity s.t. 
	$\chi \in C_c^\infty(\Omega)$ 
	and $\supp\chi$ is simply connected.
	Then, for any $u \in H^1(\Omega)$ and $\beta \in \R$,
	\bml{\label{eq:IMS}
		\cEAF_{\Omega,\beta}[u] 
		= \int_{\Omega} \left| (\nabla + i\beta\bA[|u|^2])(\chi u) \right|^2
		+ \int_{\Omega} \left| (\nabla + i\beta\bA[|u|^2])(\eta u) \right|^2	\\
		- \int_{\Omega} \left( |\nabla\chi|^2 + |\nabla\eta|^2 \right) |u|^2,
	}
	where
	\begin{equation}\label{eq:IMS-eta}
		\int_{\Omega} \left| (\nabla + i\beta\bA[|u|^2])(\eta u) \right|^2
		\ge \int_\Omega \left|\nabla|\eta u|\right|^2
	\end{equation}
	and
	\begin{equation}\label{eq:IMS-chi}
		\int_{\Omega} \left| (\nabla + i\beta\bA[|u|^2])(\chi u) \right|^2
		\ge \left\{ \begin{array}{l} \displaystyle
			\int_\Omega \left|\nabla|\chi u|\right|^2, \\ \displaystyle
			2\pi|\beta| \int_\Omega \chi^2 |u|^4, \\ \displaystyle
			(1-\eps)\cEAF_{\Omega,\beta}[\psi] 
				- (\eps^{-1}-1) \beta^2 \int_\Omega |\bA[|\eta u|^2\1_K]|^2 |\chi u|^2, 
		\end{array} \right.
	\end{equation}
	with $\eps \in (0,1)$ arbitrary, $K := \supp\chi \cap \supp\eta$,
	and $\psi = e^{i\beta\phi}\chi u \in H^1_0(\supp \chi)$
	for some harmonic function $\phi \in C^2(\supp\chi)$.
\end{lemma}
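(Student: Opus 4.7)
The plan is to first derive the identity \eqref{eq:IMS} by a short algebraic manipulation, and then to obtain the three lower bounds on the $\chi u$--piece from, in order, the diamagnetic inequality, the magnetic $L^4$--bound, and a gauge transformation that exploits the simple connectedness of $\supp\chi$ in an essential way. Throughout I abbreviate $D := \nabla + i\beta\bA[|u|^2]$.

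For the identity, I expand $|D(\chi u)|^2 + |D(\eta u)|^2$ using $D(\chi u) = (\nabla\chi)u + \chi Du$ and the analogue for $\eta u$. The four cross-terms combine into $2\,\mathrm{Re}\bigl[(\chi\nabla\chi + \eta\nabla\eta)\cdot u\,\overline{Du}\bigr]$, which vanishes because differentiating $\chi^2 + \eta^2 = 1$ gives $\chi\nabla\chi + \eta\nabla\eta = 0$. What remains is exactly $|Du|^2 + (|\nabla\chi|^2 + |\nabla\eta|^2)|u|^2$, which upon integration yields \eqref{eq:IMS}. The diamagnetic inequality \eqref{eq:diamag} applied with the real potential $\beta\bA[|u|^2]$ immediately gives \eqref{eq:IMS-eta} and the first bound of \eqref{eq:IMS-chi}, while the magnetic inequality \eqref{eq:mag bound} applied to $\chi u \in H^1_0(\supp\chi)$ (the Dirichlet boundary conditions coming for free from $\chi\in C_c^\infty(\Omega)$), together with $\curl(\beta\bA[|u|^2]) = 2\pi\beta|u|^2$, gives the second bound $2\pi|\beta|\int_\Omega \chi^2|u|^4$.

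The main work is the third bound. I decompose
\[
|u|^2 = |\chi u|^2 + |\eta u|^2\1_K + |\eta u|^2\1_{K^c},
\qquad K = \supp\chi\cap\supp\eta,
\]
and split $\bA[|u|^2]$ correspondingly. The crucial observation is that $|\eta u|^2\1_{K^c}$ vanishes on $\supp\chi$, so $\bA[|\eta u|^2\1_{K^c}]$ is curl-free there (its curl being $2\pi|\eta u|^2\1_{K^c}$) and automatically divergence-free (as $\bA[\rho] = \nablap(w_0\ast\rho)$ is always so). By the simple connectedness of $\supp\chi$ it is therefore of the form $\nabla\phi$ for some harmonic $\phi \in C^2(\supp\chi)$; concretely, $\phi$ is the harmonic conjugate of the smooth harmonic function $w_0\ast(|\eta u|^2\1_{K^c})$ on $\supp\chi$. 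Setting $\psi := e^{i\beta\phi}\chi u \in H^1_0(\supp\chi)$ yields $|\psi|^2 = |\chi u|^2$ and
\[
D(\chi u) = e^{-i\beta\phi}\bigl( \nabla + i\beta\bA[|\psi|^2] + i\beta\bA[|\eta u|^2\1_K]\bigr)\psi.
\]
The elementary inequality $|a+b|^2 \ge (1-\eps)|a|^2 - (\eps^{-1}-1)|b|^2$, applied pointwise with $a = (\nabla + i\beta\bA[|\psi|^2])\psi$ and $b = i\beta\bA[|\eta u|^2\1_K]\psi$, and then integrated, produces the third bound upon recognizing $\int_\Omega |a|^2 = \cEAF_{\Omega,\beta}[\psi]$.

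The main obstacle lies in this gauge step: one needs a globally defined, single-valued harmonic conjugate $\phi$ on $\supp\chi$, which is exactly what the topological hypothesis on $\supp\chi$ guarantees. Its $C^2$ regularity in the interior of $\supp\chi$ is automatic because the source $|\eta u|^2\1_{K^c}$ vanishes there, and any boundary pathology of $\phi$ is harmless because $\chi u \equiv 0$ on $\partial\supp\chi$, so $\psi = e^{i\beta\phi}\chi u$ is a bona fide element of $H^1_0(\supp\chi)$.
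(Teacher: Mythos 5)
Your proposal is correct and follows essentially the same route as the paper: the identity via the vanishing of the cross-terms from $\chi\nabla\chi+\eta\nabla\eta=0$ (the paper organizes this through the current identity $\bJ[\chi u]+\bJ[\eta u]=\bJ[u]$, which is the same computation), and the third bound via the decomposition $\bA[|u|^2]=\bA[|\chi u|^2]+\bA[|\eta u|^2\1_K]+\bA[|\eta u|^2\1_{K^c}]$, gauging away the last piece on the simply connected $\supp\chi$, and a weighted Cauchy--Schwarz on the remaining cross-term. Your characterization of $\phi$ as the harmonic conjugate of $w_0\ast(|\eta u|^2\1_{K^c})$ coincides with the paper's explicit choice $\phi(\xv)=\int_{K^c}\arg(\xv-\yv)|\eta u(\yv)|^2\,\diff\yv$.
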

\begin{proof}
	We expand
	$$
		\cEAF_{\Omega,\beta}[u] 
		= \int_\Omega |\nabla u|^2 
			+ 2\beta \int_\Omega \bA[|u|^2] \cdot \bJ[u]
			+ \beta^2 \int_\Omega \bigl|\bA[|u|^2]\bigr|^2 |u|^2.
	$$
	For the first term we use the standard IMS formula~\eqref{eq:IMS standard},
	while for the term involving $\bJ$ we have
	\begin{align*}
		\tx\frac{2}{i}(\bJ[\chi u] + \bJ[\eta u]) 
		&= u\chi\nabla(\chi \bar{u}) + u\eta\nabla(\eta \bar{u})
		- \bar{u}\chi\nabla(\chi u) - \bar{u}\eta\nabla(\eta u) \\
		&= u(\chi^2+\eta^2)\nabla \bar{u} - \bar{u} (\chi^2+\eta^2)\nabla u
		= \tx\frac{2}{i}\bJ[u].
	\end{align*}
	We can then recollect the terms to obtain \eqref{eq:IMS}.
	Equation \eqref{eq:IMS-eta} and the first version of \eqref{eq:IMS-chi}  
	follow from the diamagnetic inequality~\eqref{eq:diamag}, while the second version of
	\eqref{eq:IMS-chi} follows from the magnetic bound~\eqref{eq:mag L4} with Dirichlet boundary conditions.
	For the third version we write
	\begin{multline*}
		\int_{\Omega} \left| (\nabla + i\beta\bA[|u|^2])(\chi u) \right|^2 \\
		= \int_{\Omega} \left| \left(\nabla + i\beta\bA[|\chi u|^2]
			+ i\beta\bA[|\eta u|^2 \1_K]
			+ i\beta\left(\bA[|\eta u|^2 \1_{K^c}] - \nabla\phi\right)\right)(e^{i\beta\phi} \chi u) \right|^2,
	\end{multline*}
	where the last magnetic term vanishes by taking the gauge choice
	$$
		\phi(\xv) := \int_{K^c} \arg (\xv-\yv) \,|\eta u(\yv)|^2 \,\diff \yv,
		\qquad \xv \in \supp\chi.
	$$
	Thus, noting that $|\chi u|^2 = |\psi|^2$, 
	$$ \int_{\Omega} \left| (\nabla + i\beta\bA[|u|^2])(\chi u) \right|^2 
		= \int_{\Omega} \left| \left(\nabla + i\beta\bA[|\psi|^2] \right)\psi
			+ i\beta\bA[|\eta u|^2 \1_K]\psi \right|^2,
	$$ 
	and there only remains to expand the square and bound the cross-term using Cauchy-Schwarz to conclude the proof.
\end{proof}

\subsection{Scaling laws}\label{sec:scale}

In fact the large $\beta$ and large volume limits are equivalent, 
as follows from the simple observation:

\begin{lemma}[\textbf{Scaling laws for the homogeneous gas}]\label{lem:scale}\mbox{}\\
	For any domain $\Omega \subset \R^2$ and $\lambda, \mu > 0$
	we have that
	\begin{equation}\label{eq:scale ener} 
	E(\Omega,\beta,M) = \frac{1}{\lambda^{2}} E\lf(\mu\Omega,\tx\frac{\beta}{\lambda^{2}\mu^{2}},\lambda^2\mu^2 M\ri),
	\end{equation}
	and an identical scaling relation 
	holds true for $E_0(\Omega, \beta, M)$.
\end{lemma}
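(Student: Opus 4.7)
The idea is purely computational: the lemma follows from the fact that, up to a global prefactor, the energy functional $\cEAF$ is invariant under the simultaneous rescaling of position, mass density, and statistics parameter. I would proceed by exhibiting an explicit bijection between admissible functions on $\Omega$ and on $\mu \Omega$ which exchanges the energies according to the stated formula, then reading off both inequalities from it.

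First, given $u \in H^1(\mu\Omega)$ with $\int_{\mu\Omega} |u|^2 = \lambda^2 \mu^2 M$, I set
\begin{equation*}
	v(\xv) := \lambda^{-1} u(\mu \xv), \qquad \xv \in \Omega,
\end{equation*}
so that $v \in H^1(\Omega)$ and a change of variables gives $\int_\Omega |v|^2 = M$. Moreover, if $u \in H^1_0(\mu\Omega)$ then $v \in H^1_0(\Omega)$, which will take care of the Dirichlet case in parallel. The key step is to compute how the self-consistent field $\bA[|v|^2]$ behaves under this rescaling. Using the definition~\eqref{eq:A on domain} and the change of variable $\zv = \mu \yv$, together with the elementary identities $(\xv - \zv/\mu)^\perp = \mu^{-1}(\mu\xv - \zv)^\perp$ and $|\xv - \zv/\mu|^2 = \mu^{-2}|\mu\xv - \zv|^2$, one finds
\begin{equation*}
	\bA[|v|^2](\xv) = \lambda^{-2} \mu^{-1} \bA[|u|^2](\mu \xv).
\end{equation*}

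With this in hand, the magnetic derivative transforms as
\begin{equation*}
	\bigl(-i \nabla + \beta \bA[|v|^2]\bigr) v (\xv)
	= \lambda^{-1} \mu \, \Bigl[ \bigl(-i \nabla + \tfrac{\beta}{\lambda^2 \mu^2} \bA[|u|^2]\bigr) u \Bigr](\mu \xv),
\end{equation*}
and a further change of variables in the integral yields
\begin{equation*}
	\cEAF_{\Omega,\beta}[v] = \lambda^{-2} \, \cEAF_{\mu\Omega,\, \beta/(\lambda^2 \mu^2)}[u].
\end{equation*}
Taking the infimum over $u$ with the appropriate mass constraint produces the inequality
\begin{equation*}
	E(\Omega,\beta,M) \le \lambda^{-2} E\bigl(\mu\Omega, \tfrac{\beta}{\lambda^2 \mu^2}, \lambda^2 \mu^2 M\bigr),
\end{equation*}
and analogously for $E_0$.

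The reverse inequality is obtained by applying exactly the same argument to the transformation $\Omega \mapsto \mu^{-1}(\mu \Omega) = \Omega$ with parameters $(\lambda^{-1}, \mu^{-1})$ in place of $(\lambda,\mu)$, which is simply the inverse of the map $u \mapsto v$. There is no real obstacle here; the only point that requires a bit of care is the transformation of the non-local potential $\bA[|u|^2]$, since the logarithm of $|\mu \xv - \zv|$ differs from the logarithm of $|\xv - \zv/\mu|$ by a constant (irrelevant under $\nabla^\perp$) and a factor $\mu$ must be tracked. Once the formula for $\bA[|v|^2]$ is correct, everything else is bookkeeping.
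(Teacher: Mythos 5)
Your proposal is correct and follows essentially the same route as the paper: the explicit rescaling $v(\xv)=\lambda^{-1}u(\mu\xv)$ (the inverse of the paper's $u_{\lambda,\mu}(\xv)=\lambda u(\xv/\mu)$), the computation $\bA[|v|^2](\xv)=\lambda^{-2}\mu^{-1}\bA[|u|^2](\mu\xv)$, and the resulting identity $\cEAF_{\Omega,\beta}[v]=\lambda^{-2}\cEAF_{\mu\Omega,\beta/(\lambda^2\mu^2)}[u]$ all match the paper's computation, with the correct treatment of the mass constraint and of the Dirichlet case. The bijection between admissible classes then yields equality of the infima exactly as in the paper.
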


\begin{proof}
	Given any $u \in H^1(\Omega)$ we may set
	\begin{equation}\label{eq:scale func}
	u_{\lambda,\mu}(\xv) := \lambda u(\xv/\mu),
	\end{equation}
	and observe that $u_{\lambda,\mu} \in H^1(\mu\Omega)$,
	$$\int_{\mu\Omega} |u_{\lambda,\mu}|^2 = \lambda^2 \mu^2 \int_{\Omega} |u|^2,$$
	and 
	$$\cEAF_{\mu\Omega,\beta}[u_{\lambda,\mu}] = \lambda^2 \cEAF_{\Omega,\beta \lambda^2 \mu^2}[u].$$
	Namely, using
	$\nablap w_0(\xv) = \xv^{-\perp} := \xv^\perp/|\xv|^2$
	and
	\begin{align*}
		\bA_{\mu\Omega}[|u_{\lambda,\mu}|^2](\xv) 
		&= \int_{\mu\Omega} (\xv-\yv)^{-\perp} |u_{\lambda,\mu}(\yv)|^2 \,\diff\yv
		= \lambda^2 \int_{\mu\Omega} (\xv-\yv)^{-\perp} |u(\yv/\mu)|^2 \,\diff\yv \\
		& = \lambda^2\mu \int_{\Omega} (\xv/\mu-\zv)^{-\perp} |u(\zv)|^2 \,\diff\zv
		= \lambda^2\mu \bA_{\Omega}[|u|^2](\xv/\mu),
	\end{align*}
	we have
	\begin{align*}
		\cEAF_{\mu\Omega,\beta}[u_{\lambda,\mu}] 
		&= \int_{\mu\Omega} \left| \nabla u_{\lambda,\mu}(\xv) + i\beta \bA_{\mu\Omega}[|u_{\lambda,\mu}|^2](\xv) u_{\lambda,\mu}(\xv) \right|^2 \diff\xv \\
		&= \int_{\mu\Omega} \left| \lambda\mu^{-1}(\nabla u)(\xv/\mu) + i\beta \lambda^3\mu \bA_{\Omega}[|u|^2](\xv/\mu) u(x/\mu) \right|^2 \diff\xv \\
		&= \lambda^2 \mu^{-2} \int_{\mu\Omega} \left| (\nabla u)(\xv/\mu) + i\beta \lambda^2\mu^2 \bA_{\Omega}[|u|^2](\xv/\mu) u(\xv/\mu) \right|^2 \diff\xv \\
		&= \lambda^2 \int_{\Omega} \left| \nabla u(\zv) + i\beta \lambda^2\mu^2 \bA_{\Omega}[|u|^2](\zv) u(\zv) \right|^2 \diff\zv
		\ = \lambda^2 \cEAF_{\Omega,\beta\lambda^2\mu^2}[u].
	\end{align*}
	Hence, we may take as a trial state for 
	$\cEAF_{\mu\Omega,\beta\lambda^{-2}\mu^{-2}}$
	the function $u_{\lambda,\mu}$ where $u$ is the minimizer 
	(or minimizing sequence)
	of $\cEAF_{\Omega,\beta}$, and vice versa.
	Moreover, if $u \in H^1_0$ then so is $u_{\lambda,\mu}$.
\end{proof}

It follows immediately from the above that the thermodynamic energy has a very 
simple dependence on its parameters, which justifies~\eqref{eq:scale thermo}
and the way it appears in~\eqref{eq:TF func}. 

\begin{corollary}[\textbf{Scaling laws for $ e(\beta,\rho) $}]\label{cor:scale}\mbox{}\\
	For any $\rho \geq 0$ and bounded $\Omega \subset \R^2$, with $e(\beta,\rho)$ defined as in \eqref{eq:thermo def}, we have
	\beq
		\label{eq:altern thermo}
		e(1, \rho) = |\Omega| \liminf_{\beta \to \infty} \frac{E_0 (\Omega, \beta, \rho)}{\beta},
	\eeq
	and	for any $\beta, \rho \ge 0$,
	\begin{equation}\label{eq:scale thermo bis}
		e(\beta,\rho) = \beta \rho ^2 e (1,1). 
	\end{equation}
\end{corollary}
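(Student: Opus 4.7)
The whole content of the corollary is a consequence of the scaling identity from Lemma~\ref{lem:scale} applied to the liminf defining $e(\beta,\rho)$ in \eqref{eq:thermo def} (where $\Omega = \Omega_0$ is the unit square, so $|\Omega_0|=1$). The plan is to choose the free parameters $\lambda,\mu>0$ in \eqref{eq:scale ener} in two different ways, one tailored to each of the formulas \eqref{eq:altern thermo} and \eqref{eq:scale thermo bis}, after which each statement reduces to a direct computation. The $\rho=0$ case is trivial (both sides vanish), so I can assume $\rho>0$ throughout.

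For the scaling law \eqref{eq:scale thermo bis}, I will start from $e(\beta,\rho) = \liminf_{L\to\infty} L^{-2} E_0(L\Omega_0,\beta,\rho L^2)$ and apply Lemma~\ref{lem:scale} with
$$ \lambda^2 = 1/\rho, \qquad \mu^2 = \beta\rho, $$
so that the effective coupling becomes $\beta/(\lambda^2\mu^2) = 1$ and the rescaled mass equals $\lambda^2\mu^2\rho L^2 = \beta\rho L^2 = (L')^2$ with $L':=\mu L = L\sqrt{\beta\rho}$. This gives
$$ E_0(L\Omega_0,\beta,\rho L^2) = \rho\, E_0(L'\Omega_0,1,(L')^2), $$
and dividing by $L^2$ produces the factor $L'^2/L^2 = \beta\rho$, leading to $L^{-2}E_0(L\Omega_0,\beta,\rho L^2) = \beta\rho^2 \cdot (L')^{-2} E_0(L'\Omega_0,1,(L')^2)$. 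Since $L'\to\infty$ as $L\to\infty$, passing to the liminf yields $e(\beta,\rho)=\beta\rho^2 e(1,1)$.

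For the alternative formula \eqref{eq:altern thermo}, I apply Lemma~\ref{lem:scale} to $E_0(L\Omega,1,\rho L^2|\Omega|)$ with the choices
$$ \mu = 1/L, \qquad \lambda^2 = 1/|\Omega|, $$
so that the rescaled domain is exactly $\mu L\Omega = \Omega$ and the rescaled mass is $\lambda^2\mu^2\rho L^2|\Omega| = \rho$. The effective coupling then becomes $\beta':=1/(\lambda^2\mu^2) = |\Omega|L^2$, and the identity reads
$$ E_0(L\Omega,1,\rho L^2|\Omega|) = |\Omega|\, E_0(\Omega, |\Omega|L^2, \rho). $$
Dividing both sides by $L^2|\Omega|$ and writing everything in terms of $\beta' = |\Omega|L^2$ produces
$$ \frac{E_0(L\Omega,1,\rho L^2|\Omega|)}{L^2|\Omega|} = \frac{|\Omega|\, E_0(\Omega,\beta',\rho)}{\beta'}. $$
As $L\to\infty$ one has $\beta'\to\infty$, so taking the liminf on both sides gives \eqref{eq:altern thermo} once we identify the liminf of the left-hand side with $e(1,\rho)$. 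For $\Omega = \Omega_0$ this identification is the definition \eqref{eq:thermo def} itself and the proof is complete. For a general bounded Lipschitz $\Omega$, this identification requires the shape-independence part of Theorem~\ref{thm:thermo limit} (proved in the following subsection), so this portion of the corollary may be regarded either as a purely scaling statement at this stage, or as a forward-looking identity that becomes fully justified once Theorem~\ref{thm:thermo limit} is in hand.

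There is no real obstacle: since Lemma~\ref{lem:scale} does all the work, the only subtlety is the careful bookkeeping of the parameters $\lambda,\mu$ and the identification of the rescaled mass, coupling, and domain at each step.
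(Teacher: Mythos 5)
Your proposal is correct and follows essentially the same route as the paper: both statements are obtained by pure parameter bookkeeping in Lemma~\ref{lem:scale}, and your derivation of \eqref{eq:altern thermo} (with $\mu=1/L$, $\lambda^2=1/|\Omega|$) is the same computation the paper performs, while for \eqref{eq:scale thermo bis} you merely combine into a single two-parameter scaling what the paper splits into the two one-parameter steps $e(c,\rho)=c\,e(1,\rho)$ and $e(\beta,\rho)=\rho^2 e(\beta,1)$. Your remark on the shape-dependence of the liminf for general $\Omega$ matches the paper's own caveat following the corollary; only the degenerate case $\beta=0$ (where your choice $\mu^2=\beta\rho$ collapses) deserves the same one-line separate treatment you already give to $\rho=0$.
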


\begin{remark}
	At the moment each shape of domain $\Omega$ may give rise to a different
	limit $e(\beta,\rho)$ in \eqref{eq:thermo def}, and this 
	Corollary and proof applies in such a situation.
	However it will be shown below in the case of Lipschitz regular domains that
	the limit is independent of the shape, and one may therefore w.l.o.g. take 
	the unit square $\Omega = Q$ as a reference domain.
\end{remark}

\begin{proof}
	A first consequence of the scaling property \eqref{eq:scale ener}
	is that taking the thermodynamic limit as described in \eqref{eq:thermo limit} 
	or \eqref{eq:thermo def} is equivalent to taking the limit $ \beta \to \infty $ 
	at fixed size of the domain, i.e.,
	\bdm
		e(c,\rho) = \liminf_{L\to \infty} \frac{E_0 (L \Omega, c, \rho |\Omega| L ^2)}{L ^2|\Omega|} 
		=  \liminf_{L\to \infty} \frac{E_0 (\Omega, c L^2 |\Omega|, \rho)}{L ^2},
	\edm
	where we have applied \eqref{eq:scale ener} with $\mu = L$, 
	$\lambda = |\Omega|^{1/2}$ and $M = \rho$. 
	Now if, for any $c > 0$, we set $\beta = c L^2 |\Omega| \to \infty$, 
	the above expression becomes
	\beq
		e(c, \rho) = c |\Omega| \liminf_{\beta \to \infty} \frac{E_0 (\Omega, \beta, \rho)}{\beta},
	\eeq
	which proves the first claim, and also implies that
	\beq
		\label{eq:scale 1}
		e(c,\rho) = c \: e(1,\rho).
	\eeq
	
	Next we take $\mu = 1$ in~\eqref{eq:scale ener} and obtain
	$$E_0(\Omega,\beta,M) = \lambda^{-2} E_0(\Omega,\beta\lambda^{-2},\lambda^2 M).$$
	Taking $M = |\Omega|$, dividing by $|\Omega|$ and taking the limit $|\Omega| \to \infty$ we deduce 
	$$ e(\beta, 1) = \lambda^{-2} e (\beta\lambda^{-2}, \lambda ^2) = \lambda^{-4} e (\beta, \lambda ^2)$$
	where we used~\eqref{eq:scale 1} in the last equality. This yields
	\begin{equation}\label{eq:scale 2}
	 e(\beta, \rho) = \rho^2 \, e (\beta,1) 
	\end{equation}
	for all $\beta,\rho \geq 0$. Combining~\eqref{eq:scale 1} 
	and~\eqref{eq:scale 2} yields the result~\eqref{eq:scale thermo bis}.
\end{proof}

\subsection{Proof of Theorem~\ref{thm:thermo limit}}\label{sec:proof thermo}

We split the proof in three lemmas:

\begin{lemma}[\textbf{Thermodynamic limit for the Dirichlet energy in a square}]\label{lem:thermo Dir}\mbox{}\\
	Let $Q$ be a unit square, and $\rho \geq 0$ and $\beta \geq 0 $ be fixed constants. 
	The limit
	$$
		e(\beta,\rho) = \lim_{L\to +\infty} \frac{E_0 (L Q, \beta, \rho L ^2 )}{L ^2}
	$$
	exists, is finite, and satisfies 
	$e(\beta,\rho) \geq 2\pi \beta \rho^2$.
\end{lemma}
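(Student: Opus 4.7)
The finiteness of the liminf is an immediate consequence of Lemma~\ref{lem:trial}, which yields $E_0(LQ,\beta,\rho L^2)/L^2 \leq C$ uniformly in $L\geq 1$. For the lower bound $e(\beta,\rho)\geq 2\pi\beta\rho^2$, I apply the magnetic $L^4$ inequality \eqref{eq:mag L4} to any $u\in H^1_0(LQ)$ with $\int_{LQ}|u|^2=\rho L^2$, combined with the Cauchy--Schwarz estimate $\int_{LQ}|u|^4 \geq (\int_{LQ}|u|^2)^2/|LQ|=\rho^2 L^2$. This gives $\cEAF_{LQ,\beta}[u]\geq 2\pi\beta\rho^2 L^2$, hence $E_0(LQ,\beta,\rho L^2)/L^2\geq 2\pi\beta\rho^2$ for every $L$, which passes to the liminf. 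Note that the Dirichlet condition is essential here, as emphasized at the end of the proof of Lemma~\ref{lem:mag ineq}.

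\textbf{Step 2: existence of the limit via tiling.} Set $e_L := E_0(LQ,\beta,\rho L^2)/L^2$; the plan is to show $\limsup_L e_L \leq \liminf_L e_L$, which together with Step~1 implies that the limit exists and is finite. Fix $\eps>0$ and pick $\ell$ large such that $e_\ell \leq \liminf e_L + \eps$, with $u^{(\ell)}\in H^1_0(\ell Q)$ an associated (near-)minimizer. For $L\gg \ell$, I tile $LQ$ with $N=\lfloor L/\ell\rfloor^2$ disjoint translates $Q_j = \ell Q+\xv_j$ of $\ell Q$, leaving a boundary strip of width at most $\ell$. On each $Q_j$ place the translate $u_j(\xv):=u^{(\ell)}(\xv-\xv_j)$, and define the global trial state
\[
u(\xv) := \sum_{j=1}^N u_j(\xv)\, \exp\Bigl(-i\beta \sum_{k\neq j}\phi_k(\xv)\Bigr),
\]
where $\phi_k$ is chosen on $\supp u_j$ so that $\nabla\phi_k = \bA[|u_k|^2]$ there. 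Such a scalar potential exists because $\curl\bA[|u_k|^2]=2\pi|u_k|^2$ vanishes outside $Q_k$ and $Q_j$ is simply connected; the prescription $\phi_k(\xv)=\omega_k\arg(\xv-\xv_k)$ (with $\omega_k=\rho\ell^2$) gives the monopole part, and the remaining multipole corrections define a smooth contribution.

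\textbf{Step 3: energetic accounting.} Thanks to the gauge choice, each piece of $u$ supported on $Q_j$ satisfies $(-i\nabla+\beta\bA[|u|^2])u\bigr|_{Q_j} = e^{-i\beta\sum_{k\neq j}\phi_k}(-i\nabla+\beta\bA[|u_j|^2])u_j$, so the cross-terms between different $Q_j$'s cancel exactly and
\[
\cEAF_{LQ,\beta}[u] = \sum_{j=1}^N \cEAF_{\ell Q,\beta}[u^{(\ell)}] = N\, E_0(\ell Q,\beta,\rho\ell^2).
\]
Dividing by $L^2$ and using $N\ell^2 = L^2 + O(\ell L)$, one reads off $e_L \leq e_\ell + O(\ell/L)$. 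Letting $L\to\infty$ at fixed $\ell$ yields $\limsup_L e_L\leq e_\ell\leq \liminf_L e_L + \eps$, and arbitrariness of $\eps$ closes the argument.

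\textbf{Main obstacle.} The delicate point is to verify that a valid (single-valued, sufficiently regular) gauge function $\phi_k$ exists globally on the union $\bigcup_{j\neq k}Q_j$: although $\bA[|u_k|^2]$ is curl-free outside $Q_k$, its circulation around $Q_k$ equals $2\pi\omega_k\neq 0$, so $\phi_k$ is only defined up to the multi-valued branch $\omega_k\arg(\xv-\xv_k)$. This is exactly the same multi-valuedness as carried by the cancelling phase $e^{-i\beta\omega_k\arg(\xv-\xv_k)}$, so provided $\beta\omega_k$-integrality subtleties are handled (by folding the integer part into the definition of $u$), the phase $e^{-i\beta\phi_k}$ is globally well-defined on $LQ\setminus Q_k$. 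The boundary strip and the issue of adjusting the total mass to $\rho L^2$ (as opposed to $N\rho\ell^2$) are minor: any mass defect of order $\rho(L^2-N\ell^2)=O(\rho\ell L)$ can be absorbed by a small rescaling of $u^{(\ell)}$ through Lemma~\ref{lem:scale}, contributing at most $o(L^2)$ to the energy. These technicalities are expected to be responsible for the non-optimal error term $O(L^{12/7+\eps})$ mentioned in the remark following the statement of Theorem~\ref{thm:thermo limit}.
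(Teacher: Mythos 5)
Your proposal is correct and follows essentially the same route as the paper: the lower bound $e(\beta,\rho)\ge 2\pi\beta\rho^2$ via the magnetic $L^4$ bound plus Cauchy--Schwarz, and existence of the limit via a subadditivity/tiling argument in which copies of a near-minimizer on $\ell Q$ are glued with cell-wise gauge phases that cancel the inter-cell magnetic interaction. Two small remarks. First, the paper restores the exact mass constraint $\rho L^2$ by placing an explicit extra function $u_0$ in the leftover boundary strip (with energy $O(L k)$ controlled by Lemma~\ref{lem:trial}), whereas you rescale the amplitude of $u^{(\ell)}$; this works, but the justification ``through Lemma~\ref{lem:scale}'' is slightly off, since that scaling with $\mu=1$ changes $\beta$ into $\beta\lambda^{-2}$ --- what you actually need is the elementary observation that $\cEAF_{\ell Q,\beta}[\lambda u^{(\ell)}]-\cEAF_{\ell Q,\beta}[u^{(\ell)}]=O(|\lambda^2-1|)\cdot C_\ell$, which summed over the $O(L^2/\ell^2)$ cells is $o(L^2)$ for fixed $\ell$. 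Second, your ``main obstacle'' is not actually an obstacle: each phase $\phi_k$ only ever needs to be defined on the single simply connected cell $Q_j$ (disjoint from $Q_k$, where $\bA[|u_k|^2]$ is curl-free), and since $u_j\in H^1_0(Q_j)$ vanishes on $\partial Q_j$ there is no global consistency or integrality requirement across cells; this is exactly how the paper's construction proceeds.
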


\begin{lemma}[\textbf{Neumann-Dirichlet comparison}]\label{lem:Neu Dir}\mbox{}\\
	Let $\Omega$ be a bounded simply connected domain with Lipschitz boundary, 
	then for any fixed $\rho$ and $\beta$ positive, as $L\to\infty$
	$$
		\frac{E_0 (L \Omega, \beta, \rho L ^2|\Omega|)}{L^2|\Omega|} \geq \frac{E (L \Omega, \beta, \rho L ^2|\Omega|)}{L ^2|\Omega|} \geq  \frac{E_0 (L \Omega, \beta, \rho L ^2|\Omega|)}{L ^2|\Omega|} - o (1).
	$$
\end{lemma}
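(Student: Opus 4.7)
The inequality $E_0 \geq E$ is immediate from $H^1_0(L\Omega) \subset H^1(L\Omega)$. For the reverse, the strategy is to build a Dirichlet competitor out of a Neumann minimizer by cutting off at distance $\ell$ from $\partial(L\Omega)$, and to control the ensuing errors and the mass defect so that the lost energy is $o(L^2)$. Let $u_N$ realize the infimum in $E(L\Omega,\beta,M)$ with $M = \rho L^2|\Omega|$; by Lemma~\ref{lem:trial} we have $\cEAF_{L\Omega,\beta}[u_N] \leq CL^2$, which via the diamagnetic inequality yields $\||u_N|\|_{H^1(L\Omega)}^2 \lesssim L^2$.

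Pick $\chi = \chi_\ell \in C_c^\infty(L\Omega)$, equal to $1$ on the bulk $\{\xv \in L\Omega : \dist(\xv,\partial(L\Omega)) > \ell\}$ with $|\nabla\chi|\leq C/\ell$, and set $\eta := \sqrt{1-\chi^2}$; denote by $S_\ell$ the boundary strip of width $\ell$ (area $\lesssim \ell L$) and $K := \supp\chi\cap\supp\eta \subset S_\ell$. Lemma~\ref{lem:IMS} combined with the third lower bound in \eqref{eq:IMS-chi}, which produces a gauge-modified $\psi = e^{i\beta\phi}\chi u_N \in H_0^1(\supp\chi) \subset H_0^1(L\Omega)$ with $|\psi|^2 = \chi^2|u_N|^2$, gives after discarding the nonnegative $\eta$-kinetic term
\begin{equation*}
\cEAF_{L\Omega,\beta}[u_N] \geq (1-\eps)\,E_0(L\Omega,\beta,M_\psi) - C_\eps\beta^2\!\int_{L\Omega}\!\bigl|\bA[|\eta u_N|^2\1_K]\bigr|^2 |\psi|^2 - C\ell^{-2}\!\int_{S_\ell}|u_N|^2,
\end{equation*}
where $M_\psi := \int|\psi|^2 \leq M$, and we used $\cEAF_{\supp\chi,\beta}[\psi] \geq E_0(\supp\chi,\beta,M_\psi) \geq E_0(L\Omega,\beta,M_\psi)$ (the last step by domain monotonicity of the Dirichlet infimum). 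The scaling relation of Lemma~\ref{lem:scale}, namely $E_0(L\Omega,\beta,M_\psi) = (M_\psi/M)\,E_0(L\Omega,\beta M_\psi/M,M)$, together with a Cauchy-Schwarz continuity of $E_0$ in $\beta$ (relying on the individual a priori bounds $\int|\nabla u^\ast|^2, \beta^2\int|\bA[|u^\ast|^2]|^2|u^\ast|^2 \leq 2\cEAF[u^\ast] \lesssim L^2$ extracted from the magnetic kinetic energy) then reduces the task to showing that the mass defect, the localization cost, and the magnetic cross-term can all be made $o(L^2)$ by a suitable choice of $\ell$ and $\eps$.

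The mass defect and the localization cost are both controlled by $\int_{S_\ell}|u_N|^2$: Gagliardo--Nirenberg in two dimensions applied to $|u_N|$ yields $\||u_N|\|_{L^4(L\Omega)}^2 \lesssim L^2$, hence by H\"older $\int_{S_\ell}|u_N|^2 \lesssim (\ell L)^{1/2}L^2$; a sharper bound would be bootstrapped by applying the magnetic $L^4$ inequality of Lemma~\ref{lem:mag ineq} to the Dirichlet-vanishing function $\chi u_N$. The nonlocal magnetic term is estimated via Hardy--Littlewood--Sobolev using $\nabla^\perp w_0 \in L^{2,\infty}(\R^2)$, yielding $\|\bA[|\eta u_N|^2\1_K]\|_{L^4(\R^2)} \lesssim \||\eta u_N|^2\1_K\|_{L^{4/3}}$, which is controlled by $L^{8/3}$-norms of $u_N$ on the thin strip $S_\ell$. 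Optimizing $\ell$ against $\eps$ should produce the $O(L^{12/7+\eps})$ error rate announced in the remark following Theorem~\ref{thm:thermo limit}. The main obstacle is precisely this optimization: the crude $L^4$ bound on $|u_N|$ makes the mass defect scale as $L^{5/2}\ell^{1/2}$ while the localization cost scales as $L^{5/2}\ell^{-3/2}$, and these two constraints are not compatible with both being $o(L^2)$ unless one bootstraps a sharper $L^p$ estimate on $|u_N|$ in the strip---a step which, together with the delicate control of the nonlocal magnetic cross-term, is the heart of the argument and the likely source of the non-optimal error exponent.
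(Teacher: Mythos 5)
Your overall architecture is the paper's: cut off the Neumann minimizer near $\partial(L\Omega)$, gauge away the field generated by the discarded exterior mass via the third alternative in \eqref{eq:IMS-chi}, and compare the resulting $H^1_0$ competitor to $E_0$ while controlling the mass defect, the $|\nabla\chi|^2$ cost, and the nonlocal cross-term. But as you yourself note in the last sentence, the estimates you propose do not close: Gagliardo--Nirenberg only gives $\int_{L\Omega}|u_N|^4\lesssim L^4$, whence $\int_{S_\ell}|u_N|^2\lesssim \ell^{1/2}L^{5/2}$, and no choice of $\ell$ makes both the mass defect and the $\ell^{-2}\int_{S_\ell}|u_N|^2$ term $o(L^2)$. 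This is a genuine gap, not a technicality, and your suggested fix (``apply the magnetic $L^4$ bound to $\chi u_N$'') is circular as stated: Lemma~\ref{lem:mag ineq} requires Dirichlet boundary conditions, so you need a cutoff to invoke it, but you need the resulting $L^4$ bound to control the errors that the cutoff creates.

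The paper breaks this circle with two ideas you are missing. First, a \emph{preliminary} IMS localization at a separate scale $\lambda\gtrsim L^{-1/3}$, with a cutoff engineered so that $|\nabla\tilde\chi|^2\lesssim\lambda^{-2}\tilde\chi^{2-2\mu}$: the localization error can then be absorbed by Cauchy--Schwarz into the very term $2\pi\int\tilde\chi^2|u|^4$ that the magnetic bound produces, yielding the self-improving a priori estimate $\int\tilde\chi^2|u|^4+\int|\nabla|\tilde\eta u||^2\lesssim L^2$ (equation \eqref{eq:Dir low a priori}); the mass sitting within distance $\lambda$ of the boundary is then handled by a Poincar\'e inequality on $K_\lambda$, contributing only $C\lambda^2L^2$ to the mass defect. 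Second, even with $\int\tilde\chi^2|u|^4\lesssim L^2$ in hand, a single strip of width $\ell$ is not good enough for the cross-term; the paper runs a pigeonhole argument over $\sim h/\ell$ disjoint shells inside a layer of width $h\sim L^{1-\eps}$ to find one \emph{good} shell with $\int_{K_\ell}|u|^4\lesssim\ell L^{1+\eps}$ (hence $\int_{K_\ell}|u|^2\lesssim\ell L^{1+\eps/2}$), and places the second cutoff there. Only with this selected shell do the cross-term ($\lesssim\ell^{3/2}L^{5/2+\eps}$, estimated by weak Young exactly as you propose), the localization cost ($\lesssim\ell^{-1}L^{1+\eps/2}$), and the $\eps$-loss on the main term become simultaneously optimizable, producing the $O(L^{12/7+\eps})$ error. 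Your scaling step is also slightly off-route: rather than invoking continuity of $E_0$ in $\beta$, the paper keeps $\beta$ fixed and uses Lemma~\ref{lem:scale} to trade the mass defect for a slightly smaller effective domain size $\tilde L$, which is cleaner and avoids an unproved continuity claim.
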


\begin{lemma}[\textbf{Thermodynamic limit for the Dirichlet energy in a general domain}]
	\label{lem:bd domains}
	\mbox{}	\\
	Let $\Omega \subset \R^2$ be a bounded simply connected domain with 
	Lipschitz boundary, then
	\beq
		\label{eq:td limit bd domains}
		\lim_{L \to +\infty} \frac{E_0(L \Omega, \beta, \rho L^2|\Omega|)}{L^2 |\Omega|} 
		= e(\beta,\rho).
	\eeq
\end{lemma}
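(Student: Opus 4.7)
The goal is to establish matching upper and lower bounds
\[
  \liminf_{L\to\infty}\frac{E_0(L\Omega,\beta,\rho L^2|\Omega|)}{L^2|\Omega|} \ \ge\ e(\beta,\rho) \ \ge\ \limsup_{L\to\infty}\frac{E_0(L\Omega,\beta,\rho L^2|\Omega|)}{L^2|\Omega|},
\]
where $e(\beta,\rho)$ is the constant produced by Lemma~\ref{lem:thermo Dir} via the reference unit square $Q$. Both bounds will come from comparing the minimization on $L\Omega$ to minimizations on a mesoscopic tiling by side-$\ell$ squares, choosing $\ell=\ell(L)$ so that $\ell\to\infty$ and $\ell/L\to 0$. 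The Lipschitz hypothesis then makes the uncovered boundary strip of $L\Omega$ have area $O(L\ell)=o(L^2)$.

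For the \emph{upper bound}, I will pack $N$ disjoint translates $\xv_j+\ell Q$ into $L\Omega$ with $N\ell^2=L^2|\Omega|(1-o(1))$ and build the trial state
\[
  u(\xv) := \sum_{j=1}^N e^{i\beta\varphi_j(\xv)} u_j(\xv),
\]
where $u_j$ is a Dirichlet minimizer of $E_0(\ell Q,\beta,m)$ transported to the $j$-th cell, with $m := \rho L^2|\Omega|/N$, so that $\int|u|^2=\rho L^2|\Omega|$ exactly. On each cell the phase $\varphi_j$ will be chosen to cancel the magnetic cross-interactions: since $\curl\bA[|u_k|^2]=2\pi|u_k|^2$ and $\div\bA[|u_k|^2]=0$ (the latter being automatic as $\nabla\cdot\nablap\equiv 0$), the field $\bA[|u_k|^2]$ is simultaneously curl- and divergence-free outside $\supp u_k$, and hence equals $\nabla\Phi_k$ for a harmonic function $\Phi_k$ on the simply connected square $\xv_j+\ell Q$, for each $k\neq j$. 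Setting $\varphi_j=-\sum_{k\neq j}\Phi_k$ reproduces exactly the algebraic cancellation of Lemma~\ref{lem:trial} and yields $\cEAF_{L\Omega,\beta}[u] = N\, E_0(\ell Q,\beta,m)$. Dividing by $L^2|\Omega|$ and invoking the scaling $E_0(\ell Q,\beta,m)=E_0(Q,\beta\ell^2,m/\ell^2)$ from Lemma~\ref{lem:scale} together with \eqref{eq:altern thermo} of Corollary~\ref{cor:scale}, which provides a limit uniform in $m/\ell^2\to\rho$ as $\beta\ell^2\to\infty$, will give $\limsup\le e(\beta,\rho)$.

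For the \emph{lower bound}, given any minimizer $u\in H^1_0(L\Omega)$, I will introduce a smooth partition of unity $\sum_j\chi_j^2\equiv 1$ subordinate to slightly thickened side-$\ell$ squares $\ell Q_j^{\sharp}$ filling $L\Omega$, and apply the IMS formula of Lemma~\ref{lem:IMS} iteratively, using the third version of \eqref{eq:IMS-chi} to re-gauge the localized pieces against a harmonic potential of the external field. This produces
\[
  E_0(L\Omega,\beta,\rho L^2|\Omega|) \ \ge\ (1-\eps)\sum_j E_0(\ell Q_j^{\sharp},\beta,m_j) - \mathcal{R}_{\rm kin} - \mathcal{R}_{\rm gauge},
\]
with $m_j:=\int\chi_j^2|u|^2$ satisfying $\sum_j m_j=\rho L^2|\Omega|$, $\mathcal{R}_{\rm kin}:=\sum_j\int(|\nabla\chi_j|^2+|\nabla\eta_j|^2)|u|^2$, and $\mathcal{R}_{\rm gauge}$ collecting the Cauchy-Schwarz remainders from the IMS overlap regions. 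Using the lower bound $E_0(\ell Q_j^{\sharp},\beta,m_j)/\ell^2\ge \beta (m_j/\ell^2)^2 e(1,1)(1-o(1))$ for cells with $m_j\gtrsim\ell^2$ (again via the scaling of Lemma~\ref{lem:scale} and Lemma~\ref{lem:thermo Dir}), discarding anomalously small cells through the $L^4$-bound of Lemma~\ref{lem:mag ineq}, and summing via Cauchy-Schwarz $\sum_j m_j^2\ge (\sum_j m_j)^2/N$ with $N\ell^2\approx L^2|\Omega|$, will yield $\sum_j E_0(\ell Q_j^{\sharp},\beta,m_j)\ge e(\beta,\rho)\, L^2|\Omega|(1-o(1))$.

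The \emph{main obstacle} is the simultaneous control of $\mathcal{R}_{\rm kin}$ and $\mathcal{R}_{\rm gauge}$: the former demands a wide transition layer for $\chi_j$ (small $|\nabla\chi_j|$), while the latter, proportional to $\beta^2\int_{\rm overlap}|\bA[|\eta_j u|^2\1_{K_j}]|^2|\chi_j u|^2$, demands a narrow one. The balance will be struck by taking the transition width as an appropriate power of $\ell$ (and $\ell$ itself as a power of $L$), and by controlling $\|u\|_{L^2}$ and $\|u\|_{L^4}$ cell-by-cell through the diamagnetic and $L^4$ inequalities of Lemma~\ref{lem:mag ineq} together with the a priori bound $E_0\le CL^2$ of Lemma~\ref{lem:trial}. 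This trade-off is responsible for the (suboptimal) error $O(L^{12/7+\eps})$ anticipated after Theorem~\ref{thm:thermo limit}.
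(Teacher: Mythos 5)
Your upper bound is essentially the paper's: pack disjoint side-$\ell$ squares with $\ell\to\infty$, $\ell/L\to 0$, use Dirichlet minimizers per cell with exact gauge phases cancelling the inter-cell fields, and invoke Lemma~\ref{lem:thermo Dir} plus the scaling laws. That part is fine. The problem is your lower bound, which takes a genuinely different route (IMS localization with a partition of unity on \emph{every} cell) and contains a gap that I do not think can be closed as stated. The paper's lower bound needs no partition of unity at all: since the energy density $|(-i\nabla+\beta\bA[|u|^2])u|^2$ is pointwise non-negative, one may simply restrict the integral to each square $Q_j$ with sharp cut-offs $\uAF_j=\uAF\1_{Q_j}$, losing nothing; an exact gauge transformation (the field $\sum_{k\neq j}\bA[|\uAF_k|^2]$ is curl-free on the simply connected $Q_j$, hence a gradient) then turns each cell's contribution into $\cEAF_{Q_j,\beta}$ evaluated at an $H^1(Q_j)$ function, which is bounded below by the \emph{Neumann} energy $E(Q_j,\beta,M_j)$ with zero error. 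The price — Neumann instead of Dirichlet cells — is exactly what Lemma~\ref{lem:Neu Dir} was proved to pay for, via the identification of the two thermodynamic limits. Your plan never uses Lemma~\ref{lem:Neu Dir} in the lower bound and instead tries to recover Dirichlet cell energies directly, which is where the trouble starts.

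Concretely, your error terms $\mathcal{R}_{\rm kin}$ and $\mathcal{R}_{\rm gauge}$ are summed over $N\sim L^2/\ell^2$ cells, and the only a priori control available is the global bound $\int_{L\Omega}|\uAF|^4\le CL^2$ from Lemmas~\ref{lem:trial} and~\ref{lem:mag ineq}. Nothing prevents the $L^4$ mass from concentrating in the transition layers of many cells simultaneously, and the weak-Young bound on $\beta^2\int|\bA[|\eta_j u|^2\1_{K_j}]|^2|\chi_j u|^2$ then degrades badly: a rough accounting (transition width $d$, $\sum_j\int_{K_j}|u|^4\lesssim L^2$, $\max_j\int_{Q_j}|u|^4\lesssim L^2$) gives a total gauge error that is not $o(L^2)$ for any admissible choice of $d$ and $\ell$ compatible with $\mathcal{R}_{\rm kin}=o(L^2)$. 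The paper's ``good shell'' mean-value argument, which is what rescues the analogous estimate in the proof of Lemma~\ref{lem:Neu Dir}, is performed for a \emph{single} boundary layer of one domain; you would need a simultaneous good-position argument for all $N$ cell boundaries at once, which you have not supplied and which is a substantial additional difficulty. Two smaller points: the $O(L^{12/7+\eps})$ error you cite originates from the single-shell IMS in Lemma~\ref{lem:Neu Dir}, not from the tiling in the present lemma; and your cell-selection threshold $m_j\gtrsim\ell^2$ (i.e.\ $\rho_j\gtrsim 1$) may discard a non-vanishing fraction of the mass when $\rho=O(1)$ — the threshold must tend to $0$ (the paper uses $\rho_j\ge L^{-2\eta+\delta}$) while still ensuring $\ell\sqrt{\beta\rho_j}\to\infty$, so that the discarded mass is $o(L^2)$ and the final Cauchy--Schwarz step $\sum_j\rho_j^2\ell^2\ge(1-o(1))\rho^2L^2|\Omega|$ goes through.
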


Theorem \ref{thm:thermo limit} immediately follows from these three results: 
combining Lemma~\ref{lem:thermo Dir} with Lemma~\ref{lem:Neu Dir} one obtains 
the existence of the thermodynamic limit for squares. In order to derive the 
result for general domains, one then uses Lemma~\ref{lem:bd domains} together 
with Lemma~\ref{lem:Neu Dir}. 
Notice that the proof of Lemma~\ref{lem:bd domains} 
requires only Lemma~\ref{lem:thermo Dir} and \ref{lem:Neu Dir} for squares
as key ingredients.

	\begin{figure}
		\begin{tikzpicture}
			\draw[thick] (0,0) rectangle (4.5, 4.5);
			\draw (0,1) rectangle (3.5,4.5);
			\draw (0,2.75) -- (3.5,2.75);
			\draw (1.75,1) -- (1.75,4.5);
			\node [above left] at (0,0.1) {$L_m Q$};
			\node [above left] at (1.5,2.8) {$L_n Q_1$};
			\node [above left] at (3.0,2.8) {$\cdots$};
			\node [above left] at (1.5,1.05) {$L_n Q_j$};
			\node [above left] at (3.5,1) {$L_n Q_{q_{nm}^2}$};
			\node [above left] at (5.3,0.6) {$u_0$};
			\draw (4.3,0.3) [xscale=2.1,<-] arc(-90:10:.30);
			\node [above left] at (-0.15,2.2) {$u_j$};
			\draw (0.7,1.9) [xscale=-4.0,<-] arc(-90:5:.30);
			\draw [<->] (3.5,2) -- (4.5,2);
			\node [above right] at (3.55,2) {$k_{nm}$};
		\end{tikzpicture}
		\caption{Filling the square $L_m Q$ with smaller squares $L_n Q_j$.}
		\label{fig:square-filling}
	\end{figure}
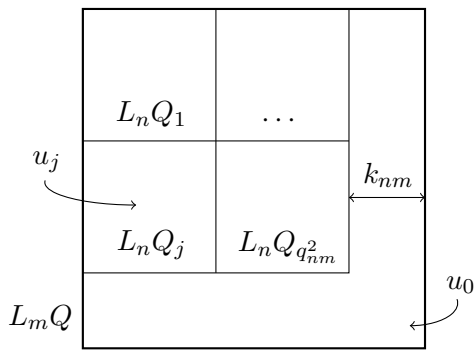

\begin{proof}[Proof of Lemma~\ref{lem:thermo Dir}]
From Lemma~\ref{lem:trial} we know that the sequence of energies per unit area 
has both an upper and lower limit. 
We denote $(L_n)_{n\in \NN}$ and $(L_m)_{m\in \NN}$ two increasing sequences of 
positive real numbers such that $L_n \to \infty$, $L_m\to \infty$ and
\begin{align*}
 \frac{E_0 (L_n Q, \beta, \rho L_n ^2 )}{L_n ^2} & \xrightarrow[n\to \infty]{} \liminf_{L\to \infty} \frac{E_0 (L Q, \beta, \rho L ^2 )}{L ^2},\\
 \frac{E_0 (L_m Q, \beta, \rho L_m ^2 )}{L_m ^2} &\xrightarrow[m\to \infty]{}  \limsup_{L\to \infty} \frac{E_0 (L Q, \beta, \rho L ^2 )}{L ^2}.
\end{align*}
For each $n$, there must exist a sequence of integers 
$$q_{nm}\xrightarrow[m\to \infty]{} +\infty$$
such that, for $m$ large enough, e.g., $ m \gg n $,
$$ 
	L_m = q_{nm} L_n + k_{nm}, \qquad 0\leq k_{nm} < L_n.
$$
We then build a trial state for $E_0 (L_{m} Q, \beta, \rho L_{m} ^2 )$ as 
follows (see Figure~\ref{fig:square-filling}). 
The square $L_{m} Q$ must contain $q_{nm} ^2$ disjoint squares of side-length 
$L_n$, that we denote $L_n Q_j, j=1, \ldots, q_{nm}^2$. Then we pick $u_j$ a 
minimizer of $E_0 (L_n Q_j, \beta, \rho L_n ^2 )$ and remark that by definition,
$$ 
	\sum_{k=1,k\neq j} ^{q_{nm}^2} \curl \bA [|u_k| ^2] = 0,
	\quad \mbox{ in } L_n Q_j.
$$
Thus there exists a gauge phase $\phi_j$ on the simply connected domain
$L_n Q_j$ such that 
$$ 
	\sum_{k=1,k\neq j} ^{q_{nm}^2} \bA [|u_k| ^2] = \nabla \phi_j,
	\quad  \mbox{ in } L_n Q_j.
$$
Similarly, there exists $\phi_0$ on the remaining part of the domain
(which can be arranged to be simply connected as well, as in Figure~\ref{fig:square-filling})
such that
$$ 
	\sum_{k=1} ^{q_{nm}^2} \bA [|u_k| ^2] = \nabla \phi_0,	
	\quad \mbox{ on } L_{m} Q \setminus \bigcup_{j=1} ^{q_{nm}^2} L_n Q_j.
$$
We define the trial state as (see the proof of Lemma~\ref{lem:trial})
$$ 
	u := \sum_{j=1} ^{q_{nm}^2} u_j e^{-i\beta\phi_j} + u_0 e^{-i\beta\phi_0}
$$
where $u_0$ is a function with compact support in 
$L_{m} Q \setminus \bigcup_{j=1} ^{q_{nm}^2} L_n Q_j$ satisfying 
$$ 
	\int_{L_{m}Q} |u_0| ^2 = \rho L^2_{m} - q_{nm}^2 \rho L_n ^2.
$$
By 
Lemma~\ref{lem:trial}, we can construct $u_0$ 
such that 
$$
	\int_{L_m Q} \left|\left( \nabla + i\beta\bA [|u_0|^2] \right) u_0 \right| ^2 
	\leq C \bigl( L^2_{m} - q_{nm}^2 L_n ^2 \bigr) 
	\leq 2C L_m k_{nm}
$$
(where $C>0$ may depend on $\beta$ and $\rho$).
The function $u$ is an admissible trial state on $L_m Q$ because it is in $H^1$ 
on each subdomain, and continuous across boundaries 
due to the Dirichlet boundary conditions satisfied by each $u_j$. 
Computing the energy we have 
\begin{align*}
 \cEAF_{L_m Q,\beta} [u] &= \sum_{j=0} ^{q_{nm}^2} \int_{L_m Q} \left|e ^{-i\phi_j} \left( \nabla + i \beta\bA[ |u| ^2] - i \beta\nabla \phi_j \right) u_j \right| ^2 
 \\&=   \sum_{j=0} ^{q_{nm}^2} \int_{L_m Q} \left|\left( \nabla + i \beta\bA [ |u_j| ^2] \right) u_j \right| ^2 
 \\&= \sum_{j=1} ^{q_{nm}^2}  \cEAF_{L_n Q,\beta} [u_j] 
 	+ \int_{L_m Q} \left|\left( \nabla + i \beta\bA [ |u_0| ^2] \right) u_0 \right| ^2 
 \\&= q_{nm}^2 E_0 (L_n Q, \beta, \rho L_n ^2 ) + O(L_{m}k_{nm}),
\end{align*}
with
$$
	q_{nm}^2 = \frac{L_m^2}{L_n^2}\left( 1 - \frac{k_{nm}}{L_m} \right)^2.
$$
Since $u$ has by definition mass $\rho L_{m} ^2$, it follows from the 
variational principle that
$$ 
	\frac{E_0 (L_{m} Q, \beta, \rho L_{m} ^2 )}{L_{m} ^2} 
	\leq \frac{E_0 (L_n Q, \beta, \rho L_n ^2 )}{L_n ^2} \left(1 + O\left(\frac{k_{nm}}{L_{m}} \right)\right) + O\left(\frac{k_{nm}}{L_{m}} \right).
$$
Passing to the limit $m\to \infty$ first and then $n\to \infty$ yields
$$
	\limsup_{L\to \infty} \frac{E_0 (L Q, \beta, \rho L ^2 )}{L ^2} 
	\leq \liminf_{L\to \infty} \frac{E_0 (L Q, \beta, \rho L ^2 )}{L ^2},
$$
and thus the limit exists.

Additionally, we have by the bound \eqref{eq:mag L4},
$$
	\frac{1}{L^2} \cEAF_{LQ,\beta} [u] 
	\ge \frac{2\pi\beta}{L^2} \int_{LQ} |u|^4
	\ge \frac{2\pi\beta}{L^4} \left( \int_{LQ} |u|^2 \right)^2
$$
for any $u \in H^1_0(LQ)$, proving that $e(\beta,\rho) \ge 2\pi\beta\rho^2$.
\end{proof}

\begin{proof}[Proof of Lemma~\ref{lem:Neu Dir}] 
	Since $H^1_0(\Omega) \subseteq H^1(\Omega)$ we obviously have 
	$$E_0(\Omega,\beta,M) \ge E(\Omega,\beta,M).$$
	Only the second inequality in the statement requires some work. 
	Let $u \in H^1(L\Omega)$ denote the minimizer of $\cEAF_{L\Omega,\beta}[u]$ 
	(see Proposition~\ref{prop:existence} of the Appendix) with mass 
	$$\int_{L\Omega} |u|^2 = \rho L^2 |\Omega|$$
	and no further constraint (thus satisfying Neumann boundary conditions). 
	In the sequel we take $\beta = 1$ and $|\Omega|=1$ to simplify the notation. 

	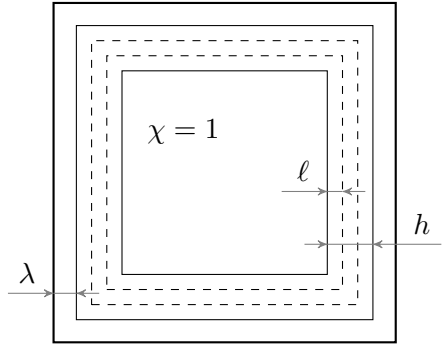
\begin{figure}
		\begin{tikzpicture}[>=stealth']
			\draw [thick] (0,0) rectangle (4.5, 4.5);
			\draw (0.3,0.3) rectangle (4.2,4.2);
			\draw [dashed] (0.5,0.5) rectangle (4,4);
			\draw [dashed] (0.7,0.7) rectangle (3.8,3.8);
			\draw (0.9,0.9) rectangle (3.6,3.6);
			\node [above left] at (-0.1,0.65) {$\lambda$};
			\draw [arrows=->,gray] 	(-0.6,0.65) -- (0.0,0.65);
			\draw [gray] 			(0.0,0.65) -- (0.3,0.65);
			\draw [arrows=<-,gray] 	(0.3,0.65) -- (0.6,0.65);
			\node [above right] at (4.6,1.3) {$h$};
			\draw [arrows=->,gray] 	(3.3,1.3) -- (3.6,1.3);
			\draw [gray] 			(3.6,1.3) -- (4.2,1.3);
			\draw [arrows=<-,gray] 	(4.2,1.3) -- (5.1,1.3);
			\node [above left] at (3.5,2.0) {$\ell$};
			\draw [arrows=->,gray] 	(3.0,2.0) -- (3.6,2.0);
			\draw [gray] 			(3.6,2.0) -- (3.8,2.0);
			\draw [arrows=<-,gray] 	(3.8,2.0) -- (4.1,2.0);
			\node [above right] at (1.1,2.5) {$\chi = 1$};
		\end{tikzpicture}
		\caption{Localizing on thin shells for the square $LQ$.}
		\label{fig:shells}
	\end{figure}
		
	We will need to make an IMS localization on a small enough region, and
	therefore consider a division of $L\Omega$ into a bulk region 
	surrounded by thin shells close to the boundary,
	where we will be using several different length scales 
	$L^{-1/3} \lesssim \lambda \ll 1 \ll L$
	and $L^{-1} \ll \ell \ll h \ll L$
	(see Figure~\ref{fig:shells} for the case of $\Omega = Q$ a square).
	
	We shall use Lemma~\ref{lem:IMS} a first time at distance $\lambda$ from 
	the boundary to deduce some useful a priori bounds. 
	Next, using a mean-value argument we show that, within a window of 
	thickness $h$ further from the boundary, there must exist one particular 
	shell of thickness $\ell$ where we have a good control on the mass and energy. 
	Finally we perform a second IMS localization with the truncation located in 
	this particular shell. This yields a lower bound in terms of the Dirichlet 
	energy in the bulk region, plus error terms that we can control using the a 
	priori bounds and in particular the good control on mass and energy in the 
	second localization shell.
	
	\medskip
	
	{\bf Step 1, a priori bounds.} 
	Let $\delta_\Omega(\xv) := \dist(\xv,\partial(L\Omega))$
	denote the distance function to the boundary, 
	which is Lipschitz and satisfies $|\nabla\delta_\Omega| \le 1$ a.e.
	We make a first partition of unity 
	$$\tilde{\chi}^2 + \tilde{\eta}^2=1$$
	such that $\tilde{\chi}$ varies
	smoothly from $1$ to $0$ on a shell $K_\lambda$ of width $\lambda$ 
	closest to the boundary of $L\Omega$, i.e.
	$K_\lambda := \{ \xv \in L\Omega : \delta_\Omega(\xv) < \lambda \}$.
	One may note that it is possible to construct these functions so as to satisfy
	$$
		|\nabla\tilde\chi| \le c \lambda^{-1} \tilde\chi ^{1-\mu}, 
		\quad |\nabla\tilde\eta| \le c\lambda ^{-1} \tilde\chi ^{1-\mu},
	$$
	for some arbitrarily small $\mu >0$, independent of $\lambda$, e.g., by 
	taking, in $\supp \tilde\chi \cap \supp \tilde\eta$, $\tilde\chi = f^{a}$ 
	and $\tilde\eta = \sqrt{1 - \tilde\chi^2}$ for $a$ large and some smooth 
	function $0 \leq f \leq 1$ varying on the right length scale and 
	reflection symmetric. 
	Then, by Lemma~\ref{lem:trial} and Lemma~\ref{lem:IMS},
	\begin{align}\label{eq:L4-norm-smallshell}
		CL^2 \ge \cEAF_{L\Omega,1}[u] 
		&\ge \int_{L\Omega} \Bigl( 2\pi\tilde\chi^2|u|^4 + |\nabla |\tilde\eta u| |^2 
			- \left( |\nabla\tilde\chi|^2 + |\nabla\tilde\eta|^2 \right)|u|^2 \Bigr) \nonumber \\
			&\ge \int_{L\Omega} \Bigl( 2\pi\tilde\chi^2|u|^4 + |\nabla |\tilde\eta u| |^2 
			- C \lambda ^{-2} \1_{K_\lambda} \tilde\chi ^{2-2\mu} |u|^2 \Bigr).
	\end{align}
	We bound the unwanted negative term as follows:
	\begin{align*}
	\lambda ^{-2} \int_{L\Omega} \1_{K_\lambda} \tilde\chi ^{2-2\mu} |u|^2 
	&\leq \lambda ^{-2} \left(\int_{K_\lambda} \tilde\chi ^{2-4\mu}\right)^{1/2} 
		\left( \int_{K_\lambda} \tilde\chi ^2 |u| ^4 \right) ^{1/2}
	\\&\leq C \lambda ^{-3/2} L ^{1/2} \left( \int_{K_\lambda} \tilde\chi ^2 |u| ^4   \right) ^{1/2} 
	\\&\leq C \delta L\lambda ^{-3} + C\delta ^{-1} \int_{L\Omega} \tilde\chi ^2 |u| ^4,
	\end{align*}
	with $\delta$ a fixed, large enough, constant. 
	Combining with~\eqref{eq:L4-norm-smallshell} we deduce 
	\begin{equation}\label{eq:Dir low a priori}
	 \int_{L\Omega} \left( 2\pi\tilde\chi^2|u|^4 + |\nabla |\tilde\eta u| |^2 \right) 
	 \leq C L^2 + C L \lambda^{-3} \leq C L ^2
	\end{equation}
	since we have chosen $\lambda \gtrsim L^{-1/3}$.
	We note that this bound implies for the mass in a shell $K_\ell$
	of thickness $\ell$ in $L\Omega \setminus K_\lambda$ 
	\begin{equation}\label{eq:smallshell mass rough}
		\int_{K_\ell} |u|^2 
		\le |K_\ell|^{1/2}\left( \int_{K_\ell} \tilde{\chi}^2 |u|^4 \right)^{1/2} 
		\lesssim \ell^{1/2} L^{3/2}.
	\end{equation}

	\medskip
	
	{\bf Step 2, finding a good shell.} We now select a region where the bounds~\eqref{eq:Dir low a priori} and~\eqref{eq:smallshell mass rough} can be improved. Consider dividing $L\Omega \setminus K_\lambda$ 
	into shells of thickness $\ell$ that form a layer closest to the shell $K_\lambda$,
	of total thickness $h \sim L^{1-\eps} \gg \ell$
	(again, see Figure~\ref{fig:shells}).
	Hence, we have 
	$$N_s := h/\ell \gg 1$$ such shells in the layer.
	Denote by $N_M$ the number of such shells $K_\ell$ with 
	$\int_{K_\ell} |u|^4 \ge M$.
	If $N_M < N_s$, there must exist a shell $K_\ell$ with
	$\int_{K_\ell} |u|^4 \le M$.
	But, using~\eqref{eq:Dir low a priori} and the fact that all the shells are included in the region where $\tilde\chi = 1$, we have
	$$
		M N_M \le \int_{L\Omega} \tilde\chi ^2 |u|^4 \le CL^2.
	$$
	We can thus ensure that $N_M < N_s$ by setting
	$$N_s = h/\ell \sim L^{1-\eps}\ell^{-1} \sim L^2/M,$$
	i.e. taking $M \sim \ell L^{1+\eps}$.
	Hence we have found a shell $K_\ell$ with
	\begin{equation}\label{eq:ener good shell}
		\int_{K_\ell} |u|^4 \leq C \ell L^{1+\eps},
	\end{equation}
	and thus
	\begin{equation}\label{eq:smallshell mass}
		\int_{K_\ell} |u|^2 \leq C (\ell L)^{1/2} (\ell L^{1+\eps})^{1/2} 
		= C \ell L^{1+\eps/2},
	\end{equation}
	improving \eqref{eq:smallshell mass rough}.

	\medskip
	
	{\bf Step 3, IMS localization in the good shell.} 
	We now perform a new magnetic localization on this $K_\ell$.
	We pick a partition $\chi^2+\eta^2=1$, 
	s.t. $\chi$ varies smoothly from $1$ to $0$ outwards on $K_\ell$, 
	so that $\chi=1$ resp. $\eta=1$ on the inner resp. outer component of $K_{\ell}^c$.
	Then, using Lemma~\ref{lem:IMS}, we have
	\bml{\label{eq:smallshell energy}
		\cEAF_{L\Omega,1}[u] 
		\ge (1-\delta)\cEAF_{L\Omega,1}[\psi] 
			- (\delta^{-1}-1) \int_{L\Omega} |\bA[|\eta u|^2\1_{K}]|^2 |\chi u|^2
			\\
			- \int_{K_\ell} (|\nabla\chi|^2 + |\nabla\eta|^2)|u|^2,
	}
	for any $\delta \in (0,1)$, where we have denoted 
	$\psi = \chi e^{i\phi}u$
	and $K = \supp\chi \cap \supp\eta \subseteq K_\ell$.
	Since $\psi$ is compactly supported in $L\Omega$ we have for the first term
	$$
		\cEAF_{L\Omega,1}[\psi] 
		\ge E_0 \lf(L\Omega,1,\lf\|\psi\ri\|_{L^2(L\Omega)}^2\ri) 
		= E_0 \lf(L\Omega,1,\lf\|\chi u\ri\|_{L^2(L\Omega)}^2\ri).
	$$
	Recalling the scaling relation~\eqref{eq:scale ener} 
	(taking $\mu = \lambda^{-1} = \tilde{L}/L$)
	and denoting 
	$$ 
		M = \int_{L\Omega} \chi ^2 |u| ^2,
		\qquad
		\tilde{L} = \sqrt{M/\rho},
	$$
	we have
	\begin{equation}\label{eq:scale proof thermo}
		E_0(L\Omega,1,M) 
		= \frac{M}{\rho L^2} E_0 (\tilde{L}\Omega,1,\rho \tilde{L}^2). 
	\end{equation}
	We need to estimate the deviation of the mass $M$ of $\chi^2 |u|^2$
	from $\rho L^2 = \int_{L\Omega}|u|^2$:
	\begin{align}\label{eq:mass shell}
	 \left|\rho L ^2 -  \int_{L\Omega} \chi^2 |u| ^2 \right| &=\int_{L\Omega} \eta ^2 |u| ^2 = \int_{L\Omega} \tilde\eta ^2 |u| ^2 + \int_{L\Omega} \tilde\chi ^2 \eta^2 |u|^2\nonumber\\
	 &\leq C \lambda ^2 \int_{K_\lambda} |\nabla |\tilde \eta u| | ^2 + \left( \int_{L\Omega} \eta^2 \tilde\chi ^2 \right) ^{1/2} \left( \int_{L\Omega} \tilde\chi ^2 |u| ^4 \right) ^{1/2}\nonumber
	 \\&\leq C \lambda ^2 L ^2 + C h ^{1/2} L^{3/2} \ll L^2. 
	\end{align}
	Here we have used a Poincar\'e inequality to control the 
	$\tilde\eta ^2 |u| ^2$ term, making use of the fact that this function 
	vanishes at the inner boundary of $K_\lambda$. 
	It is not difficult 
	(see the proof methods of 
	\cite[Theorem~1 and 2 in Section~5.8.1]{Evans-98}
	and \cite[Theorem~8.11]{LieLos-01})
	to realize that the constant involved in this 
	inequality applied on the set $K_\lambda$ can be taken to be proportional 
	to $\lambda ^2$. Note that $ \tilde L \to \infty $, if $ L \to \infty $, 
	thanks to \eqref{eq:mass shell}.
	Hence, inserting the above estimate in~\eqref{eq:scale proof thermo}, we get 
	\begin{equation}\label{eq:main term}
	 \frac{ \cEAF_{L\Omega,1}[\psi] }{L^2}
	 \geq \frac{ E_0(L\Omega,1,M) }{L^2} 
	 = \frac{M^2}{(\rho L^2)^2} \frac{ E_0(\tilde{L}\Omega,1,\rho\tilde{L}^2) }{\tilde{L}^2}
	 = (1 + o(1)) \frac{ E_0(\tilde{L}\Omega,1,\rho\tilde{L}^2) }{\tilde{L}^2}.
	\end{equation}

	Then, there only remains to control the error terms in~\eqref{eq:smallshell energy}: 
	Using the H\"older and generalized Young inequalities  
	($\|\cdot\|_{p,w}$ denotes the weak-$L^p$ norm~\cite[Theorem~4.3,~Remarks]{LieLos-01}),
	\begin{multline*}
		\int_{L\Omega} \left| \bA[|\eta u|^2\1_{K}] \right|^2 |\chi u|^2
		\le \| \nabla w_0 \ast |\eta u|^2\1_{K} \|_{2p}^2 \|\chi u\|_{2q}^2
		\le c\|\nabla w_0\|_{2,w}^2 \|\eta u \1_{K}\|_{2r}^4 \|\chi u\|_{2q}^2 \\
		\le C\left( \int_{K_\ell} |\eta u|^{\frac{4q}{2q-1}} \right)^{\frac{2q-1}{q}} 
			\left( \int_{L\Omega} |\chi u|^{2q} \right)^{\frac{1}{q}},
	\end{multline*}
	where 
	$$\frac1p +\frac1q =1 \ \mbox{ and } \ 1+\frac{1}{2p} = \frac12 + \frac1r,$$
	i.e.,
	$$r = \frac{2q}{2q-1} \in (1,2) \ \mbox{ with } \ q \in (1,\infty).$$
	We can take $q=2$ and insert~\eqref{eq:Dir low a priori}-\eqref{eq:ener good shell} to obtain
	\begin{multline*}
		\left( \int_{K_\ell} |\eta u|^{8/3} \right)^{3/2} 
			\left( \int_{L\Omega} |\chi u|^4 \right)^{1/2}
		\le |K_\ell|^{1/2} \int_{K_\ell} |\eta u|^4 \left( \int_{L\Omega} |\chi u|^4 \right)^{1/2} \\
		\lesssim (\ell L)^{1/2} \ell L^{1+\eps} (L^2)^{1/2}
		= \ell^{3/2} L^{5/2+\eps}.
	\end{multline*}
	The last term in \eqref{eq:smallshell energy} is, 
	using \eqref{eq:smallshell mass}, bounded by
	$$
		c\ell^{-2} \int_{K_\ell} |u|^2 \lesssim \ell^{-1} L^{1+\eps/2}.
	$$
	There only remains to optimize the error terms in~\eqref{eq:smallshell energy}: 
	$$
		\delta E_0\lf(L\Omega,1,\lf\|\psi\ri\|_{L^2(L\Omega)}^2\ri)
			+ c_1(\delta^{-1}-1) \ell^{3/2} L^{5/2+\eps}
			+ c_2\ell^{-1} L^{1+\eps/2} 
		\le c_3 \delta L^2 + c_4 \delta^{-\frac{2}{5}} L^{\frac{8}{5}+\frac{7}{10}\eps},
	$$
	where we have picked $ \ell = L^{-3/5-\eps/5} \delta^{2/5} $, assuming that $ \delta \ll 1 $, as it will be.
	Thus, optimizing now over $ \delta $, i.e., taking $\delta \sim L^{-2/7+\eps/2}$, we have the bounds
	\beq
		\label{eq:error estimate}
		\frac{E_0(L\Omega,1,\rho L^2)}{L^2}
		\ge \frac{E(L\Omega,1,\rho L^2)}{L^2}
		\ge \frac{E_0\lf(L\Omega,1,\lf\|\psi\ri\|_{L^2(L\Omega)}^2\ri)}{L^2} - cL^{-\frac{2}{7}+\frac{\eps}{2}}.
	\eeq
	Combining with~\eqref{eq:main term} and passing to the liminf completes the proof.
\end{proof}

\begin{proof}[Proof of Lemma \ref{lem:bd domains}]
\label{sec:exist general domain} The result is proven as usual by comparing suitable upper and lower 
bounds to the energy.

\medskip

{\bf Step 1: upper bound.} We first cover $ L \Omega  $ with squares $ Q_j $, 
$ j = 1, \ldots, N_{\ell} $, of side length $ \ell = L^{\eta} $, 
$ 0 < \eta < 1 $, retaining only the squares $ Q_j $ completely contained in 
$ L \Omega $. 
One can estimate the area not covered by such squares as
\beq
	\lf| \Omega \setminus \lf( \bigcup_{j = 1}^{N_{\ell}} Q_j \ri) \ri| \leq C \ell L = o(L^2).
\eeq
Then we define the trial state
\beq
	u(\xv) : = \sum_{j = 1}^{N_{\ell}} u_j e^{- i\beta \phi_j},
\eeq
where 
\beq
	u_j(\xv) : = u_0(\xv - \xv_j) \1_{Q_j},
\eeq
with $ u_0 $ a minimizer of the Dirichlet problem 
with mass $ \rho L^2|\Omega|/N_{\ell} $
in a square $ Q $ with side length $ \ell $ centered at the origin,
and $\xv_j$ the center point of $Q_j$. 
The phases $ \phi_j $ are chosen in such a way that (see the proof of Lemma~\ref{lem:trial} again) 
$$ 
	\sum_{k=1,k\neq j}^{N_{\ell}} \bA [|u_k| ^2] = \nabla \phi_j,	\quad  \mbox{ in } Q_j.
$$
The existence of such phases is indeed guaranteed by the fact that
$$ 
	\sum_{k=1,k\neq j} ^{N_{\ell}} \curl \bA [|u_k| ^2] = 0,	\quad \mbox{ in }  Q_j.
$$
Hence
\bdm
	\cEAF_{L\Omega,\beta}[u] 
	= \sum_{j=1}^{N_{\ell}} \cEAF_{Q_j,\beta}[u_j] 
	= \sum_{j =1}^{N_{\ell}} E_0(\ell Q, \beta, \rho L^2|\Omega| N^{-1}_{\ell}),
\edm 
which implies
\begin{align} \label{eq:upper bound bd domains}
	\frac{E_0(L\Omega,\beta, \rho L^2)}{L^2|\Omega|} 
	&\leq \frac{1}{L^2|\Omega|} \sum_{j =1}^{N_{\ell}} E_0(\ell Q, \beta,\rho L^2|\Omega| N^{-1}_{\ell}) \\
	&= \frac{\ell^2}{L^2|\Omega|} \sum_{j =1}^{N_{\ell}} E_0(\ell Q, \beta, (1 + o(1)) \rho \ell^2)/\ell^2 
	= (1 + o(1)) e(\beta,\rho),
\end{align}
where we have estimated
\bdm
	N_{\ell} = \frac{|\bigcup_j Q_j|}{|Q_j|} 
	= \frac{(1 + o(1)) L^2|\Omega|}{\ell^2},
\edm
and used Lemma~\ref{lem:thermo Dir}. 
Notice that, thanks to the assumption on 
$\eta$, we have $ \ell \to \infty $, which is crucial in order to apply 
Lemma~\ref{lem:thermo Dir}.

\medskip

{\bf Step 2: lower bound.} 
We again cover $L\Omega$ with squares $Q_{j=1,\ldots,N_\ell}$, 
this time keeping the full covering but still having 
$\ell^2 N_\ell / |L\Omega| \to 1$ as $L \to \infty$.
We pick a minimizer $\uAF = \uAF_L \in H^1_0(L\Omega)$ of $\cEAF_{L\Omega,\beta}$, with mass $\rho L^2|\Omega|$, and set
\beq
	\uAF_j : = \uAF \1_{Q_j},	
	\qquad \rho_j : = \dashint_{Q_j} \big| \uAF(\xv) \big|^2 \,\diff \xv.
\eeq
The idea of the proof is reminiscent of that in the upper bound part: we gauge away the magnetic interaction between the cells, and this leads to a lower bound in terms of the Neumann energy of the cells.

Note that $\uAF_j \in H^1(Q_j)$ for each $j$, and
$$
	\sum_{j = 1}^{N_{\ell}} \rho_j \ell^2 
	= 
	\rho L^2|\Omega|.
$$
Before estimating the energy we need to distinguish between squares with 
sufficient mass and squares which will not contribute to the energy
to leading order. 
We thus set
\beq
	\mathcal{Q}_{L} : = \lf\{ Q_j,  j \in \{1, \ldots, N_{\ell} \} \: : \: \rho_j \geq L^{-2\eta+\delta} \ri\},
\eeq
for some $0 < \delta < 2\eta$. 
Note that the mass concentrated outside cells $\mathcal{Q}_{L}$ is relatively small:
\beq
	\sum_{Q_j \notin \mathcal{Q}_{L}} \rho_j \ell^2
	\leq C \ell^{2} N_\ell L^{-2\eta+\delta} = o(L^2).
\eeq

We can now estimate, using the gauge covariance of the functional on each $Q_j$,
\begin{align}
	E_0(L\Omega,\beta,\rho L^2|\Omega|)
	&= \cEAF_{L\Omega,\beta} [\uAF] 
	\geq \sum_{j=1}^{N_{\ell}} \int_{Q_j} \left| \left( -i\nabla + \beta \bA \left[|\uAF| ^2\right]\right) \uAF \right|^2 
	\nn
	\\&= \sum_{j=1}^{N_{\ell}} \int_{Q_j} \left| \left( -i\nabla + \beta \bA \left[|\uAF_j e^{i\beta\phi_j}| ^2\right]\right) \uAF_j e^{i\beta\phi_j} \right|^2  \nn
	\\&\geq \sum_{j=1}^{N_{\ell}} \rho_j\ell^2 \frac{E(\ell Q,\beta, \rho_j \ell^2)}{\rho_j\ell^2}  
	\geq \sum_{j : Q_j \in \mathcal{Q}_{L}} \rho_j^2\ell^2 \frac{E(\ell_j Q, \beta, \ell_j^2)}{\ell_j^2},
\end{align}
where $ \phi_j $ satisfies (observe that the left-hand side is curl-free on $Q_j$)
$$ 
	\sum_{k=1,k\neq j}^{N_{\ell}} \bA \left[|\uAF_k| ^2\right] = \nabla \phi_j,	\quad  \mbox{ in } Q_j,
$$
and in the last step we used the scaling law \eqref{eq:scale ener}
with $\mu = 1/\lambda = \sqrt{\rho_j}$.
Also,
$$ 
	\ell_j := \sqrt{\rho_j} \ell \ge L^{\delta/2}
	\xrightarrow[L \to \infty]{} + \infty
$$
uniformly in $j$ for cells $ Q_j \in \mathcal{Q}_{L} $,
and we thus conclude by Lemma~\ref{lem:thermo Dir} and \ref{lem:Neu Dir} that
\beq
	\label{eq:lower bound bd domains}
	\frac{1}{L^2|\Omega|} E_0(L\Omega,\beta,\rho L^2|\Omega|) 
	\geq (1-o(1)) \frac{e(\beta,1)}{L^2|\Omega|} \sum_{j : Q_j \in \cQ_{L}} \rho_j^2 \ell^2
	= (1-o(1)) \frac{e(\beta,1)}{L^2|\Omega|} \int_{\cQ} \bar\varrho^2,
\eeq
where we consider here the step function 
$\bar\varrho := \sum_{j : Q_j \in \cQ_L} \rho_j \1_{Q_j}$
and denote by $ \mathcal{Q} $ the union of the cells $\mathcal{Q}_{L}$.
It remains then to observe that
the constrained minimum
\bdm
	B = \min \lf\{ \int_{\mathcal{Q}} \varrho^2 \ : \ 0 \le \varrho \in L^2(\mathcal{Q}), \: \int_{\mathcal{Q}} \varrho = (1 - o(1))\rho L^2|\Omega| \ri\}.
\edm
is achieved by $\varrho$ constant and thus 
$$
	\int_{\cQ} \bar\varrho^2 \geq 
	B = \bigl( (1-o(1))\rho L^2|\Omega| \bigr)^2 |\mathcal{Q}|^{-1} 
	\geq (1-o(1))\rho^2 L^2|\Omega|.
$$ 
Inserting this in~\eqref{eq:lower bound bd domains} and using $\rho^2 e(\beta,1) = e(\beta,\rho)$ leads to the desired energy lower bound.
\end{proof}

\section{Proofs for the trapped gas}\label{sec:trap}

\subsection{Local density approximation: energy upper bound}\label{sec:up bound}

Here we prove the upper bound corresponding to~\eqref{eq:ener CV}:
\begin{equation}\label{eq:up bound}
	\EAF_\beta \leq \ETF_\beta \left( 1+o(1)\right),
	\qquad \text{as} \ \beta \to \infty.
\end{equation}
We start by covering the support of $ \rhoTF_{\beta} $ with squares $Q_j$, 
$j=1, \ldots, N_\beta $, centered at points $\xv_j$ and of side length $L$ with 
\begin{equation}\label{eq:up bound squares}
	L = \beta ^{\eta}, \qquad - \frac{s}{2(s + 2 )} < \eta < \frac{1}{s+2}.
\end{equation}
We choose the tiling in such a way that for any $ j = 1, \ldots, N_{\beta} $, $ Q_j \cap \supp(\rhoTF_{\beta}) \neq \emptyset $. 
The upper bound on $L$ indicates that the length scale of the tiling is much 
smaller than the size of the TF support. The lower bound ensures that it is 
much larger than the scale on which we expect the fine structure of the minimizer to live.

Our trial state is defined similarly as in the proof of Lemma~\ref{lem:bd domains}: 
\begin{equation}\label{eq:trap trial}
	\utest:= \sum_{j=1} ^{N_{\beta}} u_j e^{-i\beta\phi_j} 
\end{equation}
where $u_j$ realizes the Dirichlet infimum
$$ E_0 (Q_j, \beta, M_j) := \min \left\{ \cEAF_j [u] : u \in H^1_0 (Q_j), \: \int_{Q_j} |u| ^2 = M_j \right\},$$
where of course
$$
	\cEAF_j [u] = \cEAF_{Q_j,\beta}[u]
	= \int_{Q_j} \left| \left( -i\nabla + \beta \bA[|u|^2] \right) u \right| ^2
$$
and we set 
\begin{equation}\label{eq:def local mass}
	M_j = \int_{Q_j} |u_j| ^2:= \int_{Q_j} \rhoTF_\beta, 
	\qquad \rho_j := M_j/L^2 = \dashint_{Q_j} \rhoTF_\beta.
\end{equation}
The phase factors in~\eqref{eq:trap trial} are again defined so as to gauge away 
the interaction between cells, i.e., 
$$ \sum_{k=1,k\neq j} ^{N_\beta} \bA [|u_k| ^2] = \nabla \phi_j,	\quad \mbox{ in }  Q_j.$$
This construction yields an admissible trial state since $\utest$ is locally in 
$H^1$, continuous across cells by being zero on the boundaries, and clearly
$$
	\int_{\R^2} |\utest|^2 = \sum_{j =1}^{N_\beta} \int_{Q_j} |u_j|^2 
	= \sum_{j = 1}^{N_{\beta}} \int_{Q_j} \rhoTF_\beta = 1.
$$
Similarly as in the proofs of Lemmas~\ref{lem:trial} and~\ref{lem:bd domains} 
we thus obtain
\begin{equation}\label{eq:split up bound}
	\EAF_\beta \leq \cEAF_\beta [\utest] = \sum_{j = 1}^{N_{\beta}} \cEAF_j [u_j] + \int_{\R^2} V |\utest| ^2 
	= \sum_{j =1}^{N_{\beta}} E_0 (Q_j,\beta,M_j) + \int_{\R^2} V |\utest| ^2.
\end{equation}
Our task is then to estimate the right-hand side. 

We denote, for some $\eps >0$ small enough 
$$ S_\eps = \left\{\xv \in \supp(\rhoTF_{\beta}) \: \big| \: \rhoTF_\beta(\xv) \geq \beta ^{-\frac{2}{s+2} - \eps} \right\}$$
and split the above sum into two parts, distinguishing between cells fully 
included in $S_\eps$ and the others. Using~\eqref{eq:TF scaling}, 
it is clear that 
$$ \left| \supp\left(\rhoTF_\beta \right) \setminus S_\eps \right| \leq C \beta^{\frac{1}{s+2}} \cdot \beta ^{\frac{1}{s+2}-\eps}$$
where the first factor comes from the dilation transforming $\rhoTF_1$ into 
$\rhoTF_\beta$ and the second one is an estimate of the thickness of $S_\eps$ 
based on~\eqref{eq:TF support}--\eqref{eq:non degen}.  

By a simple estimate of the potential $V$ in the vicinity of $S_\eps$ we 
obtain
$$ 
	\sum_{j : Q_j\nsubseteq S_\eps} \int_{Q_j} V |u_j| ^2 
	\leq C \beta ^{\frac{s}{s+2}} \cdot \beta^{\frac{2}{s+2} - \eps} \cdot \beta ^{-\frac{2}{s+2} - \eps} 
	= C \beta ^{\frac{s}{s+2} - 2\eps} \ll \ETF_\beta,
$$
where the factor $\beta ^{\frac{s}{s+2}}$ accounts for the supremum of $V$, the factor $\beta^{\frac{2}{s+2} - \eps}$ for the volume of the 
integration domain and the factor $\beta ^{-\frac{2}{s+2} - \eps}$ for the typical value of $|u_j|^2$ on this 
domain. Also, using in addition Lemma ~\ref{lem:scale} and~\ref{lem:trial}, 
we deduce
$$ 
	\sum_{j : Q_j\nsubseteq S_\eps} E_0 (Q_j,\beta,M_j) 
	= \sum_{j : Q_j\nsubseteq S_\eps} E_0 (\beta^\eta Q,\beta,\beta^{2\eta} \rho_j) 
	\ll \ETF_\beta.
$$

For the main part of the sum in~\eqref{eq:split up bound} we use the scaling law 
(take $\lambda = \sqrt{\rho_j}$ and $\mu = \sqrt{\beta \rho_j}$ in 
Lemma~\ref{lem:scale}) to write  
$$ E_0 (Q_j,\beta,M_j) = \rho_j E_0 (L \sqrt{\beta \rho_j} Q, 1, L^2 \beta \rho_j) $$
with $Q$ the unit square. Then 
$$
	\sum_{j : Q_j \subseteq S_\eps} E_0 (Q_j,\beta,M_j) 
	= \sum_{j : Q_j \subseteq S_\eps} L^2 \beta \rho_j ^2 e(1,1) 
		+ \sum_{j : Q_j \subseteq S_\eps} L^2 \beta \rho_j ^2 \left(\frac{E_0(L_j Q,1,L_j^2)}{L_j^2} - e(1,1)\right)
$$
with, provided $\eps$ is suitably small 
and in view of the lower bound in~\eqref{eq:up bound squares} 
and the fact that we sum over squares included in $S_\eps$,
$$
	L_j := L \sqrt{\beta \rho_j} \ge \beta^{\eta + \frac{s}{2(s+2)} - \eps/2}
	\longrightarrow +\infty, 
	\quad \mbox{ uniformly with respect to } j = 1, \ldots N_\beta.
$$
We thus obtain (recall the definition of 
the thermodynamic energy in~\eqref{eq:thermo limit})
$$ \frac{E_0(L_j Q,1,L_j^2)}{L_j^2} \xrightarrow[L_j \to \infty]{} e(1,1)$$
uniformly in $j$, and deduce that
$$ 
	\sum_{j : Q_j \subset S_\eps} E_0 (Q_j,\beta,M_j) 
	= \left( 1+o(1) \right) \beta e(1,1) \sum_{j : Q_j \subset S_\eps} \rho_j^2\,L^2.
$$
Recalling that 
$$ \rho_j = \dashint_{Q_j} \rhoTF_\beta(\xv) \,\diff \xv,$$
we recognize a Riemann sum in the above. Using~\eqref{eq:TF estim} and the 
upper bound in~\eqref{eq:up bound squares} we may approximate 
$\rhoTF_\beta$ by a constant in each square
(this is most easily seen by rescaling to $\rhoTF_1$ and observing that
the size of squares then tends to zero), 
and bound the part of the integral 
located in the complement of $S_\eps$ similarly as above to conclude that 
$$ \sum_{j : Q_j \subset S_\eps} E_0 (Q_j,\beta,M_j) =  (1+o(1)) \beta e(1,1) \int_{\R^2} \lf( \rhoTF_\beta \ri)^2.$$
Using~\eqref{eq:asum V} and~\eqref{eq:TF support} we obtain
$$ |\nabla V(\xv)| \leq C \beta ^{\frac{s-1}{s+2}}$$
for any  $ \xv \in S_\eps$. Combining with~\eqref{eq:up bound squares} we deduce 
as above that  
$$ 
	\sum_{j : Q_j \subset S_\eps} \int_{Q_j} V|u_j| ^2 
	= (1+o(1)) \int_{\R ^2} V \rhoTF_\beta
$$
and this completes the proof of~\eqref{eq:up bound}.

\subsection{Local density approximation: energy lower bound}\label{sec:low bound}

Let us now complement~\eqref{eq:up bound} by proving the lower bound
\begin{equation}\label{eq:low bound}
\EAF_\beta \geq \ETF_\beta \left( 1+o(1)\right),
\end{equation}
thus completing the proof of~\eqref{eq:ener CV}. We again tile the plane with 
squares $Q_j, j=1, \ldots, N_\beta$, of side length 
$$L = \beta^{\eta} $$ 
satisfying \eqref{eq:up bound squares},
and taken to cover the finite disk $B_{\beta^t}(0)$ with
$$ t := \frac{1}{2+s} + \eps$$
for some $\eps>0$ to be chosen small enough. We also denote 
\begin{equation}\label{eq:good cells}
	\cQ_\beta :=\left\{Q_j \subset B_{\beta^t}(0) \big| \: L\sqrt{\rho_j \beta} \geq \beta^\mu \right\} 
\end{equation}
where $\uAF = \uAF_\beta$ is a minimizer for $\cEAF_\beta$ with unit mass and 
$$ \rho_j := \dashint_{Q_j} |\uAF(\xv)| ^2 \,\diff \xv.$$
Define the piecewise constant function
\begin{equation}\label{eq:def rho}
	\rhoBAF (\xv) := \sum_{Q_j \in \cQ_\beta} \rho_j \1_{Q_j}(\xv).
\end{equation}
We claim that one may find some $\mu >0$ in \eqref{eq:good cells}, such that 
\begin{equation}\label{eq:low mass}
	M:= \int_{\R ^2} \rhoBAF \xrightarrow[\beta \to \infty]{} 1. 
\end{equation}
Indeed, using~\eqref{eq:asum V} and~\eqref{eq:growth V} we get that for any 
$ \xv \in B^c_{\beta^t}(0) $
$$ V(\xv) \geq C \beta^{st} \min_{B^c_{1}(0)} V \geq C \beta^{st}$$
for $\beta$ large enough. Thus, using the energy upper bound~\eqref{eq:up bound} 
and dropping some positive terms we obtain
$$
	\beta^{st} \int_{B^c_{\beta^t}(0)} |\uAF| ^2 
	\leq \int_{\R ^2} V |\uAF| ^2 \leq \cEAF_\beta[\uAF] 
	\leq C \beta ^{\frac{s}{s+2}}
$$
and thus 
\begin{equation}\label{eq:low mass 1}
 \int_{B^c_{\beta^t}(0)} |\uAF| ^2 \leq C \beta ^{-s\eps}. 
\end{equation}
On the other hand, by definition of $\cQ_\beta$,
$$
	\sum_{Q_j\notin \cQ_\beta} \int_{Q_j} |\uAF| ^2 
	\leq N_{\beta} \beta^{2\mu-1}
$$
where $N_\beta$ is the total number of squares needed to tile $B_{\beta^t}(0)$. 
Clearly, we may estimate $N_\beta \leq C \beta ^{2t} L ^{-2} = C \beta^{2(t - \eta)}$
and then 
\begin{equation}\label{eq:low mass 2}
  \sum_{Q_j\notin \cQ_\beta} \int_{Q_j} |\uAF| ^2
  \leq C \beta^{2t - 2\eta + 2\mu - 1}  \ll 1
\end{equation}
because of~\eqref{eq:up bound squares}, which implies $-s/(s+2) - 2\eta < 0$, 
and provided we take $\eps$ and $\mu$ positive and small enough, e.g. (recall that $L= \beta ^\eta$ is the side-length of the tiling squares),
\begin{equation}\label{eq:choose mu}
	0 < \eps \leq \frac{1}{4}\lf( \frac{s}{s+2} + 2 \eta \ri),		
	\qquad	0 < \mu \leq \eps.
\end{equation}
Combining~\eqref{eq:low mass 1} and~\eqref{eq:low mass 2} and recalling that 
$\uAF$ is $L^2$-normalized proves~\eqref{eq:low mass}. 

With this in hand we turn to the energy lower bound per se. Let us again set 
$$\uAF_j = \uAF \1_{Q_j}, \qquad M_j = \rho_j L^2 = \int_{Q_j} |\uAF| ^2.$$
Dropping some positive terms we get
\begin{align}\label{eq:low first}
\EAF_\beta &= \cEAF_\beta [\uAF] \geq \sum_{Q_j\in \cQ_\beta} \int_{Q_j} \lf\{ \left| \left( -i\nabla + \beta \bA \left[|\uAF| ^2\right]\right) \uAF \right|^2 + V |\uAF| ^2 \ri\}
\nn
\\&= \sum_{Q_j\in \cQ_\beta} \int_{Q_j} \lf\{ \left| \left( -i\nabla + \beta \bA \left[|\uAF_j e^{i\beta\phi_j}| ^2\right]\right) \uAF_j e^{i\beta\phi_j} \right|^2 + V |\uAF_j| ^2 \ri\} \nn
\\&\geq \sum_{Q_j\in \cQ_\beta} \bigg\{ E(Q_j,\beta,M_j) + \int_{Q_j} V |\uAF_j| ^2 \bigg\} \nn
\\&\geq \sum_{Q_j\in \cQ_\beta} \bigg\{ \rho_j E\left(L\sqrt{\beta \rho_j} Q,1, (L\sqrt{\beta \rho_j}) ^2\right) +  \int_{Q_j} V |\uAF_j| ^2 \bigg\},
\end{align}
where the local gauge phase factors are defined as in previous arguments by 
demanding that (this is again possible because the left-hand side is 
$\curl$-free in the simply connected domain $Q_j$)
$$ \sum_{k=1,k\neq j} ^{N_\beta} \bA \left[|\uAF_k| ^2 \right] = \nabla \phi_j,  \qquad	\mbox{ in }  Q_j.$$
The minimum (Neumann) energy $E(Q_j,\beta,M_j)$ in the square $Q_j$ is defined 
as in~\eqref{eq:ener Neu} and we used the scaling laws following from 
Lemma~\ref{lem:scale} as previously to obtain 
$$E(Q_j,\beta,M_j) = \rho_j E\left(L\sqrt{\beta \rho_j} \, Q,1, (L\sqrt{\beta \rho_j}) ^2\right)$$  
with $Q$ the unit square. 
Next, we note that~\eqref{eq:up bound squares} and
\eqref{eq:good cells} 
imply, using~\eqref{eq:choose mu},
$$ L_j = L\sqrt{\beta \rho_j} \ge \beta^\mu \longrightarrow \infty$$
uniformly in $j$ for all $j$ such that $Q_j\in\cQ_\beta$. 
Then, by Theorem~\ref{thm:thermo limit},
\begin{align*}
 \sum_{Q_j\in \cQ_\beta} & \rho_j E\left(L\sqrt{\beta \rho_j} Q,1, (L\sqrt{\beta \rho_j}) ^2\right) 
 = \sum_{Q_j\in \cQ_\beta} \beta L^2 \rho_j^2 \, E\left(L_j Q, 1, L_j^2\right)/L_j^2 \\
 & = (1+o(1)) \beta e(1,1) \sum_{Q_j\in \cQ_\beta} L^2 \rho_j^2 \nn 
 = (1+o(1)) \beta e(1,1) \int_{\R ^2} \big( \rhoBAF \big)^2. 
\end{align*}
On the other hand, it follows from~\eqref{eq:asum V} that, 
on all the squares of $\cQ_\beta$, 
$$|\nabla V| \leq C\beta^{\frac{s-1}{s+2} + \eps (s-1)}$$
and thus if
\beq
	\label{eq: tildeV}
	\tilde{V}(\xv) : = \sum_{Q_j \in \mathcal{Q}_{\beta}} V(\xv_j) \1_{Q_j}(\xv).
\eeq
we have that
$$
	\lf| V(\xv) - \tilde {V}(\xv)\ri| \leq C L \beta^{\frac{s-1}{s+2} + \eps (s-1)} = o(\ETF_\beta),
	\quad \mbox{ for any } \xv \in \mathcal{Q}_{\beta}.
$$
Recalling~\eqref{eq:def rho} and~\eqref{eq:low mass} we then have
\begin{align}
	\sum_{Q_j\in \cQ_\beta}  \int_{Q_j} V \,|\uAF_j| ^2 
	&= \int_{\R ^2} \tilde{V} \, \rhoBAF + O \left( L \beta^{\frac{s-1}{s+2} + \eps (s-1)}\right) 	\nn \\
	\label{eq:low pot}
	& = \int_{\R ^2} \tilde{V} \, \rhoBAF + o \left(\ETF_\beta\right).
\end{align}
The last assertion follows from~\eqref{eq:TF scaling} 
and~\eqref{eq:up bound squares}, provided we take $\eps$ small enough, 
e.g., for $ s > 1 $ (recall that the tiling squares have side-length $L= \beta ^\eta$),
\begin{equation}
	\eps \leq \frac{1}{2(s-1)} \lf(\frac{s-1}{s+2} + \eta \ri).
\end{equation}
In the very same way however we can put back $ V $ in place of $ \tilde{V} $, 
obtaining
\beq
	\label{eq:low int}
	\sum_{Q_j\in \cQ_\beta}  \int_{Q_j} V \,|\uAF_j| ^2 
	= \int_{\R ^2} \tilde{V} \, \rhoBAF + o \left(\ETF_\beta\right) 
	= \int_{\R^2} V \, \rhoBAF + o \left(\ETF_\beta\right),
\eeq

Combining~\eqref{eq:low first}, \eqref{eq:low pot} and \eqref{eq:low int} yields 
\begin{align}\label{eq:pre low}
	 \EAF_\beta & \geq \int_{\R ^2} V \: \rhoBAF 
	 	+ (1+o(1))  \beta e(1,1)\int_{\R ^2} \big( \rhoBAF \big) ^2 
	 	+ o (\ETF_\beta) \nn \\ 
	& \geq (1+o(1))\cETF_\beta[\rhoBAF] + o(\ETF_\beta) \nn \\
	& \geq (1+o(1))\ETF_\beta(M) + o(\ETF_\beta), 
\end{align}
where the latter energy denotes the ground state energy of the TF functional 
\eqref{eq:TF func}
minimized under the constraint that the $L^1$-norm be equal to $M$.
Inserting~\eqref{eq:low mass} and using explicit expressions as 
in~\eqref{eq:TF scaling} and~\eqref{eq:TF exp},
one obtains
$$ \ETF_{\beta} \left( M\right) = (1+o(1)) \ETF_\beta$$
in the limit $\beta \to \infty$, 
thus completing the proof of~\eqref{eq:low bound}.

\subsection{Density convergence}\label{sec:dens}

The lower bound in \eqref{eq:low bound} together with the energy upper bound~\eqref{eq:up bound}
implies that $\rhoBAF$, the piecewise constant approximation of $\rhoAF$ on scale $L=\beta^\eta$, 
is close in strong $ L^2 $ sense to $\rhoTF_\beta$. We will deduce~\eqref{eq:dens CV} from the following

\begin{lemma}[\textbf{Convergence of the piecewise approximation}]\label{lem:TF bound}\mbox{}\\
Let $\rhoBAF$ be defined as in~\eqref{eq:def rho} and $\rhoTF_\beta$ be the minimizer of~\eqref{eq:TF func}. Then  
\begin{equation}\label{eq:L2 estim}
\norm{\rhoBAF - \rhoTF_\beta}_{L^2(\R^2)} = o(\beta ^{-1/(s+2)}) 
\end{equation}
in the limit $\beta \to \infty$. 
\end{lemma}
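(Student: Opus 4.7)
The strategy is to upgrade the energy bounds~\eqref{eq:up bound} and~\eqref{eq:pre low} into an $L^2$-density bound by exploiting the strict convexity of the quadratic TF functional $\cETF_\beta$.

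First, I would combine the upper bound $\EAF_\beta \leq \ETF_\beta(1+o(1))$ with the chain preceding~\eqref{eq:pre low}, namely
\[
\EAF_\beta \geq (1+o(1))\,\cETF_\beta[\rhoBAF] + o(\ETF_\beta),
\]
to deduce that
\[
\cETF_\beta[\rhoBAF] \leq \ETF_\beta + o(\ETF_\beta) = \ETF_\beta + o\bigl(\beta^{s/(s+2)}\bigr).
\]
Note that $\rhoBAF \geq 0$ but has mass $M = 1+o(1)$ by~\eqref{eq:low mass}, not exactly one, so it is only an approximate competitor in~\eqref{eq:TF ener}. This slight mass defect has to be tracked carefully throughout.

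Next, I would establish a coercivity identity centered at $\rhoTF_\beta$. Using the explicit Euler--Lagrange relation $V + 2\beta\,e(1,1)\rhoTF_\beta = \lTF_\beta$ on $\{\rhoTF_\beta > 0\}$ from~\eqref{eq:TF exp}, and the fact that $V \geq \lTF_\beta$ on the complement of the support of $\rhoTF_\beta$, a direct algebraic expansion yields, for every nonnegative $\varrho \in L^2(\R^2)$,
\begin{equation}\label{eq:plan:coerc}
\cETF_\beta[\varrho] - \ETF_\beta = \beta\, e(1,1)\,\bigl\| \varrho - \rhoTF_\beta \bigr\|_{L^2(\R^2)}^2 + \lTF_\beta\left( \int_{\R^2} \varrho - 1 \right) + \int_{\{\rhoTF_\beta = 0\}} (V - \lTF_\beta)\,\varrho,
\end{equation}
in which the last integral is nonnegative.

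Finally, I would insert $\varrho = \rhoBAF$ into~\eqref{eq:plan:coerc}, drop the nonnegative off-support term, and rearrange. By the scaling relations~\eqref{eq:TF scaling}--\eqref{eq:TF chem} one has $\lTF_\beta = O(\beta^{s/(s+2)})$, so combined with $|M-1|=o(1)$ the mass-defect correction $\lTF_\beta (M-1)$ is of order $o(\beta^{s/(s+2)})$. Thus
\[
\beta\, e(1,1)\,\bigl\| \rhoBAF - \rhoTF_\beta \bigr\|_{L^2(\R^2)}^2 \leq o\bigl(\beta^{s/(s+2)}\bigr),
\]
which yields the desired bound $\|\rhoBAF - \rhoTF_\beta\|_{L^2(\R^2)} = o(\beta^{-1/(s+2)})$. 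The only genuinely delicate point is the interplay between the chemical potential $\lTF_\beta$, which diverges polynomially in $\beta$, and the mass defect $|M-1|$: one needs the quantitative control from~\eqref{eq:low mass} to ensure that $\lTF_\beta(M-1)$ remains subleading with respect to $\ETF_\beta$, so that it does not spoil the $L^2$ estimate.
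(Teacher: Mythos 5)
Your proposal is correct and follows essentially the same route as the paper: both exploit the quadratic structure of $\cETF_\beta$ and the explicit form $\rhoTF_\beta = \frac{1}{2\beta e(1,1)}(\lTF_\beta - V)_+$ to convert the energy excess $\cETF_\beta[\rhoBAF]-\ETF_\beta = o(\beta^{s/(s+2)})$ into the $L^2$ distance, with the off-support term and the mass defect $\lTF_\beta(M-1)=o(\beta^{s/(s+2)})$ controlled exactly as you describe. Your coercivity identity is just a cleaner packaging of the paper's expansion of $\int(\rhoBAF-\rhoTF_\beta)^2$, and if anything you track the mass-defect correction more explicitly than the printed proof does.
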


\begin{proof}
Combining~\eqref{eq:up bound} and~\eqref{eq:pre low} we have 
\beq\label{eq:TF sandwich}
	\cETF_{\beta}[\rhoBAF] 
	\leq \EAF_\beta + o(1) \beta^{\frac{s}{s+2}} 
	\leq \ETF_{\beta} + o(1) \beta^{\frac{s}{s+2}}.
\eeq
The variational equation for $\rhoTF_\beta$ takes the form
$$ 
2 \beta e(1,1) \rhoTF_\beta + V = \lTF_\beta = \ETF_\beta + \beta e(1,1) \int_{\R ^2} (\rhoTF_\beta)^2 
$$
on the support of $\rhoTF_\beta$ (recall~\eqref{eq:TF exp} and~\eqref{eq:TF chem}). Thus, 
\begin{align*}
	\int_{\R^2} \lf( \rhoBAF - \rhoTF_{\beta} \ri)^2 
	&= \int_{\R^2} \left( (\rhoBAF)^2 + (\rhoTF_{\beta})^2 \right) - 2 \int_{\R ^2}\rhoBAF \rhoTF_{\beta} \\
	&= \int_{\R^2} (\rhoBAF)^2 + \int_{\R^2} (\rhoTF_{\beta})^2 - \frac{1}{\beta e(1,1)} \int_{\R^2} \rhoBAF \lf( \lTF_{\beta} - V \ri)_+  \\
	&\leq \frac{1}{\beta e(1,1)} \lf[ \cETF_\beta\left[\rhoBAF\right] - \lTF_{\beta} 
		+ \beta e(1,1) \int_{\R^2} (\rhoTF_{\beta})^2 \ri]\\
	&= \frac{1}{\beta e(1,1)} \lf[ \cETF_\beta\left[\rhoBAF\right] - \ETF_\beta \ri] = o(\beta^{-\frac{2}{s+2}}),
\end{align*}
where we used~\eqref{eq:TF sandwich} in the last step. 
\end{proof}

By the definition~\eqref{eq:def rho} of $\rhoBAF$ we also have, for any Lipschitz function $ \phi $ with compact support, 
\begin{align*}
	\int_{\R^2} \phi\left(\beta^{-1/(s+2)}\xv \right) \rhoBAF (\xv) \,\diff\xv
	& = \sum_{j = 1}^{N_{\beta}} \int_{Q_j} \phi\left(\beta^{-1/(s+2)} \xv\right) \: \rhoBAF \lf( \xv \ri) \diff\xv \\
	& = 
		\sum_{j = 1}^{N_{\beta}} \phi\left(\beta^{-1/(s+2)} \xv\right) \int_{Q_j} \rhoAF\lf( \xv \ri) \diff\xv 
		+ O\left(\beta ^{\eta -\frac{1}{s+2}} \norm{\phi}_{\mathrm{Lip}} \right) \\
	& = 
		\int_{\R^2} \phi\left(\beta^{-1/(s+2)} \xv\right) \: \rhoAF\lf( \xv \ri) \diff\xv 
		+ O\left(\beta ^{\eta -\frac{1}{s+2}} \norm{\phi}_{\mathrm{Lip}} \right),
\end{align*}
using the normalization of $\rhoAF$.
Furthermore, by Cauchy-Schwarz and Lemma~\ref{lem:TF bound} we obtain
$$
	\int_{\R^2} \phi\left(\beta^{-1/(s+2)}\xv \right)
		\left( \rhoBAF(\xv) - \rhoTF_{\beta}(\xv) \right) \diff\xv
	= o(1) \norm{\phi}_{L^2(\R^2)}.
$$
Since the above estimates are uniform with respect to the Lipschitz norm of 
$\phi$, we can take $\eta < 1/(s+2)$, change scales in the above and 
recall~\eqref{eq:TF scaling} to deduce
$$
	\sup_{\substack{\phi \in C_0(B_R(0)) \\ \norm{\phi}_{\mathrm{Lip}} \le 1}}
	\left| \int_{\R^2} \phi(\xv) \left( \beta^{2/(s+2)} \rhoAF( \beta^{1/(s+2)}\xv ) - \rhoTF_1(\xv) \right) \diff\xv \right|
	= o(1), \quad \beta \to \infty,
$$
for fixed $R>0$,
hence~\eqref{eq:dens CV}.

\appendix

\section{Properties of minimizers}\label{sec:var eq}

In this appendix we recall a few fundamental properties of the average-field 
functional \eqref{eq:avg func} in a trap $V$, 
respectively \eqref{eq:func dom} on a domain $\Omega$,
as well as their minimizers.

As discussed in \cite[Appendix]{LunRou-15}, the natural, maximal domain of 
$\cEAF$ is
$$
	\cDAF := \left\{ u \in H^1(\Omega) : \int_{\R^2} V|u|^2 < \infty \right\},
$$
and one may also use that the space $C^\infty_c(\R^2)$ is dense in this 
form domain w.r.t. 
$\cEAF$.
Furthermore, \cite[Appendix: Proposition~3.7]{LunRou-15} ensures the existence
of a minimizer $\uAF \in \cDAF$ of $\cEAF_\beta$
for any value of $\beta \in \R$ for confining potentials $V$,
and by a similar proof and the compactness of the embedding 
$H^1_0(\Omega) \subset H^1(\Omega) \hookrightarrow L^p(\Omega)$, $1 \le p < \infty$,
the same holds for $\cEAF_\Omega$ for any bounded $\Omega$ with Lipschitz boundary:

\begin{proposition}[\textbf{Existence of minimizers}]\label{prop:existence}\mbox{}\\
	Let $\beta \in \R$ be arbitrary.
	Given any $V\colon \R^2 \to \R^+$ such that $-\Delta+V$ has compact resolvent, 
	there exists $\uAF \in \cDAF$ with 
	$\int_{\R^2}|\uAF|^2 = 1$ and $\cEAF_\beta[\uAF] = \EAF$.
	Moreover, if $M \ge 0$ and $\Omega \subset \R^2$ is
	bounded with Lipschitz boundary
	then there exists $\uAF \in H^1_{(0)}(\Omega)$ with 
	$\int_{\Omega}|\uAF|^2 = M$ and 
	$\cEAF_{\Omega,\beta}[\uAF] = E_{(0)}(\Omega,\beta,M)$.
\end{proposition}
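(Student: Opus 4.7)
The plan is to apply the direct method of the calculus of variations simultaneously in both cases. Pick a minimizing sequence $(u_n)$ (in $\cDAF$ for the trapped case, in $H^1_{(0)}(\Omega)$ for the domain case) with the prescribed mass and $\cEAF[u_n]$ converging to the corresponding infimum. The infimum is finite in each case: Lemma~\ref{lem:trial} provides an admissible trial state on $\Omega$, while on $\R^2$ any smooth compactly supported function of unit mass produces finite energy since $V$ is locally bounded and $\bA[|u|^2]$ is smooth away from $\supp u$ and $L^p_{\loc}$ otherwise.

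Next I would derive uniform bounds. The diamagnetic inequality \eqref{eq:diamag} gives $\int |\nabla|u_n||^2 \le \cEAF[u_n] \le C$, so $(|u_n|)$ is bounded in $H^1$. In the trapped case the additional bound $\int V|u_n|^2 \le C$ places $u_n$ in a bounded set of the form domain of $-\Delta + V$; since this operator has compact resolvent, up to extraction $u_n \to u$ strongly in $L^2(\R^2)$. Interpolating with the Sobolev embedding $H^1(\R^2) \hookrightarrow L^p(\R^2)$ for $p \in [2,\infty)$ upgrades this to strong $L^p$ convergence for all such $p$. In the bounded domain case, Rellich--Kondrachov directly provides the same compactness in $L^p(\Omega)$ for any $p<\infty$. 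In both cases the mass constraint passes to the limit by strong $L^2$ convergence, so $u$ is admissible.

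The central task is to pass to the limit in the self-consistent magnetic term. Strong convergence $|u_n|^2 \to |u|^2$ in $L^r$ for every $r \in [1,\infty)$, combined with $\nabla^\perp w_0 \in L^{2,\infty}_{\loc}$ and the tightness inherited from either $\Omega$ being bounded or $V$ being confining, yields (via the generalized Young / Hardy--Littlewood--Sobolev inequalities used already in the proof of Lemma~\ref{lem:Neu Dir}) strong convergence $\bA[|u_n|^2] \to \bA[|u|^2]$ in $L^q_{\loc}$ for some $q>2$. Expanding
$$\cEAF_\beta[u_n] = \int |\nabla u_n|^2 + 2\beta \int \bA[|u_n|^2]\cdot \bJ[u_n] + \beta^2 \int \bigl|\bA[|u_n|^2]\bigr|^2 |u_n|^2 + \int V|u_n|^2,$$
the first term is weakly lower semi-continuous in $H^1$, the potential term is lower semi-continuous by Fatou, the cross term involving $\bJ[u_n]$ passes by a strong--weak pairing ($\bA[|u_n|^2]$ strong in $L^q$ against $\bJ[u_n]$ weak in $L^{q'}$), and the quintic term converges outright since all three factors converge strongly in appropriate $L^p$ spaces. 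Putting these estimates together yields $\cEAF[u] \le \liminf_n \cEAF[u_n]$, so $u$ is a minimizer.

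The main obstacle is the nonlinear, nonlocal self-interaction: neither the quintic term $\int |\bA[|u|^2]|^2|u|^2$ nor the cubic term $\int \bA[|u|^2]\cdot \bJ[u]$ is weakly continuous under $H^1$ weak convergence alone, so the argument depends in an essential way on the strong compactness in $L^p$ provided by the compact resolvent hypothesis (trapped case) or Rellich--Kondrachov (domain case). Everything else is a routine combination of the diamagnetic inequality, Sobolev embeddings and Hardy--Littlewood--Sobolev type bounds on the convolution with $\nabla^\perp w_0$.
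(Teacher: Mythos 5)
Your overall strategy --- direct method, compactness from Rellich--Kondrachov (or the compact resolvent of $-\Delta+V$), and passage to the limit in the nonlocal magnetic terms via strong $L^p$ convergence of $|u_n|^2$ and weak Young bounds on convolution with $\nabla^\perp w_0$ --- is the same as the paper's. But there is a genuine gap at the start of your compactness step: the diamagnetic inequality only gives $\int|\nabla|u_n||^2 \le C$, i.e.\ a bound on $|u_n|$ in $H^1$, \emph{not} a bound on $u_n$ itself in $H^1$. Everything downstream in your argument needs the latter: weak $L^2$ convergence of $\nabla u_n$ (for lower semicontinuity of $\int|\nabla u_n|^2$), boundedness of $\bJ[u_n]=\frac{i}{2}(u_n\nabla\bar u_n-\bar u_n\nabla u_n)$ in some $L^{q'}$ (for your strong--weak pairing in the cross term), and, in the trapped case, membership of $u_n$ --- not $|u_n|$ --- in a bounded set of the form domain of $-\Delta+V$ so that compactness of the resolvent gives strong $L^2$ convergence of $u_n$. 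A bound on $\||u_n|\|_{H^1}$ alone does not rule out wild phase oscillations (consider $u_n=e^{inx_1}f$, which has $|u_n|=f$ fixed but $u_n\rightharpoonup 0$), so the step ``$(|u_n|)$ is bounded in $H^1$, hence $u_n$ converges weakly in $H^1$ along a subsequence'' does not close as written.

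The repair is available from the energy bound itself and is exactly the paper's first estimate: by H\"older, the weak Young inequality ($\nabla^\perp w_0\in L^{2,w}$) and Sobolev, one has $\|\bA[|u|^2]u\|_{L^2(\Omega)} \le C\,\||u|\|_{H^1(\Omega)}^3 \le C\,(M+\cEAF_{\Omega}[u])^{3/2}$, and then the triangle inequality $\|\nabla u_n\|_{2} \le \|(\nabla+i\beta\bA[|u_n|^2])u_n\|_{2}+|\beta|\,\|\bA[|u_n|^2]u_n\|_{2}$ converts the diamagnetic/energy bound into a genuine $H^1$ bound on $u_n$. Once this is inserted, your argument goes through. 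Note also that the paper sidesteps your term-by-term expansion: it proves $\bA[|u_n|^2]u_n\to\bA[|u|^2]u$ strongly in $L^2$ and then invokes weak lower semicontinuity of the full covariant norm $\|(\nabla+i\beta\bA[|u_n|^2])u_n\|_2$ by duality, which avoids having to balance the sign-indefinite cross term $2\beta\int\bA[|u_n|^2]\cdot\bJ[u_n]$ against the kinetic term; your splitting is workable but only after the $H^1$ bound is secured.
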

\begin{proof}
	The first part is \cite[Appendix: Proposition~3.7]{LunRou-15}.
	For $\Omega \subset \R^2$ we have
	by the H\"older, weak Young, and Sobolev inequalities, 
	as well as Lemma~\ref{lem:mag ineq}, that
	\begin{align*}
		\|\bA[|u|^2]u\|_{L^2(\Omega)} 
		&\le \|\bA[|u|^2]\|_{L^4(\Omega)} \|u\|_{L^4(\Omega)}
		\le C\||u|^2\|_{L^{4/3}(\Omega)} \|\nabla w_0\|_{L^{2,w}(\R^2)} \|u\|_{L^4(\Omega)} \\
		&\le C'\||u|\|_{H^1(\Omega)}^3
		\le C'(M+\cEAF_\Omega[u])^{3/2},
	\end{align*}
	and therefore
	$$
		\|\nabla u\|_{L^2(\Omega)} 
		= \bigl\|\nabla u + i\beta\bA[|u|^2]u - i\beta\bA[|u|^2]u\bigr\|_{L^2(\Omega)}
		\le \cEAF[u]^{1/2} + C'|\beta|(M+\cEAF_\Omega[u])^{3/2}.
	$$
	Hence, given a minimizing sequence
	$$
		(u_n)_{n \to \infty} \subset H^1_{(0)}(\Omega),
		\quad \|u_n\|_{L^2(\Omega)}^2 = M,
		\quad \lim_{n \to \infty} \cEAF_\Omega[u_n] = E_{(0)}(\Omega,\beta,M),
	$$
	by uniform boundedness and the Rellich-Kondrachov theorem 
	(see, e.g., \cite[Theorem~8.9]{LieLos-01}) there exists
	a convergent subsequence (again denoted $u_n$) 
	and a limit $\uAF \in H^1_{(0)}(\Omega)$
	such that
	$$
		u_n \to \uAF \ \text{in} \ L^p(\Omega), \ 1 \le p < \infty,
		\qquad
		\nabla u_n \rightharpoonup \nabla \uAF \ \text{in} \ L^2(\Omega).
	$$
	Furthermore, by estimating
	$$
		\bigl\| \bA[|u_n|^2]u_n - \bA[|\uAF|^2]\uAF \bigr\|_2
		\le \bigl\| \bA[|u_n|^2 - |\uAF|^2]u_n \bigr\|_2 
			+ \bigl\| \bA[|\uAF|^2] (u_n-u) \bigr\|_2
	$$
	as above and using the strong convergence in 
	$L^p(\Omega)$ for any $1 \le p < \infty$, we have that
	$$
		\bA[|u_n|^2]u_n \to \bA[|\uAF|^2]\uAF \ \text{in} \ L^2(\Omega).
	$$
	Hence,
	\begin{align*}
		\norm{ (\nabla + i\beta \bA[|\uAF|^2])\uAF }_2 
		&= \sup_{\|v\|=1} |\langle \nabla \uAF + i\beta \bA[|\uAF|^2]\uAF, v \rangle| \\
		&= \sup_{\|v\|=1} \lim_{n \to \infty} |\langle \nabla u_n + i\beta \bA[|u_n|^2]u_n, v \rangle| \\
		&\le \liminf_{n \to \infty} \sup_{\|v\|=1} |\langle \nabla u_n + i\beta \bA[|u_n|^2]u_n, v \rangle| \\
		&= \liminf_{n \to \infty} \norm{ (\nabla + i\beta \bA[|u_n|^2])u_n }_2,
	\end{align*}
	that is 
	$E_{(0)}(\Omega,\beta,M) \le \cEAF_\Omega[\uAF] 
		\le \liminf_{n \to \infty} \cEAF_\Omega[u_n] = E_{(0)}(\Omega,\beta,M)$,
	and furthermore 
	$\int_\Omega |\uAF|^2 = \lim_{n \to \infty} \int_\Omega |u_n|^2 = M$.
\end{proof}

For completeness, we finish with a derivation of
the variational equation associated to the minimization of the energy functional \eqref{eq:avg func}.
Let us denote
$$
	\bJ[u] := \frac{i}{2}\left( u\nabla\bar{u} - \bar{u}\nabla u \right)
$$
and for two vector functions $\mathbf{F},\mathbf{G}: \R^2 \to \R^2$,
their convolution
$$
	\mathbf{F} \dotconv \mathbf{G}(\xv) := \int_{\R^2} \mathbf{F}(\xv-\yv) \cdot \mathbf{G}(\yv) \,\diff \yv.
$$

\begin{lemma}[\textbf{Variational equation}]\label{lem:var eq}\mbox{}\\
Let $u=\uAF$ be a solution to~\eqref{eq:GSE}. Then
\begin{equation} \label{eq:AF-equation}
	\left[ 
		\left( -i\nabla + \beta\bA[|u|^2] \right)^2 + V
		-2\beta \nablap w_0 \dotconv \left( \beta\bA[|u|^2]|u|^2 + \bJ[u] \right)
		\right]u = \lambda u,
\end{equation}
where
\begin{align}
	\lambda &= \cEAF[u] + \int_{\R^2} 
		\left( 2\beta \bA[|u|^2] \cdot \bJ[u] + 2\beta^2 |\bA[|u|^2]|^2 |u|^2 \right) \nonumber\\
	&= \int_{\R^2} \left( 
		1 (|\nabla u|^2 + V|u|^2)
		+ 2\cdot 2\beta \bA[|u|^2] \cdot \bJ[u]
		+ 3 \beta^2|\bA[|u|^2]|^2|u|^2
		\right).
	\label{eq:AF-lambda}
\end{align}
(Note that the factors 1, 2, resp. 3 correspond to the total degree of $|u|^2$ in each term.) 
\end{lemma}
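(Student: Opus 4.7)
My plan is to derive the equation by computing the first variation of $\cEAF_\beta$ subject to the mass constraint, using a Lagrange multiplier $\lambda$. Rather than manipulating $\int |(-i\nabla + \beta\bA[|u|^2])u|^2$ directly, I would expand
\[
\cEAF_\beta[u] = \int |\nabla u|^2 + 2\beta \int \bA[|u|^2]\cdot \bJ[u] + \beta^2 \int |\bA[|u|^2]|^2 |u|^2 + \int V|u|^2
\]
and treat $u$ and $\bar u$ as independent variables (Wirtinger calculus), so that the Euler--Lagrange equation is read off as the coefficient of $\bar\phi$ in the variation $\cEAF_\beta[u+\epsilon\phi] - \cEAF_\beta[u] = \epsilon(\cdots) + \overline{\epsilon(\cdots)} + o(\epsilon)$.

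The workhorse identity will be the transpose relation for the nonlocal operator $\bA$: by Fubini and the \emph{oddness} of the kernel $\nabla^\perp w_0(\xv) = \xv^\perp/|\xv|^2$, one has
\[
\int_{\R^2} \mathbf{G}(\xv)\cdot \bA[f](\xv)\,\diff\xv \;=\; -\int_{\R^2} f(\yv)\,\bigl(\nabla^\perp w_0 \dotconv \mathbf{G}\bigr)(\yv)\,\diff\yv
\]
for scalar $f$ and vector $\mathbf{G}$. This is the mechanism that will convert variations of the \emph{inner} $|u|^2$ inside $\bA[|u|^2]$ into pointwise non-local multipliers acting on $u$.

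I would then vary term by term. The kinetic variation $\delta \int|\nabla u|^2$ contributes $-\Delta u$; the potential $\delta\int V|u|^2$ contributes $Vu$; the cross-term $2\beta\int \bA[|u|^2]\cdot \bJ[u]$ splits into (a) a piece where $\bA$ is varied, yielding via the transpose identity the non-local contribution $-2\beta u\,(\nabla^\perp w_0\dotconv \bJ[u])$, and (b) a piece where $\bJ[u]$ is varied; after integration by parts this gives $-i\beta(\nabla\cdot\bA[|u|^2])u - 2i\beta\,\bA[|u|^2]\cdot\nabla u$, which combines with $-\Delta u$ from the kinetic term and with $\beta^2|\bA[|u|^2]|^2 u$ (from the pointwise variation of $|u|^2$ in the quartic term) to reconstitute exactly $(-i\nabla + \beta\bA[|u|^2])^2 u$. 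Finally, the variation of $\bA$ inside $\beta^2\int|\bA|^2|u|^2$ gives, via the transpose identity applied to $\mathbf{G}=\bA[|u|^2]|u|^2$, the remaining non-local piece $-2\beta^2 u\,(\nabla^\perp w_0\dotconv(\bA[|u|^2]|u|^2))$. Collecting everything yields \eqref{eq:AF-equation}.

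For the identification of $\lambda$ in \eqref{eq:AF-lambda}, I would multiply the PDE by $\bar u$, integrate, and use $\int|u|^2=1$. The first terms reproduce $\cEAF_\beta[u]$ directly. For the non-local terms, applying the transpose identity a second time (now with $\mathbf{G} = \bA[|u|^2]|u|^2$ and then with $\bJ[u]$, paired against $f=|u|^2$) converts $\int|u|^2 \nabla^\perp w_0\dotconv (\cdots)$ back into $-\int(\cdots)\cdot \bA[|u|^2]$, producing the extra contributions $2\beta^2\int|\bA[|u|^2]|^2|u|^2 + 2\beta\int \bA[|u|^2]\cdot\bJ[u]$, which gives the first line of \eqref{eq:AF-lambda}; the homogeneity interpretation $(1,2,3)$ in the second line follows by simply regrouping. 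The only real obstacle is bookkeeping: one must track the sign from the oddness of $\nabla^\perp w_0$, distinguish carefully between the variation of $\bA$ (producing non-local terms via the transpose identity) and the pointwise variation of the $|u|^2$ factor outside $\bA$, and keep the real-part prescription of Wirtinger calculus coherent so that the cross-terms reassemble into the magnetic Laplacian rather than leaving stray $\nabla\cdot\bA$ contributions.
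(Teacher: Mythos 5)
Your proposal is correct and follows essentially the same route as the paper's proof: a first variation in $\bar u$ (Wirtinger calculus) of the expanded functional, with the non-local contributions handled by the antisymmetry of $\nabla^\perp w_0$ and reassembly of the magnetic Laplacian via $\nabla\cdot\bA=0$, then identification of $\lambda$ by pairing with $\bar u$. The "transpose identity" you isolate is precisely the variable-relabeling the paper carries out explicitly in its triple and double integrals, so the two arguments differ only in packaging.
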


\begin{proof}
Let
\begin{align*}
	\cF[u,\bar{u},\lambda] 
	&:= \cEAF[u,\bar{u}] + \lambda(1-\textstyle\int|u|^2) \\
	&= \int\left( 
		|\nabla u|^2 + (V-\lambda)|u|^2
		+ \beta^2|\bA[|u|^2]|^2|u|^2
		+ 2\beta \bA[|u|^2] \cdot \bJ[u]
		\right) + \lambda,
\end{align*}
$$
	\cE_1[u,\bar{u}] := \int |\bA[u\bar{u}]|^2 u\bar{u}
	= \iiint \nablap w_0(\xv-\yv) \cdot \nablap w_0(\xv-\zv) u\bar{u}(\xv) u\bar{u}(\yv) u\bar{u}(\zv) \,\diff\xv \diff\yv \diff\zv,
$$
$$
	\cE_2[u,\bar{u}] := \int \bA[u\bar{u}] \cdot i(u\nabla\bar{u} - \bar{u}\nabla u)
	= \iint \nablap w_0(\xv-\yv) u\bar{u}(\yv) \cdot i(u\nabla\bar{u} - \bar{u}\nabla u)(\xv) \,\diff\xv \diff\yv.
$$
We have
\begin{multline*}
	\cE_1[u,\bar{u}+\eps v] = \cE_1[u,\bar{u}] + \eps \iiint\Big(
		\nablap w_0(\xv-\yv) \cdot \nablap w_0(\xv-\zv) \big( 
		v(\xv)u(\xv) |u(\yv)|^2 |u(\zv)|^2 \\
		+ |u(\xv)|^2 u(\yv)v(\yv) |u(\zv)|^2 
		+ |u(\xv)|^2 |u(\yv)|^2 u(\zv)v(\zv) \big)
	\Big) \diff\xv \diff\yv \diff\zv + O(\eps^2),
\end{multline*}
hence at $O(\eps)$,
\begin{multline*}
	\int_\xv v(\xv)u(\xv)\bA[|u|^2]^2 \diff\xv 
	- \int_\yv v(\yv)u(\yv) \int_\xv \nablap w_0(\yv-\xv)|u(\xv)|^2 \cdot \int_\zv \nablap w_0(\xv-\zv)|u(\zv)|^2 \,\diff\zv \diff\xv \diff\yv \\
	- \int_\zv v(\zv)u(\zv) \int_\xv \nablap w_0(\zv-\xv)|u(\xv)|^2 \cdot \int_\yv \nablap w_0(\xv-\yv)|u(\yv)|^2 \,\diff\yv \diff\xv \diff\zv \\
	= \int vu\bA[|u|^2]^2 - 2\int vu \nablap w_0 \dotconv |u|^2\bA[|u|^2].
\end{multline*}
Also
\begin{multline*}
	\cE_2[u,\bar{u}+\eps v] = \cE_2[u,\bar{u}] + \eps \iint\Big(
		\nablap w_0(\xv-\yv) u(\yv)v(\yv) \cdot i(u\nabla\bar{u} - \bar{u}\nabla u)(\xv) \\
		+ \nablap w_0(\xv-\yv) |u(\yv)|^2 \cdot i\big( u(\xv)\nabla v(\xv) - v(\xv)\nabla u(\xv) \big)
	\Big) \diff\xv \diff\yv + O(\eps^2),
\end{multline*}
hence at $O(\eps)$ and using $\nabla \cdot \bA = 0$,
\begin{multline*}
	- \int_\yv v(\yv)u(\yv) \int_\xv \nablap w_0(\yv-\xv) \cdot 2\bJ[u](\xv) \,\diff\xv \diff\yv
	- i\int v(\xv)\nabla u(\xv) \cdot \bA[|u|^2](\xv) \,\diff\xv\\
	+ \underbrace{ i\int u(\xv)\bA[|u|^2](\xv) \cdot \nabla v(\xv) \,\diff\xv}_{
		= \{ PI \} = -i \int\nabla u \cdot \bA v - i\int u(\nabla\cdot\bA)v}
	= -2\int vu\nablap w_0 \dotconv \bJ[u] - 2i\int v\nabla u \cdot \bA[|u|^2]
\end{multline*}
Hence
\begin{multline*}
	\cF[u,\bar{u}+\eps v,\lambda] = \cF[u,\bar{u},\lambda] + \eps \int v\Big[
		(-\Delta + V-\lambda)u + \beta^2|\bA[|u|^2]|^2 u \\
		-2\beta^2 \nablap w_0 \dotconv |u|^2 \bA[|u|^2]u
		-2\beta \nablap w_0 \dotconv \bJ[u]u
		-2i\beta \bA[|u|^2] \cdot \nabla u
	\Big] + O(\eps^2),
\end{multline*}
and using
$$
	(-i\nabla + \beta\bA[|u|^2])^2 u 
	= -\Delta u - 2i\beta \bA[|u|^2] \cdot \nabla u + \beta^2 \bA[|u|^2]^2 u,
$$
we arrive at \eqref{eq:AF-equation}.

For \eqref{eq:AF-lambda} we use $\int |u|^2 =1$ by multiplying 
\eqref{eq:AF-equation} with $\bar{u}$ and integrating:
$$
	\lambda = \cEAF[u] - 2\beta\int |u|^2 \nablap w_0 \dotconv (\beta \bA[|u|^2]|u|^2 + \bJ[u]).
$$
We then use that
\begin{multline*}
	\int |u|^2 \nablap w_0 \dotconv \bA[|u|^2]|u|^2
	= \iiint |u(\xv)|^2 \nablap w_0(\xv-\yv) \cdot \nablap w_0(\yv-\zv) |u(\zv)|^2 |u(\yv)|^2 \,\diff\xv \diff\yv \diff\zv \\
	=-\iiint \nablap w_0(\yv-\xv) \cdot \nablap w_0(\yv-\zv) |u(\xv)|^2 |u(\zv)|^2 |u(\yv)|^2 \,\diff\xv \diff\zv \diff\yv
	= -\int \bA[|u|^2]^2 |u|^2
\end{multline*}
and 
\begin{multline*}
	2\int |u|^2 \nablap w_0 \dotconv \bJ[u]
	= \iint |u(\xv)|^2 \nablap w_0(\xv-\yv) \cdot i\left( u(\yv)\nabla \bar{u}(\yv) - \bar{u}(\yv)\nabla u(\yv) \right) \diff\xv \diff\yv\\
	= -\int_\yv i(u\nabla\bar{u} - \bar{u}\nabla u)(\yv) \cdot \int_\xv \nablap w_0(\yv-\xv) |u(\xv)|^2 \,\diff\xv \,\diff\yv
	= -2\int \bJ[u] \cdot \bA[|u|^2]
\end{multline*}
to arrive at \eqref{eq:AF-lambda}. 
\end{proof}

\providecommand{\bysame}{\leavevmode\hbox to3em{\hrulefill}\thinspace}

\providecommand{\href}[2]{#2}

%
%
%
%
%

\end{document}